\newcommand{\remove}[1]{}
\newcommand{\conf}[1]{}
\newtheorem{definition}{Definition}[section]
\newtheorem{theorem}{Theorem}[section]
\newtheorem{lemma}[theorem]{Lemma}
\newtheorem{corollary}[theorem]{Corollary}
\newcommand{\ceil}[1]{\lceil{#1}\rceil}
\newcommand{\poly}{\operatorname{poly}}
\newcommand{\floor}[1]{\left\lfloor{#1}\right\rfloor}
\newcommand{\neweps}{{\gamma}}
\providecommand{\card}[1]{|#1|}
\newcommand{\comment}[1]{}
\newcommand{\junk}[1]{}
\newcommand{\hcm}[1][1]{\hspace*{#1 cm}}
\newcommand{\rb}[2]{\raisebox{#1 mm}[0mm][0mm]{#2}}
\newcommand{\istrut}[2][0]{\rule[- #1 mm]{0mm}{#1 mm}\rule{0mm}{#2 mm}}
\newcommand{\zeromath}[1]{\makebox[0mm][l]{$#1$}}
\newcommand{\E}{{\mathbb E\/}}
\newcommand{\paren}[1]{{\left( #1 \right)}}
\newcommand{\fr}[2]{\mbox{$\frac{#1}{#2}$}}
\newcommand{\modulo}{\operatorname{mod}}
\renewcommand{\div}{\operatorname{div}}
\newcommand{\polylog}{\mathrm{polylog}}
\newcommand{\bucket}{\mbox{\sc bucket}}
\newcommand{\bbucket}{\bucket^\star}
\newcommand{\ThreeSUM}{\textsf{3SUM}}
\newcommand{\OMv}{\textsf{OMv}}
\newcommand{\ConvolutionThreeSUM}{\textsf{Convolution\ThreeSUM}}
\newcommand{\ConvThreeSUM}{\textsf{Conv\ThreeSUM}}
\newcommand{\SetDisjointness}{\textsf{SetDisjointness}}
\newcommand{\SetIntersection}{\textsf{SetIntersection}}
\newcommand{\QES}{\textsf{QES}}
\newcommand{\Bjorklund}{Bj{\o}rklund}
\newcommand{\Patrascu}{P\v{a}tra\c{s}cu}
\begin{document}

\title{Higher Lower Bounds from the 3SUM Conjecture\thanks{Supported by NSF grants
CCF-1217338, CNS-1318294, CCF-1514383, CCF-1637546, CCF-1815316, and a grant from
the US-Israel Binational Science Foundation.
This research was performed in part at the Center for Massive Data
Algorithmics (MADALGO) at Aarhus University,
which is supported by the Danish National Research Foundation grant DNRF84.}}

\author{
Tsvi Kopelowitz\thanks{Email: kopelot@gmail.com} \\ Bar-Ilan University \and
Seth Pettie \thanks{Email: pettie@umich.edu.} \\ University of Michigan \and
Ely Porat \thanks{Email: porately@gmail.com} \\ Bar-Ilan University}

\date{}

\maketitle

\begin{abstract}
The \ThreeSUM{} conjecture has proven to be a valuable tool for proving conditional lower bounds on
dynamic data structures and graph problems.
This line of work was initiated by \Patrascu{} (STOC 2010)
who reduced \ThreeSUM{} to an offline \SetDisjointness{} problem.
However, the reduction introduced by \Patrascu{} suffers from several inefficiencies,
making it difficult to obtain {\em tight} conditional lower bounds from the \ThreeSUM{} conjecture.

In this paper we address many of the deficiencies of \Patrascu's framework.  We give new and efficient
reductions from \ThreeSUM{} to offline \SetDisjointness{} and offline \SetIntersection{} (the reporting version of \SetDisjointness)
which leads to polynomially higher lower bounds for several problems.
Using our reductions, we are able to show the essential optimality of several classic algorithms, assuming the \ThreeSUM{} conjecture.
\begin{itemize}
\item{} Chiba and Nishizeki's $O(m\alpha)$-time algorithm (SICOMP 1985) for enumerating all
triangles in a graph with arboricity/degeneracy $\alpha$ is essentially optimal, for any $\alpha$.
\item{} Bj{\o}rklund, Pagh, Williams, and Zwick's algorithm (ICALP 2014) for listing $t$ triangles is essentially optimal,
assuming the matrix multiplication exponent is $\omega=2$.
\item{} Any static data structure for \SetDisjointness{} that answers queries in constant time must spend $\Omega(N^{2-o(1)})$
time in preprocessing, and any such data structure that spends near-linear preprocessing must spend $\Omega(N^{1/2-o(1)})$
time per query.  Here $N$ is the size of the set system.  Since answering two keyword searches is at least as hard as
\SetDisjointness, these lower bounds imply that any substitute for the \emph{inverted index} used by search engines
cannot simultaneously have ideal preprocessing and query times.
\end{itemize}
These statements were unattainable via \Patrascu's reductions.

We also introduce several new reductions from \ThreeSUM{} to pattern matching problems and dynamic graph problems.
Of particular interest is a new conditional lower bound on Dynamic Maximum Cardinality Matching,
which uses a new technique for obtaining \emph{amortized} lower bounds.\\

\centerline{\emph{This paper is dedicated to the memory of Mihai \Patrascu}}
\end{abstract}

\section{Introduction}\label{sect:introduction}

Data structure lower bounds come in two varieties: conditional and unconditional. The strongest unconditional lower bounds
(in the cell probe model) are either poly-logarithmic~\cite{PatrascuD06,Larsen12,Yu16,LarsenWY18} or
on extreme tradeoffs between sub-logarithmic update time
and large query time; see~\cite{PT11a,AlmanWY18}.
\Patrascu~\cite{Patrascu10} proposed an approach for proving polynomial {\em conditional} lower bounds (CLBs) based on the
conjectured hardness of the \ThreeSUM{} problem, via a new intermediate problem he called \ConvolutionThreeSUM{}.
The integer versions of the \ThreeSUM{} and \ConvolutionThreeSUM{} problems are defined as follows.
Given a set $A \subset \mathbb{Z}$,  the \ThreeSUM{} problem is
to decide if there is a triple $(a,b,c) \in A^3$ of distinct elements such that $a+b=c$. The \ConvolutionThreeSUM{} problem is,
given a vector $A \in \mathbb{Z}^n$, to decide if there is a pair $(i,j)\in[n]^2$ for which $A(i)+A(j) = A(i+j)$.
Here $[n] = \{0,1,\ldots,n-1\}$ is the first $n$ naturals.

It was originally conjectured~\cite{GajentaanO95} that \ThreeSUM{} requires $\Omega(n^2)$ time.  However, there are now known
to be $O(n^2/\polylog(n))$ \ThreeSUM{} algorithms for both integer~\cite{BaranDP08} and
real~\cite{GronlundP14,GronlundP18,GoldS17,Freund17,Chan18} inputs.
The {\em modern \ThreeSUM{} Conjecture} is that the time complexity of the problem is $\Omega(n^{2-o(1)})$, even in expectation.

\paragraph{\ThreeSUM{} and Set Disjointness.}
\Patrascu~\cite{Patrascu10} gave a reduction from \ThreeSUM{} to \ConvolutionThreeSUM{}
showing that the \ThreeSUM{} conjecture implies that \ConvolutionThreeSUM{} also requires $\Omega(n^{2-o(1)})$ time.
This reduction is summarized in the following theorem.  Observe that the optimum value for $k=k(n)$ depends on
the actual complexities of \ThreeSUM{} and \ConvolutionThreeSUM.

\begin{theorem}[\Patrascu~\cite{Patrascu10}]\label{thm:I3S_2_IC3S_mihai}
Define $T_{3S}(n)$ and $T_{C3S}(n)$ to be the randomized (Las Vegas) complexities of
\ThreeSUM{} and \ConvolutionThreeSUM{} on instances of size $n$.
For any parameter $k$, $T_{3S}(n) = O(n^2/k + (k^3 + k^2\log n)\cdot T_{C3S}(n/k))$.
\end{theorem}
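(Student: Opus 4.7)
My plan is a bucketing reduction in the spirit of \Patrascu's framework: split \ThreeSUM{} on $n$ elements into many \ConvThreeSUM{} instances of size $n/k$ plus one $O(n^2/k)$ overflow step. I would pick a random almost-linear hash $h$ (for instance, a Dietzfelbinger-style multiply-mod-shift construction) that maps $A$ into $m = n/k$ buckets $B_0,\ldots,B_{m-1}$. The almost-linearity guarantees that for any \ThreeSUM{} triple $a+b=c$ the bucket indices satisfy $h(a)+h(b)-h(c) \in \Delta$ for some constant-size offset set $\Delta$ arising from floor-rounding errors. A Chebyshev-type argument exploiting the (weak) independence of $h$ shows that, with high probability, every bucket has at most $k + O(\log n)$ elements and the total ``overflow'' beyond a cap of $k$ is $O(n/k)$ in expectation.

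Given the buckets, for each ordered triple of positions $(s_1,s_2,s_3)$ with $s_t \in \{1,\ldots, k + O(\log n)\}$ and each offset $\delta \in \Delta$, I would form three vectors $V_1, V_2, V_3 \in \mathbb{Z}^m$ by placing the $s_t$-th element of $B_i$ at index $i$ of $V_t$ (using a sentinel when $B_i$ is too small). A solution $V_1[i] + V_2[j] = V_3[i+j - \delta]$ of this three-vector \ConvThreeSUM{} corresponds precisely to a \ThreeSUM{} triple at the associated positions in the associated buckets. The three-vector form reduces to the single-vector \ConvThreeSUM{} of the theorem statement by concatenating $V_1, V_2, V_3$ into one length-$3m$ vector with large additive shifts forcing any solution to pick exactly one entry from each block. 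Counting the position triples (and the $O(1)$ many offsets in $\Delta$) yields $O(k^3 + k^2 \log n)$ \ConvThreeSUM{} instances of size $O(n/k)$, matching the second term. Separately, each overflow element $a$ is tested by an $O(n)$-time scan of $A$ that looks up $c-a$ in a hash table, contributing the $O(n^2/k)$ term since overflow totals $O(n/k)$.

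\textbf{Main obstacle.} The delicate part is designing and analyzing the hash $h$ so that it is simultaneously (i) almost linear with $|\Delta| = O(1)$ and (ii) has bucket sizes tightly concentrated, with only $O(\log n)$ slack beyond the mean $k$. Because $h$ is typically only $O(1)$-wise independent, Chernoff bounds do not apply directly, and one must use specialized tail analyses for multiply-mod-shift hashes (or equivalent density-of-primes arguments) to land on $(k + O(\log n))^3 = O(k^3 + k^2\log n)$ rather than a weaker expansion. A secondary concern is preserving the Las Vegas guarantee: each \ConvThreeSUM{} match must be verifiable in constant time, and the random hash must be re-sampled whenever a bucket exceeds the cap, with the expected retry cost absorbed into the stated complexity.
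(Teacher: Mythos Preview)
The paper does not give its own proof of this theorem; it is quoted from \Patrascu~\cite{Patrascu10}. What the paper does prove is the sharper Theorem~\ref{thm:conv3sum}, in Appendix~\ref{app:proof_conv3sum}, via a rather different mechanism: $\Theta(\log n)$ independent almost-linear hashes, buckets truncated to a \emph{constant} size $T$, a Chernoff bound over the independent hash choices (not over the elements) showing that no element is discarded by more than a small fraction of the hashes, and an error-correcting code assigning each element a codeword so that for every pair $a\neq b$ some hash index separates their roles and the spurious witness $(a,a,2a)$ is structurally ruled out. None of that machinery appears in your plan, so there is no direct comparison to make with the paper itself.

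Your plan is a faithful high-level reconstruction of \Patrascu's original reduction, and you have correctly isolated bucket-size control as the crux. The gap is in how you propose to resolve it. You want every bucket to have at most $k+O(\log n)$ elements so that the position-triple count is $(k+O(\log n))^3 = O(k^3 + k^2\log n)$. With only $O(1)$-wise independence this max-load bound is simply not available: Chebyshev gives per-bucket deviations of order $\sqrt{k}$, and a union bound over $n/k$ buckets yields nothing close to an additive $O(\log n)$ slack; I know of no ``specialized tail analysis for multiply-mod-shift hashes'' that rescues this, and you should not lean on one. In \Patrascu's argument the bucket cap is a fixed multiple of $k$ (not $k+O(\log n)$), elements in overfull buckets are handled wholesale rather than via extra position slots, and the $k^2\log n$ term does \emph{not} arise from expanding a cubed max-load---it enters through a separate bookkeeping step. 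So your identification of the obstacle is right, but your proposed route around it, and hence your derivation of the $k^2\log n$ term, would not go through as written.
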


\Patrascu{} then reduced \ConvolutionThreeSUM{} to the offline \SetDisjointness{} problem.
The input to this problem consists of a universe $U$ of elements,
a family $\mathcal F \subset 2^U$ of subsets of $U$,
and $q$ pairs of subsets $(S,S')\in \mathcal{F}\times \mathcal{F}$.
For each query pair we are interested in whether $S\cap S' = \emptyset$ or not.
The following two Theorems summarize \Patrascu's reductions.
(The second Theorem is implicit in Section 2.3 of~\cite{Patrascu10}.)

\begin{theorem}[\Patrascu~\cite{Patrascu10}]\label{thm:mihai_disjoint}
Let $g(n)$ be such that \ConvolutionThreeSUM{} requires $\Omega(\frac{n^{2}}{g(n)})$ expected time.
For any $\sqrt{n\cdot g(n)} \ll n^{\epsilon} \ll n/g(n)$
let $\mathbb{A}$ be an algorithm for the offline \SetDisjointness{} problem where $|U| =n$, $|\mathcal{F}|= \Theta(n^{1/2+\epsilon})$, each set in $\mathcal{F}$ has at most $O(n^{1-\epsilon})$ elements from $U$, each element in $U$ appears in
$\Theta(\sqrt n)$ sets from $\mathcal{F}$, and
$q=\Theta(n^{1+\epsilon})$. Then $\mathbb{A}$ requires $\Omega(\frac{n^{2}}{g(n)})$ expected time.
\end{theorem}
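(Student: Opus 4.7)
The plan is to construct a polynomial-time reduction from a \ConvThreeSUM{} instance on a vector $V\in\mathbb{Z}^{n}$ to an offline \SetDisjointness{} instance satisfying the stated parameter profile. Given such a reduction, any $o(n^{2}/g(n))$-time algorithm for that \SetDisjointness{} instance would, by composition, solve \ConvThreeSUM{} in $o(n^{2}/g(n))$ time, contradicting the hypothesis.

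The central tool is an almost-linear hash function $h:\mathbb{Z}\to[R]$, for instance Dietzfelbinger's $h(x)=(\alpha x\bmod p)\bmod R$ with a random odd $\alpha$, which satisfies the near-additivity property $h(a)+h(b)\equiv h(a+b)\pmod{R}$ up to an additive error of $\pm 1$. With $R$ calibrated appropriately and $\sqrt n$ independently chosen such hashes $h_{1},\ldots,h_{\sqrt n}$, the induced buckets $\{i\in[n]:h_{\ell}(V(i))=r\}$ have size $\Theta(n^{1-\epsilon})$ with high probability, and each index lands in exactly $\sqrt n$ buckets across the hash functions. Rejection and rehashing a constant number of times lets us assume the bucket sizes are deterministically bounded.

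Taking the ground set $C=[n]$ and these buckets as sets in $A$ (and, shifted appropriately to capture the $\pm 1$ slack on the ``target'' side, as the sets in $B$), the parameter balance $|A|=|B|=\Theta(n^{1/2+\epsilon})$ and set size $O(n^{1-\epsilon})$ follow from the incidence bookkeeping $|C|\cdot\sqrt n = |A|\cdot\Theta(n^{1-\epsilon})=n^{3/2}$. The queries are the pairs $(A_{\ell,r_{1}},B_{\ell,r_{2}})$ whose bucket indices are compatible under near-additivity, in the sense that $h_{\ell}(V(i+j))$ is forced into a predictable bucket $r_{3}\equiv r_{1}+r_{2}\pmod{R}$ for any candidate pair $(i,j)$ realized by the two sets; the compatibility constraint caps the number of queries at $\Theta(n^{1+\epsilon})$.

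I expect the main obstacle to be twofold. First, threading the parameter balance so that all five constraints ($|C|$, $|A|=|B|$, set size, element multiplicity, query count) are simultaneously met is a careful but routine calculation, requiring an exact choice of $R$ and the number of hashes. Second, one must verify completeness (every \ConvThreeSUM{} witness appears in some non-disjoint query, which follows directly from near-additivity of $h_{\ell}$) and soundness (every non-disjoint query translates back to a genuine witness within the overall time budget). Soundness requires a $\polylog(n)$-time post-processing check per reported non-disjoint pair; since the check directly examines a candidate triple $(i,j,i+j)$, the total verification work is absorbed by the query budget, and the reduction is complete.
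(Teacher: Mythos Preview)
The paper does not give its own proof of this theorem; it is quoted from \Patrascu~\cite{Patrascu10}. Your proposal, however, has a genuine gap that would prevent it from working as an independent proof.

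The central error is a confusion about what a \SetDisjointness{} query returns. You take $C=[n]$ to be the set of \emph{indices}, define $A_{\ell,r}=\{i:h_\ell(V(i))=r\}$, and then speak of a query $(A_{\ell,r_1},B_{\ell,r_2})$ as detecting ``a candidate pair $(i,j)$ realized by the two sets.'' But a disjointness query does not find $i$ in one set and $j$ in the other; it asks whether a \emph{single} element lies in both. With your sets, a common element of $A_{\ell,r_1}$ and $B_{\ell,r_2}$ is an index $i$ with $h_\ell(V(i))=r_1$ and simultaneously $h_\ell(V(i))=r_2$, which is vacuous unless $r_1=r_2$. No information about a triple $(i,j,i+j)$ is ever produced. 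Relatedly, nothing in your construction uses the convolution constraint on indices; hashing values alone cannot encode the relation $i+j=k$.

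\Patrascu's actual reduction is structured quite differently. The index set $[n]$ is partitioned into $n^{\epsilon}$ contiguous blocks of size $n^{1-\epsilon}$; this is what captures the convolution constraint, since $i\in G_a$ and $j\in G_b$ force $i+j$ into $G_{a+b}$ or $G_{a+b+1}$. A \emph{single} almost-linear hash $h$ on the \emph{values} is used, and the ground set $C$ is its range (size $n$), not the index set. For each block $G_a$ and each of $\sqrt n$ additive \emph{shifts} $s$ one forms a set of shifted hash values $\{h(A[i])+s\sqrt n \bmod n:i\in G_a\}$; this is the source of the ``each element of $C$ in $\sqrt n$ sets'' parameter, not $\sqrt n$ independent hash functions. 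For each target index $k$ and each block $a$ (so $n\cdot n^{\epsilon}=n^{1+\epsilon}$ queries), one compares the appropriately shifted set for $G_a$ against that for $G_b$, the shift chosen so that a common \emph{hash value} witnesses $h(A[i])\equiv h(A[k])-h(A[j])$. The shifts are precisely what let all $n$ targets reuse the same $n^{1/2+\epsilon}$ precomputed sets; independent hashes would not achieve this.
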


\begin{theorem}[\Patrascu~\cite{Patrascu10}]\label{thm:mihai_disjoint_extended}
Let $g(n)$ be such that \ConvolutionThreeSUM{} requires $\Omega(\frac{n^{2}}{g(n)})$ expected time.
For any $\sqrt{n\cdot g(n)} \ll n^{\epsilon} \ll n/g(n)$
let $\mathbb{A}$ be an algorithm for the offline \SetDisjointness{} problem where $|U| =\Theta(n^{2-2\epsilon})$, $|\mathcal{F}|= \Theta(n^{1/2+\epsilon}\log n)$, each set in $\mathcal{F}$ has at most $O(n^{1-\epsilon})$ elements from $U$, each element in $U$ appears in $\Theta(n^{2\epsilon -1/2})$ sets from $\mathcal{F}$, and $q=\Theta(n^{1+\epsilon}\log n)$. Then $\mathbb{A}$ requires $\Omega(\frac{n^{2}}{g(n)})$ expected time.
\end{theorem}


The \ThreeSUM{} conjecture implies that $g(n) = n^{o(1)}$, and so $\epsilon \in (1/2,1)$.
Using this case, \Patrascu{} provided CLBs for triangle enumeration, set-disjointness,
and various other dynamic graph and data structure problems.

\paragraph{Related work.}
\Patrascu's results led to a surge of research on CLBs that assume the \ThreeSUM{} conjecture. Vassilevska Williams and Williams~\cite{WW13} showed that finding a triangle of zero weight in a weighted graph requires $\Omega(n^{3-o(1)})$ time.
Abboud, Vassilevska Williams, and Weimann~\cite{AWW14} proved that the Local Alignment problem, which is of great importance in computational biology, cannot be solved in truly sub-quadratic time.
Amir, Chan, Lewenstein, and Lewenstein~\cite{ACLL14} proved that for the Jumbled Indexing problem on a text with $n$ integers (from a large enough alphabet), either the preprocessing time needs to be truly quadratic or the query time needs to be truly linear.
Abboud and Vassilevska Williams~\cite{AW14} showed numerous CLBs conditioned on various popular conjectures.
They showed, in particular, that conditioned on the \ThreeSUM{} conjecture,
data structure versions of $(s,t)$-Reachability, Strong Connectivity, Subgraph Connectivity,
Bipartite Perfect Matching, and ``Pagh's problem'' all require non-trivial polynomial preprocessing, query, or update time.
Abboud, Vassilevska Williams, and Yu~\cite{AWY15} introduced a different approach for
reducing \ThreeSUM{} to dynamic graph problems via intermediate problems called
$\Delta$-Matching Triangles and Triangle Collection.

After the conference version of this paper was published there has been plenty of followup work that uses our reductions
or extends our results in new directions.  Amir et al.~\cite{AKLPPS16} proved CLBs for pattern matching
with \emph{gaps} using our triangle enumeration results.
Kopelowitz and Krauthgamer proved CLBs for various distance oracles on colored points~\cite{KK16}.
Goldstein, Kopelowitz, Lewenstein, and Porat~\cite{GKLP16} proved CLBs
for partial matrix multiplication and problems related to finding witnesses for convolutions.
In~\cite{GKLP17}, the same authors considered a \emph{data structural} version of the \ThreeSUM{} conjecture,
and derived time/space tradeoffs from it.  Dahlgaard~\cite{Dahlgaard16} extended our lower bounds
on incremental maximum cardinality matching to other problems, and other hardness assumptions.

\subsection{Limitations of \Patrascu's Reductions}

\Patrascu's reductions are ingenious but suffer from a few limitations, namely:
\begin{itemize}
\item The number of \SetDisjointness{} queries is rather large (at least $\omega(n^{3/2})$)
making it impossible to obtain any $\omega(n^{1/2})$ time lower bound per query.
\item The size $N$ of the set systems in the offline \SetDisjointness{} problem is also rather large
($N = \Omega(n^{3/2})$) so it is impossible to get lower bounds of $\Omega(N^{4/3})$ time as a function of $N$.
\item The sets in the offline \SetDisjointness{} instances are very sparse.  In \Patrascu's framework
all sets have size at most $O(\sqrt{|U|})$, thereby limiting the possibility of obtaining
meaningful CLBs for the dense case of many graph problems, e.g., triangle enumeration.
\item Finally, \Patrascu's reduction from \ThreeSUM{} to offline \SetDisjointness{} is only meaningful if the true complexity of \ThreeSUM{} is $\omega(n^{13/7})$.\footnote{To see this, note that Theorem~\ref{thm:mihai_disjoint} is only applicable
if $g(n) = O(n^{1/3})$, i.e., \ConvolutionThreeSUM{} requires $\Omega(n^{5/3})$ time.  It is consistent with
Theorem~\ref{thm:I3S_2_IC3S_mihai} that \ConvolutionThreeSUM{} can be solved in
$O(n^{5/3})$ time while \ThreeSUM{} can be solved in $O(n^{13/7})$ time.
Furthermore, even if \ThreeSUM{} were proved to be $\Omega(n^{5/3})$ unconditionally,
this would imply no superlinear lower bound on \ConvolutionThreeSUM{} (via Theorem~\ref{thm:I3S_2_IC3S_mihai})
or to any of the problems that \ConvolutionThreeSUM{} is reduced to.}
\end{itemize}

Regarding the first two limitations, we would like to have much more control over the size of the set system
and the number of queries.  The third limitation seems intrinsic to {\em any} reduction to \SetDisjointness{}
where the sets are essentially random, since, by the birthday paradox, random $\omega(\sqrt{|U|})$-size sets will
almost surely intersect, giving no useful information per query.
The fourth limitation gets at the issue of {\em robustness}: how valuable are reductions from \ThreeSUM{} if the
\ThreeSUM{} conjecture turns out to be false---but not by much?  The possibility of a truly subquadratic \ThreeSUM{} algorithm
seemed remote a few years ago but given recent developments~\cite{Williams14,GronlundP14,GronlundP18,CL15,KaneLM18} it is not so absurd.
Chan and Lewenstein~\cite{CL15} showed that numerous special cases of
\ThreeSUM{} can be solved in truly subquadratic time.

\subsection{New Result: A More Versatile Lower Bound Framework}

\begin{figure*}
\centering\begin{tabular}{lllll}
\multicolumn{5}{c}{\sc Reductions to Set Disjointness/Intersection}\\
\istrut[2]{6}Reduction & $|\mathcal{F}|$ & $|U|$ & $q$ & Remarks\\\cline{1-5}
\istrut[2]{5}\ConvThreeSUM{} $\rightarrow$ \SetDisjointness & $n^{1/2 +
\epsilon}$ & $n$ & $n^{1+\epsilon}$ & $\sqrt{n g(n)} \ll n^\epsilon
\ll n/g(n)$\\
\istrut[2]{5}\ConvThreeSUM{} $\rightarrow$ \SetDisjointness & $n^{1/2 +
\epsilon} \log n$ & $n^{2-2\epsilon}$ & $n^{1+\epsilon}\log n$&\\\hline
\istrut[2]{5}\ThreeSUM{} $\rightarrow$ \SetDisjointness & $n\log n$ &
$n^{2-2\gamma}$ & $n^{1+\gamma}\log n$ & $\gamma \in [0,1)$ and $\delta\in(0,1)$\\
\istrut[2]{5}\ThreeSUM{} $\rightarrow$ \SetIntersection &
$n^{\frac{1}{2}(1+\delta+\gamma)}$ & $n^{1+\delta-\gamma}$ &
$n^{1+\gamma}$ & Total output size: $O(n^{2-\delta})$\\\cline{1-5}
\end{tabular}
\caption{\label{fig:3sumreductions}The first two reductions from \ConvolutionThreeSUM{} are from \Patrascu~\cite{Patrascu10}.
The third and fourth reductions from \ThreeSUM{} are new.}
\end{figure*}

We overcome the limitations of \Patrascu's framework by giving efficient
reductions directly from \ThreeSUM{} to \SetDisjointness{}
and from \ThreeSUM{} to \SetIntersection, which we define shortly.
By avoiding \ConvolutionThreeSUM{} as an intermediate problem,
our reductions imply non-trivial lower bounds,
even if the true complexity of \ThreeSUM{} is just
$\Omega(n^{3/2 + \epsilon})$, for any $\epsilon > 0$.

For clarity's sake, we present our new framework in terms of a new constant
$\gamma$ instead of the constant $\epsilon$ of Theorem~\ref{thm:mihai_disjoint}.
Notice that in \Patrascu's frameworks $1/2<\epsilon<1$ while here $0< \gamma < 1$.
The first theorem reduces \ThreeSUM{} to  offline \SetDisjointness.

\begin{theorem}\label{thm:improved_reduction}
Let $f(n)$ be such that \ThreeSUM{} requires expected time $\Omega(\frac{n^{2}}{f(n)})$.
For any constant $0< \gamma < 1$ 
let $\mathbb{A}$ be an algorithm for  offline \SetDisjointness{} where $|U| = \Theta(n^{2-2\gamma})$, $|\mathcal{F}|=\Theta(n\log n)$, each set in $\mathcal{F}$ has at most $O(n^{1-\gamma})$ elements from $U$, and $q=\Theta(n^{1+\gamma}\log n)$. Then $\mathbb{A}$ requires $\Omega(\frac{n^{2}}{f(n)})$ expected time.
\end{theorem}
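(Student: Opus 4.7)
The plan is to reduce \ThreeSUM{} to offline \SetDisjointness{} directly, avoiding \ConvolutionThreeSUM{} as an intermediary (which is the root cause of the fourth limitation discussed after Theorem~\ref{thm:I3S_2_IC3S_mihai}). The high-level structure is the same random-linear-hashing approach that underlies \Patrascu's construction, but the parameter regime is new and the sets are organized so that $|A|=|B|=\Theta(n\log n)$ regardless of $\gamma$.

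First, I would apply a random linear hash $h(x) = (rx \bmod P) \bmod p$ with $P$ a prime exceeding the magnitude of the input, $r \in [P]$ uniform, and $p = \Theta(n^\gamma)$, partitioning $S$ into $p$ buckets $B_0, \ldots, B_{p-1}$ of size $O(n^{1-\gamma})$ with high probability (via standard second-moment bounds). The near-linearity of $h$ modulo $p$ guarantees that any \ThreeSUM{} triple $a+b=c$ lives in a compatible bucket triple $(i,j,k)$ with $i+j \equiv k \pmod p$, up to a $\pm 1$ carry from the outer modular reduction; hence only $O(p^2)$ bucket triples need be considered. I would repeat this construction with $\Theta(\log n)$ independent hash functions, which amplifies the hashing's success probability to $1-1/\poly(n)$ and accounts for the $\log n$ factors in $|A|$, $|B|$, and $q$.

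Second, I would set up the offline \SetDisjointness{} instance. The universe $C$ of size $\Theta(n^{2-2\gamma})$ is identified with pairs of positions within two $n^{1-\gamma}$-sized buckets. Within each repetition I create one set per element of $S$: for each $a\in S$, the set $A_a \subseteq C$ of size $O(n^{1-\gamma})$ encodes all (position of $a$, position of partner) pairs that would witness a \ThreeSUM{} triple through $a$, and symmetrically for $B_c$. Each query pairs $A_a$ with some $B_c$, and is issued only for pairs $(a,c)$ whose bucket indices $(h(a),h(c))$ are a priori compatible with some partner-bucket holding $b=c-a$. The number of such queries per repetition is $\Theta(n\cdot n^\gamma) = \Theta(n^{1+\gamma})$, giving $\Theta(n^{1+\gamma}\log n)$ queries in total, and the design is such that a nonempty intersection $A_a\cap B_c$ certifies the \ThreeSUM{} witness $b=c-a\in S$.

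The main obstacle --- and the key parameter improvement over \Patrascu's construction --- is maintaining $|A|=|B|=\Theta(n\log n)$ independent of $\gamma$. A naive indexing by (element, partner-bucket) pairs would give $|A|=n^{1+\gamma}$, landing in the wrong regime. The fix is to absorb the extra $n^\gamma$ factor into the \emph{queries} rather than the sets: each set $A_a$ is reused across the $n^\gamma$ partner-buckets that $a$ participates in, with the partner-bucket implicitly carried by the $B_c$ side of each query through the shared coordinate system on $C$. A secondary subtlety is handling the $\pm 1$ carry from the modular reduction, which introduces only a constant-factor blowup in the bucket triples and queries considered. Once these ingredients are assembled, any algorithm solving this offline \SetDisjointness{} instance in expected time $o(n^2/f(n))$ would yield a \ThreeSUM{} algorithm of matching complexity, contradicting the hypothesis on $f$ and establishing Theorem~\ref{thm:improved_reduction}.
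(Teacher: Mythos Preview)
Your high-level plan is right: bucket via an almost-linear hash $h_1$ into $R=\Theta(n^{\gamma})$ buckets of size $O(n^{1-\gamma})$, so that a \ThreeSUM{} witness $a+b=c$ forces $b$ into a bucket determined by $h_1(a),h_1(c)$, and then express the resulting $nR=\Theta(n^{1+\gamma})$ bucket-level questions as \SetDisjointness{} queries, repeated $O(\log n)$ times. This is exactly what the paper does.

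The gap is in your set construction. You propose one set $A_a$ per input element, over a universe $C$ of ``pairs of positions within two $n^{1-\gamma}$-sized buckets,'' with $A_a$ reused across all $n^{\gamma}$ partner buckets. But bucket \emph{positions} carry no arithmetic information about the elements, so an intersection $A_a\cap B_c\neq\emptyset$ over a position-based universe cannot certify that $c-a$ lies in $S$ (or even in the target bucket). Any encoding of $A_a$ that \emph{does} carry enough information to detect $c-a\in\mathcal{B}_j$ necessarily depends on the specific partner bucket $j$, which forces $\Theta(n^{1+\gamma})$ sets rather than $\Theta(n)$ and defeats your stated fix.

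The paper's actual mechanism for hitting $|A|=|B|=\Theta(n\log n)$ is different and is the key idea you are missing. One introduces a \emph{second} almost-linear hash $h_2:U\to[Q]$ with $Q=\Theta(n^{2-2\gamma})$, so that the question ``does $\mathcal{B}_i$ meet $\mathcal{B}_{i-h_1(z)}+z$?'' becomes ``does $h_2(\mathcal{B}_i)$ meet $h_2(\mathcal{B}_{i-h_1(z)})+h_2(z)$ over $[Q]$?'' with false-positive probability $<1/2$. Writing $h_2(z)=h_2^{\uparrow}(z)\cdot\sqrt{Q}+h_2^{\downarrow}(z)$, one precomputes, for each bucket $\mathcal{B}_i$, all $\sqrt{Q}$ coarse shifts $\mathcal{B}_{i,j}^{\uparrow}=h_2(\mathcal{B}_i)+j\sqrt{Q}$ and all $\sqrt{Q}$ fine shifts $\mathcal{B}_{i,j}^{\downarrow}=h_2(\mathcal{B}_i)-j$. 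Then the shift by $h_2(z)$ is realized by pairing one precomputed $\uparrow$-set with one precomputed $\downarrow$-set. This yields $R\cdot\sqrt{Q}=\Theta(n^{\gamma}\cdot n^{1-\gamma})=\Theta(n)$ sets per side per repetition of $h_2$, and the $O(\log n)$ independent choices of $h_2$ give the $\Theta(n\log n)$ count. The sets are indexed by (bucket, shift), not by input elements; this square-root decomposition of the additive shift is what your proposal lacks.
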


We also consider the offline \SetIntersection{} problem where the input is the same as in \SetDisjointness,
except that we are required to enumerate a large enough subset of all elements in the $q$ intersections.
Note that in this case we allow $\gamma = 0$.
Refer to Section~\ref{section:improved_reduction} for proofs of Theorems~\ref{thm:improved_reduction} and \ref{thm:improved_reduction_reporting}.

\begin{theorem}\label{thm:improved_reduction_reporting}
Let $f(n)$ be such that \ThreeSUM{} requires expected time $\Omega(\frac{n^{2}}{f(n)})$.
For any constants $0\leq \gamma < 1$
and $\delta>0$, let $\mathbb{A}$ be an algorithm for offline \SetIntersection{}
where $|U| = \Theta(n^{1+\delta-\gamma})$, $|\mathcal{F}|=\Theta(\sqrt{n^{1+\delta+\gamma}})$,
each set in $|\mathcal{F}|$ has at most $O(n^{1-\gamma})$ elements from $U$,
$q=\Theta(n^{1+\gamma})$, and the algorithm is required to enumerate at most $O(n^{2-\delta})$ elements from the intersections.
Then $\mathbb{A}$ requires $\Omega(\frac{n^{2}}{f(n)})$ expected time.
\end{theorem}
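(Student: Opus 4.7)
The plan is to give a direct reduction from a \ThreeSUM{} instance $S$ with $|S|=n$ to an offline \SetIntersection{} instance matching the stated parameters, extending the approach behind Theorem~\ref{thm:improved_reduction} and exploiting the extra parameter $\delta$ to tune between the universe size $|C|$ and the expected total output.

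The construction uses two hash functions. First I would take an almost-linear hash $h : \mathbb{Z} \to [R]$ with $R = \Theta(n^{\gamma})$ (of the Dietzfelbinger style), so that each bucket $S_j = \{x \in S : h(x) = j\}$ has $O(n^{1-\gamma})$ elements w.h.p.\ and so that any triple $a+b=c$ in $S$ satisfies $h(a)+h(b)=h(c)+\xi$ for some $\xi \in \{-1,0,+1\}$. Second, I would pick a linear universal hash $\psi : \mathbb{Z} \to [N]$ with $N = \Theta(n^{1+\delta-\gamma})$ and use it to embed every relevant set into the universe $C = [N]$. For each triple $(a,k,\xi) \in S \times [R] \times \{-1,0,1\}$, the basic query asks for the intersection of $\psi(S_{k-h(a)-\xi})$ with the shifted set $\psi(S_k) - \psi(a)$; by linearity of $\psi$ this is $\psi(S_k-a)$, and any $b$ with $b\in S_{k-h(a)-\xi}$ and $a+b\in S_k$ contributes $\psi(b)$ to the intersection. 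This captures every \ThreeSUM{} triple using $\Theta(n^{1+\gamma})$ basic queries, matching $q$.

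To reach the balanced side sizes $|A|=|B|=\Theta(n^{(1+\delta+\gamma)/2})$, I would rebalance the naturally asymmetric construction (on the $A$-side there are $\Theta(n^\gamma)$ distinct bucket sets, while on the $B$-side there are $\Theta(n^{1+\gamma})$ distinct shifted-bucket sets). The plan is a coordinated partitioning: split each bucket on the $A$-side into $\Theta(n^{(1+\delta-\gamma)/2})$ labeled pieces using an auxiliary partition of $C$, and label each shifted bucket on the $B$-side with a compatible label so that each basic query becomes exactly one pair between the new sides. With care this preserves $q$, keeps every set within size $O(n^{1-\gamma})$, and still captures every \ThreeSUM{} triple.

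The expected total output is $O(n^{2-\delta})$ by a standard birthday estimate: two sets of size $n^{1-\gamma}$ in a universe of size $N=n^{1+\delta-\gamma}$ meet in $O(n^{1-\gamma-\delta})$ elements on average (counting both valid triples and $\psi$-spurious collisions), so summing over $q=\Theta(n^{1+\gamma})$ queries gives $O(n^{2-\delta})$. Given a solver $\mathbb{A}$ for the offline \SetIntersection{} instance in time $T$, \ThreeSUM{} is decided in time $T+O(n^{2-\delta})$ by verifying each reported candidate in $O(1)$; since $\delta>0$ is fixed the verification term is $o(n^2/f(n))$ whenever $f(n)=o(n^\delta)$ (in particular under the \ThreeSUM{} conjecture), so $T=\Omega(n^2/f(n))$. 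The main obstacle will be the rebalancing step: coordinating the auxiliary partitions on both sides so that $q$ is not inflated, set sizes stay within $O(n^{1-\gamma})$, every valid triple is captured, and the $\psi$-collision budget is respected, all simultaneously, is the technical heart of the argument.
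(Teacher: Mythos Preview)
Your high-level plan matches the paper almost exactly: two almost-linear hashes, $h_1$ into $R=\Theta(n^{\gamma})$ buckets of size $O(n^{1-\gamma})$ and $h_2$ into a universe of size $Q=\Theta(n^{1+\delta-\gamma})$, then $\Theta(nR)=\Theta(n^{1+\gamma})$ queries indexed by $(z,i)$, with the birthday calculation giving expected total output $O(n^{2-\delta})$ and constant-time verification per reported element. All of this is correct and is precisely what the paper does.

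The one place you diverge is the ``rebalancing'' step, which you correctly flag as the crux but leave unresolved. Your idea of partitioning $C$ and tagging both sides with compatible labels is not how the paper achieves the symmetric $|A|=|B|=\Theta(R\sqrt{Q})=\Theta(\sqrt{n^{1+\delta+\gamma}})$, and as stated it is unclear how to make it work without inflating $q$: if you restrict an $A$-side bucket to one block of $C$, you must also restrict the shifted $B$-side set to the corresponding (shifted) block, and organizing this so that the $n^{1+\gamma}$ distinct shifted sets collapse to only $R\sqrt{Q}$ sets while each $(z,i)$ still maps to a single pair is exactly the difficulty. The paper's solution is not a partition of $C$ at all but a \emph{precomputed family of shifted copies}: for each bucket $\mathcal{B}_i$ and each $j\in[\sqrt{Q}]$ it forms
\[
\mathcal{B}_{i,j}^{\uparrow}=\{h_2(x)+j\sqrt{Q}\bmod Q: x\in\mathcal{B}_i\}\quad\text{and}\quad \mathcal{B}_{i,j}^{\downarrow}=\{h_2(x)-j\bmod Q: x\in\mathcal{B}_i\},
\]
so that both sides have exactly $R\sqrt{Q}$ sets. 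The point is a base-$\sqrt{Q}$ decomposition of the shift $h_2(z)$: writing $h_2(z)=j_{\mathrm{hi}}\sqrt{Q}+j_{\mathrm{lo}}$, the single pair $(\mathcal{B}_{i,j_{\mathrm{hi}}}^{\uparrow},\,\mathcal{B}_{i-h_1(z),\,j_{\mathrm{lo}}}^{\downarrow})$ realizes the test ``$h_2(x)-h_2(y)\equiv h_2(z)$'' for $x\in\mathcal{B}_i,\,y\in\mathcal{B}_{i-h_1(z)}$. This simultaneously symmetrizes $|A|$ and $|B|$, keeps $q=\Theta(nR)$, keeps every set of size $O(n^{1-\gamma})$, and preserves correctness. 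Once you plug this in for your rebalancing step, your argument is complete and coincides with the paper's.
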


These theorems eliminate the need to use \ConvolutionThreeSUM{} as a stepping stone (Theorem~\ref{thm:I3S_2_IC3S_mihai})
when proving lower bounds via \SetDisjointness/\SetIntersection.
Nonetheless, the \ConvolutionThreeSUM{} problem is still useful and its exact relationship to the \ThreeSUM{} problem is an interesting open question.
The more constrained structure of \ConvolutionThreeSUM{} makes it easier to use in some CLBs; see~\cite{ACLL14,WW13}.
We are also able to prove
that the randomized complexities of \ThreeSUM{} and \ConvolutionThreeSUM{} differ by at most a logarithmic factor.
Refer to Section~\ref{app:proof_conv3sum} for proof of Theorem~\ref{thm:conv3sum}.

\begin{theorem}\label{thm:conv3sum}
Define $T_{3S}(n)$ and $T_{C3S}(n)$ to be the randomized (Las Vegas) complexities of
\ThreeSUM{} and \ConvolutionThreeSUM{} on instances of size $n$.
Then $T_{3S}(n) = O(\log n \cdot T_{C3S}(n))$.
\end{theorem}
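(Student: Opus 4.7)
The plan is a direct randomized reduction from a \ThreeSUM{} instance $A\subseteq\mathbb{Z}$ of size $n$ to $\Theta(\log n)$ \ConvolutionThreeSUM{} instances on vectors of length $n$. Unlike the reduction of Theorem~\ref{thm:I3S_2_IC3S_mihai}, which sorts $A$ into buckets of size $k$ and brute-forces $\Theta(k^3)$ intra-bucket triples, this approach exploits an almost-linear hash with expected bucket size $O(1)$ and therefore incurs no polynomial overhead.

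My first step is to draw a random almost-linear hash $h:\mathbb{Z}\to[n]$ from a universal family, for instance $h(x)=\lfloor ((ax\bmod p)\cdot n)/p\rfloor$ with $p$ a random prime exceeding $\max A$ and $a$ random in $[1,p-1]$. Such a family has two key properties: (i) \emph{universality}, giving $\Pr[h(x)=h(y)]=O(1/n)$ for any $x\neq y$, and (ii) \emph{almost-linearity}, meaning $h(x+y)-h(x)-h(y)\in E$ for a fixed constant-size error set $E$ (say $\{-1,0,1\}$ modulo $n$). Given $h$, I form a vector $V\in\mathbb{Z}^n$ by setting $V[b]$ to the unique $a\in A$ with $h(a)=b$ when the bucket is a singleton, and otherwise to a distinguished ``poison'' symbol $\bot$ chosen so that no sum involving $\bot$ can equal any element of $A$. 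For each $\delta\in E$ I then issue one \ConvolutionThreeSUM{} query that checks whether there exist $i,j$ with $V[i]+V[j]=V[i+j+\delta]$, verifying any reported triple directly against $A$ in $O(1)$ time. The entire construction is repeated $\Theta(\log n)$ times with independent hashes, and the algorithm outputs any verified witness.

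The main obstacle is proving that a single round captures a fixed witness $(a,b,c)$ with $a+b=c$ with probability $\Omega(1)$. Almost-linearity immediately guarantees that the correct $\delta$-instance is checking $V[h(a)]+V[h(b)]=V[h(c)]$, and universality together with a union bound gives pairwise distinctness of $h(a),h(b),h(c)$ with constant probability. The delicate part is showing that, with further constant probability, none of the $n-3$ other elements hashes into $\{h(a),h(b),h(c)\}$, so that all three buckets are singletons and $V[h(a)]=a$, $V[h(b)]=b$, $V[h(c)]=c$. Here I would invoke a limited-independence analysis (a constant-wise independent hash suffices, with additional pairwise-independence bookkeeping on the three ``interesting'' buckets) to approximate the collision count by a distribution with constant mean $\approx 3$, yielding a constant probability of zero collisions.

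Boosting over $\Theta(\log n)$ independent hashes drives the per-witness failure probability below $1/\mathrm{poly}(n)$, so a union bound over the at most $n^2$ potential witnesses delivers correctness with high probability. Each round costs $O(n)$ preprocessing plus $O(1)$ calls to \ConvolutionThreeSUM{} on size-$n$ vectors, so the total expected work is $O(\log n\cdot T_{C3S}(n))$, matching the claim. A mild additional point to handle is that a Las Vegas algorithm must verify \emph{correctness} on no-instances; this follows because the poison symbol $\bot$ cannot be confused with any element of $A$, so every Conv3SUM candidate is either confirmed as a true witness or discarded in $O(1)$ time.
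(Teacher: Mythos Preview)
Your high-level strategy---hash $A$ to $[O(n)]$ buckets and run $\Theta(\log n)$ \ConvolutionThreeSUM{} instances---matches the paper's, but there is a genuine gap in your treatment of false positives. You assume the only spurious Conv3SUM witnesses involve the poison symbol $\bot$, but that is not so: whenever $a,2a\in A$ and both land in singleton buckets, almost-linearity forces $h(2a)\in 2h(a)+E$, so $(i,j)=(h(a),h(a))$ is a Conv3SUM witness in some $\delta$-instance even though $(a,a,2a)$ is not a \ThreeSUM{} witness (the three elements must be distinct). Your verification step rejects this triple, but then you are stuck: the Conv3SUM oracle---whether decision or single-witness---may return the same false positive in every round, masking a genuine witness that is simultaneously present. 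You cannot just re-query after deleting the offender, since $A$ may contain $\Theta(n)$ such $(a,2a)$ pairs. The paper devotes its main technical effort to exactly this issue (and explicitly flags it as a bug in \Patrascu's original reduction): it assigns each $x\in A$ an $L$-bit codeword $C_x$ from a constant-rate, constant-distance code, and within the $i$th instance places $x$ in a ``first-summand'' slot or ``second-summand'' slot according to the bit $C_x(i)$. The slot structure makes $i=j$ witnesses impossible by construction, while the code's distance guarantees that any $a\neq b$ are split in an $\epsilon$-fraction of the $L$ instances, which is then intersected with the $(1-O(1/T))$-fraction of instances in which none of $a,b,a{+}b$ lies in an overloaded bucket. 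A lighter fix within your framework would be an independent random two-coloring of $A$ per round, with one color eligible only as first summand and the other only as second.

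A secondary looseness: your claim that all three target buckets are singletons with constant probability does not follow from pairwise independence when hashing $n$ elements to $[n]$ (Markov only gives probability $\ge 1/n$ that a fixed bucket is a singleton). Hashing to $[Cn]$ for a sufficiently large constant $C$ makes the Markov bound nontrivial and fixes this, but you should say so explicitly; the paper instead allows buckets of size up to a constant $T$ and argues via a Chernoff bound over the $L$ independent hashes that no element is discarded by more than a $4/T$-fraction of them.
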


\subsection{Techniques and Implications}

Our new framework borrows liberally from the techniques of \Patrascu's framework, particularly the use of almost linear hash functions.
However, in order to obtain our improvements we assemble the building blocks in a new and simpler way.
The implications of our new framework are significant. To start off, since $q$ can be made arbitrarily close to $n$ (by having $\gamma$ approach 0), it is now possible to obtain much higher CLBs for the query time of
\SetDisjointness{}. Similarly, since the size of the offline \SetDisjointness{} can be made
small (by having $\gamma$ approach 0), it is now possible to obtain higher CLBs in terms of the size of the offline \SetDisjointness{} instance.

Finally, as is illustrated in Section~\ref{sec:triangle_lb}, using the new framework it is now possible to obtain CLBs for graph problems which apply to all edge densities. Such reductions make use of Theorem~\ref{thm:improved_reduction_reporting}.
For example, in the case of triangle enumeration our CLBs imply that clever techniques, such as
fast matrix multiplication, cannot lead to faster enumeration algorithms, even in very dense graphs.

To illustrate the advantages of the new framework, we briefly show how we are able to obtain higher CLBs for the fundamental problem of {\em online} \SetDisjointness{}, and then continue in the body of this paper to describe new and better CLBs for many old and new problems.
A corresponding discussion on CLBs for the online \SetIntersection{} problem is given in Section~\ref{app:details_set_intersection}.

\subsection{A Higher Conditional Lower Bound for Online \SetDisjointness{}}

It is straightforward to see that \emph{online} \SetDisjointness{} solves \emph{offline} \SetDisjointness.
We phrase the CLBs in terms of $N$: the sum of set sizes.
Define $t_p$ to be the preprocessing time of an online \SetDisjointness{} structure
and $t_q$ its query time.

Using \Patrascu{'s} reduction from \ConvolutionThreeSUM{} to \SetDisjointness{} we have $N=\Theta(n^{1.5})$ and
there are $\Theta(n^{1+\epsilon}) = \Theta(N^{(2 +2\epsilon)/3})$ queries that need to be answered.
Thus we obtain the following lower bound tradeoff: $t_p + N^{(2 +2\epsilon)/3}\cdot t_q= \Omega \left(\frac{N^{4/3}}{g(N^{2/3})}\right)$.
The \ThreeSUM{} conjecture implies $g(x) = x^{o(1)}$.
If, for example, we only allow linear preprocessing, letting $\epsilon$ tend to $1/2$ gives
a query lower bound of $\Omega(N^{\frac{1}{3}-o(1)})$. If we demand constant time queries,
we obtain a lower bound of $\Omega(N^{\frac 4 3 -o(1)})$ on the preprocessing time.
We show next how our new framework provides better tradeoffs.

\begin{theorem}\label{thm:set_disjoint_improved}
Assume the \ThreeSUM{} conjecture.
For any $0< \gamma <1$, any data structure for \SetDisjointness{} has

\[
t_p + N^{\frac{1+\gamma}{2-\gamma}}\cdot t_q = \Omega\left(N^{\frac{2}{2-\gamma}-o(1)}\right).
\]

\end{theorem}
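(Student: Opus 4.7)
The plan is to reduce the offline \SetDisjointness{} instance of Theorem~\ref{thm:improved_reduction} to online \SetDisjointness{}: feed the set system to the online data structure in preprocessing, then answer each of the $q$ pairs with a single query. Any online solution thus yields an offline solution whose total running time is at most $t_p(N) + q\cdot t_q(N)$, which Theorem~\ref{thm:improved_reduction} forces to be $\Omega(n^2/f(n))$, with $f(n)=n^{o(1)}$ under the \ThreeSUM{} conjecture. Only bookkeeping in terms of $N$ remains.

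First I compute $N$. In the instance of Theorem~\ref{thm:improved_reduction} we have $|A|=|B|=\Theta(n\log n)$ sets, each of size $O(n^{1-\gamma})$, so
\[
N \;=\; \Theta\bigl((|A|+|B|)\cdot n^{1-\gamma}\bigr) \;=\; \Theta\bigl(n^{2-\gamma}\log n\bigr).
\]
Inverting, $n=\Theta\bigl(N^{1/(2-\gamma)}/\polylog N\bigr)$, and in particular $n$ and $N$ are polynomially related, so any $n^{o(1)}$ slack equals $N^{o(1)}$ slack. The number of queries is
\[
q \;=\; \Theta\bigl(n^{1+\gamma}\log n\bigr) \;=\; \Theta\bigl(N^{(1+\gamma)/(2-\gamma)}\cdot\polylog N\bigr),
\]
and the lower bound becomes $\Omega(n^2/f(n))=\Omega\bigl(N^{2/(2-\gamma)-o(1)}\bigr)$.

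Combining, any data structure for online \SetDisjointness{} must satisfy
\[
t_p(N) \;+\; N^{(1+\gamma)/(2-\gamma)}\,\polylog(N)\cdot t_q(N) \;=\; \Omega\bigl(N^{2/(2-\gamma)-o(1)}\bigr),
\]
and absorbing the $\polylog(N)$ factor into the $N^{o(1)}$ term on the right yields exactly the claimed tradeoff. There is no real obstacle here beyond the parameter translation; the substantive content is packaged inside Theorem~\ref{thm:improved_reduction}, whose key feature is that $\gamma$ may be chosen arbitrarily close to $0$, shrinking $N$ toward $n$ and thereby boosting the exponent $2/(2-\gamma)$ toward $1$ in the query-dominated regime while simultaneously driving the preprocessing exponent toward $2$ when $t_q = O(1)$.
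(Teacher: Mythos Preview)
Your proof is correct and essentially identical to the paper's: both compute $N=\Theta(n^{2-\gamma}\log n)$ from Theorem~\ref{thm:improved_reduction}, translate $q=\Theta(n^{1+\gamma}\log n)$ and the $\Omega(n^{2-o(1)})$ bound into the corresponding powers of $N$, and absorb the $\polylog$ factors into $N^{o(1)}$. One small remark on your closing commentary: the $\Omega(N^{2-o(1)})$ preprocessing lower bound for constant-time queries comes from taking $\gamma\to 1$, not $\gamma\to 0$ (as you can see since $2/(2-\gamma)\to 2$ only when $\gamma\to 1$).
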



\begin{proof}
Using Theorem~\ref{thm:improved_reduction}, we have $N=\Theta( n^{2-\gamma}\log n) $, and the number of queries to answer is $\Theta(n^{1+\gamma}\log n) = \tilde{\Theta}(N^{\frac{1+\gamma}{2-\gamma}})$.
By the \ThreeSUM{} conjecture, answering these queries
takes time $\Omega(n^{2-o(1)}) = \Omega(N^{\frac{2}{2-\gamma} - o(1)})$.
\end{proof}

Theorem~\ref{thm:set_disjoint_improved} implies, for example, that if we only allow linear preprocessing time,
then by making $\gamma$ tend to 0
the query time must be $\Omega(N^{\frac{1}{2}-o(1)})$.
This CLB is comparable with the data structure of
Cohen and Porat~\cite{CP10} (see also~\cite{KPP15}) where $t_p=O(N\sqrt N)$ and $t_q = O(\sqrt N)$ .
Furthermore, if we only allow constant query time, then by making $\gamma $  tend to $1$
the preprocessing time must be $\Omega(N^{2-o(1)})$, matching that of the trivial preprocessing algorithm that
computes all answers in advance. A clean expression of this tradeoff curve is given in the following corollary; refer to Section~\ref{sec:set_disjoint_improved_proof} for the proof.

\begin{corollary}
\label{cor:set_disjoint_improved}
Assume the \ThreeSUM{} conjecture.
Fix constants $p\in [1,2)$ and $q\in [0,1/2]$
and suppose there is a data structure for \SetDisjointness{} where
$t_p = O(N^{p})$ and $t_q = O(N^{q})$.  Then
\[
p +2q \ge  2.
\]
\end{corollary}

\subsection{Triangle Enumeration}
\begin{figure*}
\centering\begin{tabular}{lll}
\multicolumn{3}{c}{\sc Triangle Enumeration Bounds}\\
Authors & Time Bound & Remarks \\\cline{1-3}
\hline
Itai \& Rodeh & $O(m^{3/2})$ & \\
\hline
Chiba \& Nishizeki & $O(m\alpha)$ & $\alpha = $ arboricity\\
\hline
\Patrascu & $\Omega(m^{4/3-o(1)})$ & $t \approx m = n^{1.5+o(1)}$ triangles\\
\hline
 & $O(m^{\frac{2\omega}{\omega+1}} + m^{\frac{3(\omega-1)}{\omega+1}}t^{\frac{3-\omega}{\omega+1}})$ &\istrut[2]{6}\\
\Bjorklund, Pagh, & $O(n^{\omega} + n^{\frac{3(\omega-1)}{5-\omega}}t^{\frac{2(3-\omega)}{5-\omega}})$ & \rb{3}{$\omega = $ matrix mult.~exp.}	\istrut[2]{6}\\\cline{2-3}
Williams \& Zwick & $O(m^{4/3+o(1)} + t\cdot (\frac{m}{t^{2/3}}))$ &\istrut[2]{5}\\
 &$O(n^{2+o(1)} + t\cdot (\frac{n}{t^{1/3}}))$ & \rb{3}{Assuming $\omega = 2$}\istrut[3]{5}\\\hline
Kopelowitz, Pettie \& Porat & $O(m\ceil{\frac{\alpha\log\log n}{\log n}} + t)$ & Randomized, w.h.p.\\\hline
Eppstein, Goodrich,		&									& Randomized, w.h.p.\\
Mitzenmacher \& Torres	& \rb{2.5}{$O(m\ceil{\frac{\alpha\log w}{w}} + t)$}	& $w = $ word size\\\hline
\hline
 & $\Omega(m\alpha^{1-o(1)})$ & every arboricity $\alpha$\\
\rb{2.5}{\bf new} & $\Omega(\min\{m^{3/2-o(1)}, \;t\cdot (\frac{m}{t^{2/3}})^{1-o(1)}\})$  & \\
\rb{2.5}{(Assuming \ThreeSUM{} Conjecture)} & $\Omega(\min\{n^{3-o(1)}, \;t\cdot (\frac{n}{t^{1/3}})^{1-o(1)}\})$  & \\ \cline{1-3}
\end{tabular}
\caption{\label{fig:triangle-bounds}
\Bjorklund{} et al.~\cite{BjorklundPWZ14} also proved several lower bounds conditioned on the
\QES{} Conjecture.  This conjecture was later refuted by
Lokshtanov, Patrui, Tamaki, Williams, and Yu~\cite{LokshtanovPTWY17}.}
\end{figure*}

Given a graph with $n$ vertices and $m$ edges, the \emph{Triangle Enumeration} problem is
to list all $t$ triangles (3-cycles), or alternatively, to list up to a given number $t$ of triangles.
The maximum number of triangles in a graph is $O(m^{3/2})$.  (For example, consider a clique.)
Itai and Rodeh~\cite{IR78} obtained a Triangle Enumeration algorithm running in $O(m^{3/2})$ time,
which is optimal inasmuch as it is linear in the worst case output size.  The Itai-Rodeh algorithm
can also run in $\Omega(m^{3/2})$ time even if there are zero triangles, which highlights the need for
more nuanced ways to measure the performance of enumeration algorithms.
Chiba and Nishizeki~\cite{CN85} showed that all triangles could be enumerated in
$O(m\alpha)$ time where $\alpha = \alpha(G)$ is the arboricity\footnote{The \emph{arboricity} of an undirected graph $G=(V,E)$ is
the number of forests needed to cover $E$; by the Nash-Williams theorem~\cite{Nash61,Nash64}
it is precisely $\alpha(G) =  \max_{U\subseteq V \,:\, |U|\ge 2} \left\lceil \frac{\card{E(U)}}{\card{U}-1} \right\rceil$,
where $E(U)$ is the set of edges induced by $U$.} of the graph.
The Chiba-Nishizeki bound subsumes the Itai-Rodeh bound because $\alpha$ is always upper bounded by $\sqrt{m}$;
it is also optimal inasmuch as the output size can be as large as $\Omega(m\alpha)$, for any $\alpha\ge 2$ and $m=O(n\alpha)$.

One of the big open questions in this area is to understand the relationship between
Triangle Detection and Triangle Enumeration, or more generally, to separate the one-time costs of enumeration
(e.g., in terms of $m,\alpha$)  from the per-triangle costs (in terms of $t$).
In particular, is $\Omega(m\alpha)$ time necessary for reasons other than worst case output size considerations?
Kopelowitz, Pettie, and Porat~\cite{KPP15}
proved that enumerating $t$ triangles takes $O(m\ceil{\alpha/\frac{\log n}{\log \log n}} + t)$ time.
Eppstein et al.~\cite{EppsteinGMT17} designed an algorithm for the $w$-bit word RAM model
running in $O(m\ceil{\alpha/\frac{w}{\log w}} + t)$ time.
Strictly speaking, these algorithms disprove the hypothesis that
$\Omega(m\alpha)$ is necessary, but perhaps it cannot be improved more than polylogarithmic factors.

\Patrascu{} showed that, conditioned on the \ThreeSUM{} conjecture, there exists a graph with
$t=O(m)$ triangles, for which listing all triangles must take $\Omega(m^{4/3-o(1)})$ time.
A careful examination of \Patrascu's proof shows that the arboricity of this graph is indeed roughly $m^{1/3}$,
implying the essential optimality of Chiba and Nishizeki's algorithm and of~\cite{KPP15,EppsteinGMT17},
at least for one particular arboricity.
However, \Patrascu's CLB does not extend to any $\alpha \gg m^{1/3}$.
This left open the possibility of
clever algorithms that dramatically improve the $\Omega(m\alpha)$ bound on dense graphs.
For example, in many problems, improvements based on fast matrix multiplication only ``kick in'' when the graph
is sufficiently dense.

Using our new framework we prove an $\Omega(m\alpha^{1-o(1)})$ CLB for Triangle Enumeration,
for all possible arboricities $1\ll \alpha \ll m^{1/2}$.
We emphasize that the number of triangles in these instances, $t$, is polynomially smaller than $m\alpha$,
implying that the hardness does not stem {\em solely} from the output size.
Thus, the Chiba-Nishizeki algorithm and the algorithms of~\cite{KPP15,EppsteinGMT17} are essentially optimal for the entire
spectrum of arboricities. The proof of Theorem~\ref{thm:triangle_lb} is given in Section~\ref{sec:triangle_lb}.

\begin{theorem}\label{thm:triangle_lb}
Assume the \ThreeSUM{} conjecture.
For any constants $x\in (0,1)$ and $y\in (0,1/2)$ such that $x\leq 2y$,
there exists a constant $\epsilon>0$ and a graph with $n$ vertices, $m$ edges,
arboricity $\alpha=\Theta(n^x)=\Theta(m^y)$,
and $t < m\alpha^{1-\epsilon}$ triangles,
such that listing all triangles requires $\Omega(m\alpha^{1-o(1)})$ expected time.
\end{theorem}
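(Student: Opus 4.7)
The plan is to apply Theorem~\ref{thm:improved_reduction_reporting} and translate its lower bound into a triangle-enumeration lower bound via the standard tripartite-graph encoding. Let $N$ denote the \ThreeSUM{} parameter (to avoid collision with the graph's $n$). Given the offline \SetIntersection{} instance $(A,B,C,\text{queries})$, construct a tripartite graph $G$ with one vertex per set in $A$, one per set in $B$, and one per element of $C$; add a membership edge whenever an element lies in a set, and a query edge between the two set-vertices of each query pair. Every triangle of $G$ has the form $(a,b,c)$ with $c \in a \cap b$ and $(a,b)$ a query, and conversely each such $c$ yields exactly one triangle. Hence enumerating all triangles of $G$ solves the offline \SetIntersection{} instance, which by Theorem~\ref{thm:improved_reduction_reporting} requires $\Omega(N^{2-o(1)})$ expected time.

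For a target arboricity exponent $a \in (0, 2/3)$, I pick $\gamma = 1-a$ and $\delta = 2-3a$ in Theorem~\ref{thm:improved_reduction_reporting}. Substituting yields $|V_A| = |V_B| = |V_C| = \Theta(N^{2-2a})$, each set of size $O(N^a)$, $q = \Theta(N^{2-a})$ query pairs, and triangle count $t = O(N^{3a})$. The graph $G$ then has $n = \Theta(N^{2-2a})$ vertices and $m = \Theta(N^{2-a})$ edges, and the choice $\delta = 2-3a$ is precisely what makes $m \cdot N^{a} = N^{2}$, so the $\Omega(N^{2-o(1)})$ lower bound becomes $\Omega(m \alpha^{1-o(1)})$ once $\alpha = \Theta(N^a)$ is established. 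For the arboricity upper bound, orient membership edges toward $V_C$ (out-degree at most the set size $N^a$) and orient query edges from $V_A$ to $V_B$; distributing queries evenly gives out-degree $O(q/|V_A|) = O(N^a)$. The resulting acyclic orientation has maximum out-degree $O(N^a)$, so $\alpha \leq O(N^a)$. For the matching lower bound, the subgraph induced by $V_A \cup V_C$ has $\Theta(N^{2-a})$ edges on $\Theta(N^{2-2a})$ vertices, average degree $\Theta(N^a)$, so $\alpha \geq \Omega(N^a)$. Finally, $t/(m\alpha) = O(N^{3a-2})$, and $a < 2/3$ gives $t = O(m \alpha^{1-\Omega(1)})$.

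This construction produces exponents $x = a/(2-2a)$ and $y = a/(2-a)$, which satisfy $y = x/(1+x)$; this is the curve of maximally dense graphs at a given arboricity exponent. All other $(x, y)$ pairs permitted by the theorem are obtained by padding $G$ with isolated vertices (to decrease $x$ while preserving $m$, $\alpha$, and $t$) or with a few disjoint triangle-free structures (to shift $m$), which degrades none of the quantities in the lower bound. The main obstacle I expect is the arboricity lower bound: one must argue that the randomized set system underlying Theorem~\ref{thm:improved_reduction_reporting} produces sets whose sizes concentrate around $N^a$ rather than merely being bounded by $N^a$, so that the $V_A \cup V_C$ subgraph actually attains density $\Theta(N^a)$. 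A secondary subtlety arises as $a \to 2/3$ (equivalently $x \to 1$, $y \to 1/2$), where the gap between $t$ and $m\alpha$ shrinks, so one must verify that the $\Omega(1)$ in $\alpha^{1-\Omega(1)}$ can be taken as a function of how far $a$ is bounded from $2/3$.
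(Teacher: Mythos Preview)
Your approach is essentially the paper's: reduce \SetIntersection{} to triangle enumeration via the tripartite set-system graph, using Theorem~\ref{thm:improved_reduction_reporting}. Your parameter choice $\gamma=1-a,\ \delta=2-3a$ lands exactly on the boundary case $\gamma = \tfrac{1}{3}+\tfrac{\delta}{3}$ that the paper singles out, so the resulting exponents $(N',M',\alpha')=(N^{2-2a},N^{2-a},N^{a})$ coincide with the paper's $(n^{2\gamma},n^{1+\gamma},n^{1-\gamma})$. The padding argument for reaching other $(x,y)$ pairs is also the same.

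There is, however, a real gap in your arboricity \emph{upper} bound. You write ``orient query edges from $V_A$ to $V_B$; distributing queries evenly gives out-degree $O(q/|V_A|)=O(N^a)$.'' Theorem~\ref{thm:improved_reduction_reporting} only tells you the \emph{total} number of query pairs is $\Theta(N^{1+\gamma})$; it says nothing about how many queries touch any individual set in $A$. A single vertex $a\in V_A$ could have $\gg N^a$ query edges, and no orientation of the bipartite query graph repairs that. The paper confronts this directly: it observes that $\alpha(G)$ need not be $O(n^{1-\gamma})$ and performs a \emph{triangle-preserving vertex split}, replacing each high-degree $a\in A$ by $\lceil |E(a,B)|/n^{1-\gamma}\rceil$ copies, each carrying $O(n^{1-\gamma})$ query edges plus the full set of $O(n^{1-\gamma})$ membership edges to $C$. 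This costs only $O(n^{2\gamma})=O(N^{2-2a})$ extra vertices and $O(M)$ extra edges, so the parameters are unchanged, but now the orientation you describe is legitimate. You should insert this splitting step (or else argue from the internals of the reduction---balance of $h_2^\uparrow$---that query degrees concentrate, which is messier).

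By contrast, the obstacle you flag---the arboricity \emph{lower} bound---is not where the difficulty lies. Once you know $M'=\Theta(N^{2-a})$ and $N'=O(N^{2-2a})$, the Nash-Williams bound $\alpha \ge M'/(N'-1)$ immediately gives $\alpha'=\Omega(N^{a})$; no concentration of set sizes is needed.
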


For a comparison between the known upper and conditional lower bounds, see Figure~\ref{fig:triangle-bounds}.

\paragraph{Output-sensitive triangle enumeration algorithms.} Another approach for enumerating triangles takes $t$ as a parameter
and enumerates up to $t$ triangles, even if the graph contains more.  The time for such algorithms has a one-time cost
(depending on graph parameters such as $n$ and $m$, but not $t$), and a cost \emph{per triangle}.
An algorithm of Bj{\o}rklund, Pagh, Williams, and Zwick~\cite{BjorklundPWZ14} shows that if the matrix multiplication exponent is
$\omega =2$, then listing up to $t$ triangles takes $\tilde{O}(\min\{n^2 + nt^{2/3}, \; m^{4/3} + mt^{1/3}\})$ time.
Notice that this runtime can be expressed as paying either a one-time cost of $n^2$ or $m^{4/3}$ and a per triangle cost of
$\frac{n}{t^{1/3}}$ or $\frac{m}{t^{2/3}}$.

We prove that, assuming the \ThreeSUM{} conjecture and assuming $\omega=2$, this per triangle cost is essentially optimal.
This lower bound is obtained by considering the extreme case of listing all triangles in the graph, which by Theorem~\ref{thm:triangle_lb} requires $\Omega(m\alpha^{1-o(1)})$ expected time, combined with controlling the number of triangles in the graph so that $t= \alpha^3$.
Such a result seems unobtainable in \Patrascu's framework since the corresponding graphs have arboricity at most $m^{1/3}$.
The following theorem is proved in Section~\ref{sec:triangle_lb}.

\begin{theorem}\label{thm:triangle_listing_lb}
Assume the \ThreeSUM{} conjecture.
Then any algorithm for listing $t$ triangles whose runtime is expressed in terms of the number of edges $m$ must take
$\Omega(\min\{m^{3/2-o(1)}, t\cdot (\frac{m}{t^{2/3}})^{1-o(1)}\})$ expected time.
If the runtime of the algorithm is expressed in terms of the number of vertices $n$ it must take $\Omega(\min\{n^{3-o(1)}, t\cdot (\frac{n}{t^{1/3}})^{1-o(1)}\})$ expected time.
\end{theorem}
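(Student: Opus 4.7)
The plan is to derive Theorem~\ref{thm:triangle_listing_lb} from Theorem~\ref{thm:triangle_lb} by choosing the arboricity of the hard instance so that the graph contains exactly $\Theta(t)$ triangles. Given a target pair $(m,t)$ with $t \leq O(m^{3/2})$, set $\alpha := t^{1/3}$, which automatically respects the arboricity bound $\alpha = O(\sqrt{m})$. Writing $\alpha = \Theta(m^y)$ with $y = \tfrac{1}{3}\log_m t$, we have $y \leq 1/2$ whenever $t \leq m^{3/2}$, so Theorem~\ref{thm:triangle_lb} applies with this $y$ (and a matching $x$ since $\alpha \leq n$). For the $n$-parameterized bound one further arranges $m = \Theta(n\alpha)$, i.e., the instance is dense relative to its arboricity.

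The second step is to verify that the construction underlying Theorem~\ref{thm:triangle_lb} can be made to produce a graph with $\Theta(\alpha^3)$ triangles, rather than only the upper bound $O(m\alpha^{1-\Omega(1)})$ stated there. That reduction invokes Theorem~\ref{thm:improved_reduction_reporting}, in which the total intersection size is a tunable parameter $O(n^{2-\delta})$ and each reported intersection element becomes a triangle. So one picks $\delta$ so that the total intersection size equals $\Theta(\alpha^3)$. This is feasible whenever $\alpha^3 \leq O(m\alpha^{1-\epsilon})$ for some $\epsilon>0$, i.e., $\alpha^{2+\epsilon}\leq m$, which holds for any $y$ strictly below $1/2$, i.e., any $t$ strictly below $m^{3/2}$.

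With such a graph in hand, Theorem~\ref{thm:triangle_lb} yields a lower bound of $\Omega(m\alpha^{1-o(1)})$ on enumerating all triangles; since the graph contains $\Theta(t)$ triangles, this is also the cost of listing $t$ of them. Substituting $\alpha = t^{1/3}$ rewrites the bound as
\[
m\,\alpha^{1-o(1)} \;=\; m\,t^{(1-o(1))/3} \;=\; t\cdot(m/t^{2/3})^{1-o(1)},
\]
which is the claimed per-triangle cost expressed in $m$. For the $n$-parameterization, $m=\Theta(n\alpha)$ gives $m\alpha^{1-o(1)} = n\alpha^{2-o(1)} = n\,t^{2/3-o(1)} = t\cdot(n/t^{1/3})^{1-o(1)}$. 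The outer $\min$ with $m^{3/2-o(1)}$ (respectively $n^{3-o(1)}$) absorbs the degenerate regime $t \geq m^{3/2}$ (respectively $t \geq n^3$), where these caps are the maximum possible triangle counts.

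The main obstacle is the tuning step: Theorem~\ref{thm:triangle_lb} on its face only gives an upper bound $O(m\alpha^{1-\Omega(1)})$ on the triangle count, whereas a per-triangle lower bound requires a graph with $\Theta(\alpha^3)$ triangles. Confirming that $\delta$ in Theorem~\ref{thm:improved_reduction_reporting} can be selected to force this triangle count, and that the resulting $(\gamma,\delta)$ pair stays in the admissible range ($0\leq \gamma<1$, $\delta>0$), is the crucial verification one needs to perform.
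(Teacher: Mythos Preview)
Your proposal is correct and follows essentially the same approach the paper sketches just before stating Theorem~\ref{thm:triangle_listing_lb}: invoke Theorem~\ref{thm:triangle_lb} on an instance tuned so that the number of triangles is $t=\alpha^3$, then rewrite the $\Omega(m\alpha^{1-o(1)})$ bound as $t\cdot(m/t^{2/3})^{1-o(1)}$ (and, using $m=\Theta(n\alpha)$, as $t\cdot(n/t^{1/3})^{1-o(1)}$). Your identification of the tuning step---choosing $\delta$ in Theorem~\ref{thm:improved_reduction_reporting} so that $n^{2-\delta}\le \alpha^3=n^{3(1-\gamma)}$, i.e.\ $\delta\approx 3\gamma-1$, which lands exactly on the boundary $\gamma=1/3+\delta/3$ used in the proof of Theorem~\ref{thm:triangle_lb}---is precisely the point the paper leaves implicit.
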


In particular, Theorem~\ref{thm:triangle_listing_lb} implies that if we do not spend $\Omega(m^{3/2})$ time for listing just $t$ triangles (which is enough time to report all triangles), then the time per triangle must be $\Omega((\frac{m}{t^{2/3}})^{1-o(1)})$.

\subsection{Conditional Lower Bounds on Graph and Pattern Matching Problems}

We also prove polynomial CLBs for data structure versions of Document Retrieval problems,
Maximum Cardinality Matching (improving~\cite{AW14}),
 $d$-failure Connectivity Oracles, and Distance Oracles for Colors.
The new CLB for Maximum Cardinality Matching
is of particular interest since it introduces new techniques for
obtaining \emph{amortized} lower bounds; see Section~\ref{section:MCM}.

\paragraph{Maximum Cardinality Matching.}
In the \textit{Dynamic Maximum Cardinality Matching problem} we are interested in maintaining a dynamic graph $G=(V,E)$, with $n=|V|$ and
$m=|E|$, to support maximum cardinality matching (MCM) queries, which report the size of the current MCM.
When both insertions and deletions are supported we say that $G$ is fully dynamic, while if only insertions are supported we say that $G$ is \emph{incremental}. The trivial algorithm for updating an MCM takes $O(m)$ time by finding an augmenting path~\cite{GT91,GT85}.
Sankowski~\cite{Sankowski07} gave a fully dynamic algorithm with an amortized time bound of $O(n^{1.495})$
based on fast matrix multiplication.
In the bipartite vertex-addition model, where vertices on one side of the graph arrive online with all of their edges,
Bosek, Leniowski, Sankowski, and Zych~\cite{BLSZ14} recently showed how to maintain a maximum
cardinality matching whose total update time is $O(m\sqrt n)$ time; see also~\cite{BernsteinHR18}.

Abboud and Vassilevska Williams~\cite{AW14} showed that,
based on the \ThreeSUM{} conjecture, either the preprocesing time of Dynamic MCM is $\Omega(m^{4/3-o(1)})$,
the amortized update time is $\Omega(m^{\alpha-o(1)})$, or the amortized query time is $\Omega(m^{2/3-\alpha-o(1)})$,
for any $\alpha \in [1/6,1/3]$.
In our setting we will require the size of the MCM to be reported after each update, and so the CLB of~\cite{AW14} implies that if the preprocessing time $t_p$ is $O(m^{4/3-\Omega(1)})$ then the update time $t_u$ is $\Omega(m^{1/3-o(1)})$.
Using Theorem~\ref{thm:improved_reduction} we are able to prove the following.  Refer to Section~\ref{sec:MCM_prep}
for proof of Theorem~\ref{thm:dynamic-MCM-preprocessing}.

\begin{theorem}\label{thm:dynamic-MCM-preprocessing}
Assume the \ThreeSUM{} conjecture, and fix any $\gamma \in (0,1)$.
Suppose a fully dynamic MCM algorithm is given an MCM of the initial graph at preprocessing,
and let $t_p$ and $t_u$ be its preprocessing and update times, respectively.  Then
\[
t_p+m^{\frac{1+\gamma}{2-\gamma}}\cdot t_u = \Omega(m^{\frac 2{2-\gamma} - o(1)}).
\]
Moreover, the same bound holds even for the class of {\em approximate} MCM algorithms that report
the size of some matching without length-7 augmenting paths.
\end{theorem}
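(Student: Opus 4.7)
}
The plan is to reduce the offline \SetDisjointness{} instance produced by Theorem~\ref{thm:improved_reduction} to a sequence of edge updates and MCM-size readings on a single bipartite graph, arranged so that each of the $q=\Theta(n^{1+\gamma}\log n)$ disjointness queries costs only a \emph{constant} number of edge updates. Since Theorem~\ref{thm:improved_reduction} forces the whole sequence to take $\Omega(n^{2-o(1)})$ time and the graph will have $m=\tilde\Theta(n^{2-\gamma})$ edges, rewriting $n$ in terms of $m$ yields exactly the tradeoff $t_p+m^{(1+\gamma)/(2-\gamma)}t_u=\Omega(m^{2/(2-\gamma)-o(1)})$.

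The gadget is a balanced bipartite graph whose ``baseline'' edges all lie in the initial matching, so this matching can be handed to the preprocessor for free. For each $c\in C$ I put a pair $(c_L,c_R)$ joined by an edge. For each $a\in A$ I create a ``query vertex'' $v_a$ on the left paired with a \emph{private dummy} $d_a$ on the right whose \emph{only} edge is $(v_a,d_a)$, together with the set-membership edges $(v_a,c_R)$ for every $c\in a$. Symmetrically, for each $b\in B$ there is a query vertex $v_b$ on the right, a private dummy $e_b$ on the left with sole edge $(e_b,v_b)$, and edges $(c_L,v_b)$ for every $c\in b$. The baseline edges $\{(c_L,c_R),(v_a,d_a),(e_b,v_b)\}$ form a perfect matching of size $|C|+|A|+|B|$.

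Each \SetDisjointness{} query $(a^*,b^*)$ is implemented by deleting the two baseline edges $(v_{a^*},d_{a^*})$ and $(e_{b^*},v_{b^*})$, reading the MCM size, and reinserting both edges. The central claim to verify is that the MCM drops by exactly $1$ if $a^*\cap b^*\neq\emptyset$ and by exactly $2$ otherwise. After the two deletions, $d_{a^*}$ and $e_{b^*}$ become isolated and can never be rematched, accounting for $-2$; the matching recovers $+1$ iff $v_{a^*}$ and $v_{b^*}$ can be connected by an augmenting path, and the only possible such path is $v_{a^*}\to c_R\to c_L\to v_{b^*}$ for some $c\in a^*\cap b^*$. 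The main technical step I anticipate is ruling out longer augmenting walks: any alternating walk from $v_{a^*}$ that tries to pass through $v_b$ for $b\neq b^*$ is forced by the current matching into the dummy $e_b$, which has no other edge and dead-ends the walk. This is precisely the reason the dummies are built to have degree one, and is the one piece of the construction that really has to be checked.

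The counting is then immediate: $m=|C|+|A|+|B|+\sum_a|a|+\sum_b|b|=\tilde\Theta(n^{2-\gamma})$, dominated by the set-membership edges, and the reduction performs $O(q)=\tilde O(n^{1+\gamma})=\tilde O(m^{(1+\gamma)/(2-\gamma)})$ updates; absorbing polylogarithmic factors into the $o(1)$ gives the stated tradeoff. For the approximate version, the decisive augmenting path has length exactly $3$, so any reported matching that has no augmenting path of length $\leq 7$ must already have incorporated it whenever $a^*\cap b^*\neq\emptyset$; the reported sizes in the two cases therefore remain distinct by $1$, and the same lower bound transfers verbatim.
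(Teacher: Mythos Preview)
Your proposal is correct and follows essentially the same approach as the paper. The underlying bipartite gadget you build is isomorphic to the paper's graph (their $a',a'',b',b'',c_A,c_B$ are your $v_a,d_a,v_b,e_b,c_R,c_L$), and the edge count and query count match exactly, so the tradeoff calculation is identical.

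There is one small but pleasant difference worth noting. The paper simulates a query by \emph{inserting} two edges from fixed auxiliary vertices $x,y$ to $a'',b''$ and then deleting them; the decisive augmenting path is then $x$--$a''$--$a'$--$c_A$--$c_B$--$b'$--$b''$--$y$ of length~$7$, which is why the ``no length-$7$ augmenting path'' clause appears in the theorem. You instead \emph{delete} the baseline edges $(v_{a^*},d_{a^*}),(e_{b^*},v_{b^*})$ and reinsert them, which isolates the dummies and makes the decisive augmenting path $v_{a^*}$--$c_R$--$c_L$--$v_{b^*}$ of length~$3$. Both cost four updates per query, so the asymptotics are unchanged; your variant simply avoids the auxiliary vertices and shortens the witness path, which if anything makes the approximate-MCM clause easier to justify (any guarantee ruling out length-$7$ augmenting paths certainly rules out length-$3$ ones). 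The one place you are a bit quick is the approximate case: to conclude that the reported size \emph{distinguishes} the two cases you need that every non-maximum matching in the post-deletion graph has a short augmenting path, not just the baseline one. This follows from your degree-$1$ dummy structure (any alternating walk that leaves the ``canonical'' pattern is forced into a dummy in at most two more steps), and the paper is equally terse on this point, but it is the step that would deserve a sentence of justification in a full write-up.
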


By having $\gamma$ approach $0$, one implication of Theorem~\ref{thm:dynamic-MCM-preprocessing} is that
if $t_p = O(m^{1+o(1)})$ then the update time must be $\Omega(m^{1/2-o(1)})$.
This improves on the results of Abboud and Vassilevska Williams~\cite{AW14}.\footnote{They~\cite{AW14} also get CLBs on approximate MCM algorithms that eliminate short augmenting
paths, but from different conjectures concerning the complexity of triangle detection and combinatorial BMM.}
Guaranteeing the absence of short augmenting paths is one way to achieve a provably good approximation to the MCM; see~\cite{NS13}.

\paragraph{Incremental Maximum Cardinality Matching.}
In this setting we consider an initially empty graph $G$ and so there is no preprocessing phase.
Abboud and Vassilevska Williams~\cite{AW14} show that their lower bounds for fully dynamic MCM extend to incremental MCM, but only for
{\em worst case} time bounds.  They highlight the difficulty of obtaining amortized lower bounds using their approach.
The worst-case lower bounds of Abboud and Vassilevska Williams~\cite{AW14} can be phrased in
terms of $\hat n$, the number of {\em vertices} when an operation takes place.
Either the update or query time is $\Omega({\hat n}^{1/2-o(1)})$.
It is straightforward to show that if the graph is allowed to grow with each query
then one can obtain an amortized expected $\Omega(\hat n^{1/3-o(1)})$ lower bound.

We focus on improving the amortized lower bound for incremental MCM in terms of $\hat n$,
using Theorem~\ref{thm:improved_reduction}.
Our strategy is to answer \SetDisjointness{} queries using an incremental dynamic MCM algorithm.
The construction has
the property that queries can be simulated by two vertex insertions and two edge insertions
and then examining the change in the size of the MCM.  There are two ways to undo these
four insertions, (i) rolling back the state of the data structure to its original state, or (ii) inserting two more vertices and two more edges.
By dynamically choosing which of (i) or (ii) to employ we can better control the total number of vertices inserted into the graph,
and therefore get better lower bounds in terms of $\hat{n}$.

\begin{theorem}\label{thm:MCM}
Assume the \ThreeSUM{} conjecture.
Any algorithm for incremental MCM has amortized expected update time of $\Omega(\hat{n}^{\frac{\sqrt {17} -1} 8 -o(1)}) = \Omega(\hat{n}^{0.3903-o(1)}) $
where $\hat{n}$ is the number of vertices in the graph following the update.
\end{theorem}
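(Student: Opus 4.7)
The plan is to reduce offline \SetDisjointness{} (Theorem~\ref{thm:improved_reduction}) to incremental MCM via a bipartite gadget, then amortize the cost by dynamically choosing between two undo mechanisms. Given a \SetDisjointness{} instance $(C,A,B,\{(S_i,S'_i)\}_{i=1}^q)$ with parameters as in Theorem~\ref{thm:improved_reduction}, I will build an initial graph $G_0$ as follows: for each element $c\in C$ include a pair of vertices $c_A,c_B$ and an edge $(c_A,c_B)$; for each $S\in A$ (resp.\ $S'\in B$) include a set vertex $u_S$ (resp.\ $v_{S'}$) with a pendant $u_S^*$ (resp.\ $v_{S'}^*$) joined by an edge, and edges from $u_S$ to $c_A$ for every $c\in S$ (resp.\ from $v_{S'}$ to $c_B$ for every $c\in S'$). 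The MCM of $G_0$ equals $|C|+|A|+|B|$ and is realized by the obvious matching (element pairs and set--pendant pairs); building $G_0$ takes $n_0=\Theta(n^{2-\gamma}\log n)$ insertions.

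I will simulate the $i$-th \SetDisjointness{} query by inserting two fresh vertices $\tilde u_i,\tilde v_i$ and two edges $(\tilde u_i,u_{S_i}^*),(\tilde v_i,v_{S'_i}^*)$, then querying the MCM size. For any $c\in S_i\cap S'_i$ the path $\tilde u_i-u_{S_i}^*-u_{S_i}-c_A-c_B-v_{S'_i}-v_{S'_i}^*-\tilde v_i$ has length seven, alternates matched and unmatched edges, and has two unmatched endpoints, hence is augmenting; conversely, when $S_i\cap S'_i=\emptyset$ one checks that no alternating walk from $\tilde u_i$ or $\tilde v_i$ can terminate at an unmatched vertex. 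Therefore the MCM increases by exactly one iff $S_i\cap S'_i\neq\emptyset$, so a single MCM-size query decides the \SetDisjointness{} instance.

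Between consecutive queries I must restore a clean state, and I will choose between (i) \emph{rollback} --- discard the current MCM data structure and reinitialize on a fresh copy of $G_0$ at cost $\Theta(n_0)$ insertions --- or (ii) \emph{extension} --- insert two further vertices $\tilde u'_i,\tilde v'_i$ with edges $(\tilde u'_i,\tilde u_i),(\tilde v'_i,\tilde v_i)$, paying four more insertions and growing the graph permanently by four vertices and edges. In option (ii) the new pendants absorb $\tilde u_i,\tilde v_i$ into the matching and ``consume'' the augmenting path, leaving the graph behaviorally equivalent to $G_0$ (up to the four extra matched vertices) for any subsequent query. Grouping the $q$ queries into $r$ phases of $\approx q/r$ each, using (ii) within a phase and (i) between phases, the total number of insertions is $N=\Theta(rn_0+q)$ and the maximum vertex count ever attained in a single data-structure instance is $\hat n=\Theta(n_0^{\mathrm v}+q/r)$.

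Since $\Omega(n^{2-o(1)})$ work is forced by Theorem~\ref{thm:improved_reduction}, the amortized cost per insertion is at least $n^{2-o(1)}/N$. The main technical obstacle is optimizing $\gamma$, the 3SUM size $n$, and the phase count $r$ against a prescribed target $\hat n$. A uniform (non-adaptive) choice of $r$ together with the symmetric $\gamma=1/2$ only yields the weaker $\hat n^{1/3}$ bound, and the improvement to $\sqrt 2-1$ requires an asymmetric $\gamma$ and an adaptively chosen $r$ that balances the two insertion sources $rn_0$ and $q$ against the two vertex sources $n_0^{\mathrm v}$ and $q/r$; the resulting balance reduces to the quadratic $x^2+2x-1=0$, whose positive root $x=\sqrt 2-1$ is the final exponent. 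A careful adversary strategy that decides between (i) and (ii) as a function of the algorithm's running state then matches this exponent at every achievable $\hat n$, yielding the claimed $\Omega(\hat n^{\sqrt 2-1-o(1)})$ amortized lower bound.
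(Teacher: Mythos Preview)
Your gadget construction and the two undo mechanisms are the right ingredients, but there is a genuine gap in how you combine them, and your rollback mechanism (i) is not the one that makes the argument work.

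\textbf{The rollback is wrong.} You define rollback as ``discard the data structure and reinitialize on a fresh copy of $G_0$ at cost $\Theta(n_0)$ insertions.'' The paper's rollback is completely different: since any incremental data structure can undo its most recent insertions by replaying the memory modifications in reverse, one can revert the eight operations (four vertex and four edge insertions) in time equal to whatever those eight operations actually took, and this leaves the data structure in \emph{exactly} its prior state. The crucial consequence is that rolled-back operations are not part of the ``official'' operation sequence on which the amortized guarantee is asserted, so they neither consume nor replenish amortization credit. Your reinitialization, by contrast, performs $\Theta(n_0)$ genuine insertions and hands the algorithm a fresh credit pool each time; this defeats the purpose of rolling back.

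\textbf{The phase analysis only gives $1/3$.} With your non-adaptive grouping into $r$ phases, the total insertion count is $N=\Theta(rn_0+q)$ and the maximum vertex count is $\hat n=\Theta(n_0^{\mathrm v}+q/r)$. Assuming amortized cost $\hat n^\alpha$, the total cost is at most $N\hat n^\alpha$, and requiring this to be $\Omega(n^{2})$ and optimizing over $r$ and $\gamma$ yields only $\alpha\ge 1/3$ (attained at $\gamma=1/2$, $r=1$); this is exactly the ``Creating Perfect Matchings'' bound the paper mentions as a baseline. Your closing sentence acknowledges that an adaptive choice is needed but does not supply one, and with rollback-as-reinitialization no adaptive strategy recovers the missing exponent.

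\textbf{What the paper actually does.} Assume the algorithm has amortized cost $\hat n^\alpha$. For each query, perform all eight insertions and measure their \emph{actual} running time. If it is below $9\hat n^\alpha$, roll them back (by memory reversal); otherwise keep them. Each kept query exceeded its amortized budget $8\hat n^\alpha$ by at least $\hat n^\alpha$, so it consumed at least $\hat n^\alpha$ of the credit accumulated during the $\Theta(n^{2-\gamma})$ setup insertions. Since the setup credit is $O(n^{2-\gamma}\cdot n^\alpha)$ and every kept query spends at least $(n+i)^\alpha$ of it, summing gives a bound on the final vertex count $N=O(n^{(2-\gamma+\alpha)/(1+\alpha)})$. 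The rolled-back queries cost $O(n^{1+\gamma}N^\alpha)$ in total, and the surviving sequence costs $O(N^{1+\alpha})$; setting $\gamma=\frac{1}{2+\alpha}$ balances the two terms and forces $\alpha\ge\sqrt2-1$. This credit-depletion argument is the missing idea in your proposal.
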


The proof of Theorem~\ref{thm:MCM} appears in Section~\ref{section:MCM}.
Dahlgaard~\cite{Dahlgaard16} later proved a conditional lower bound of
$\Omega(n^{1-o(1)})$  on incremental MCM, but from the \OMv{} conjecture.

\paragraph{Organization of the Paper.}
In Section~\ref{section:improved_reduction} we prove
Theorems~\ref{thm:improved_reduction} and \ref{thm:improved_reduction_reporting}
that reduce \ThreeSUM{} to \SetDisjointness{} and \SetIntersection.
In Section~\ref{sec:triangle_lb} we apply these results to the \emph{triangle enumeration} problem,
and prove Theorems~\ref{thm:triangle_lb} and \ref{thm:triangle_listing_lb}
concerning the optimality of the Chiba-Nishizeki~\cite{CN85} algorithm
and the \Bjorklund{} et al.~\cite{BjorklundPWZ14} algorithm, respectively.
Section~\ref{sect:MCM} presents proofs of Theorems~\ref{thm:dynamic-MCM-preprocessing}
and \ref{thm:MCM}
on fully dynamic cardinality matching
and incremental cardinality matching, respectively.
In Section~\ref{sec:applications} we provide a number of new conditional lower bounds
for \emph{online} \SetIntersection{} data structures, $d$-failure connectivity oracles, and
various document retrieval problems.  Finally, in Section~\ref{app:proof_conv3sum}, we prove
Theorem~\ref{thm:conv3sum} on the near-equivalence between
the randomized complexities of \ThreeSUM{} and \ConvolutionThreeSUM.

\section{The Main Reductions --- Proofs of Theorems~\ref{thm:improved_reduction} and \ref{thm:improved_reduction_reporting}}\label{section:improved_reduction}

Let $\mathcal{H}$ be a family of hash functions from $[u] \rightarrow [m]$.
We call $\mathcal{H}$ \emph{linear} if
\begin{align*}
&\mbox{for any $h\in\mathcal{H}$ and any $x,x' \in [u]$, we have $h(x) + h(x') \equiv h(x+x') \; (\modulo m)$.}\\
\intertext{and call $\mathcal{H}$ \emph{balanced} if}
&\mbox{for any $h\in \mathcal{H}$ and any $S\subset [u]$, $\card{\{x\in S : h(x) = i\}} \leq \frac{3|S|}{m}$.}
\end{align*}
Note that \emph{no} hash family is balanced according to this definition (unless $m=O(1)$).  Nonetheless,
in the first part of the proof we will engage in a little magical thinking, and suppose that for any $u$ and $m$
a \emph{hash fairy} can summon for us a hash family $\mathcal{H}$ that is magically linear, balanced, and pairwise independent.
In Section~\ref{sect:almost_linear_and_balanced} we argue that a modified version of the proof works
even if $\mathcal{H}$ is \emph{almost} linear, \emph{almost} balanced, and (exactly) pairwise independent,
and exhibit a specific hash function that satisfies these properties.

\begin{proof}[Combined proof of Theorem~\ref{thm:improved_reduction} and Theorem~\ref{thm:improved_reduction_reporting}]
Since the proofs of both theorems follow a similar path we describe them together.
We are looking for three distinct input elements $x,y,z$ such that $x-y=z$.
Let $R=n^{\gamma}$
and set $Q = (5n/R)^2$ in Theorem~\ref{thm:improved_reduction} and
$Q = n^{1+\delta}/R$ in Theorem~\ref{thm:improved_reduction_reporting}.
Without loss of generality we assume that $\sqrt Q$ is an integer.
Finally, we assume that the input is drawn from the integer universe $[2^w]$,
where $w=\Omega(\log n)$ is the machine's word length.

We pick a random hash function $h_1 : [2^w] \rightarrow [R]$ from a family that is linear and balanced.
Using $h$, we create $R$ buckets $\mathcal{B}_1,\ldots, \mathcal{B}_{R}$
such that $\mathcal{B}_i=\{x:h_1(x)=i\}$.
Since $h_1$ is balanced, each bucket contains at most $3n/R$ elements.
This bucketing is similar to \Patrascu's reduction~\cite{Patrascu10}.

Next, we pick a random hash function $h_2: [2^w] \rightarrow [Q]$ where
$h_2$ is chosen from a pair-wise independent and linear family.
For each bucket $\mathcal{B}_i$ we create $2\sqrt{Q}$ shifted sets as follows.
For each $j \in [0,\sqrt{Q})$
let
\begin{align*}
			&\mathcal{B}_{i,j}^{\uparrow} 	= \{(h_2(x)+ j\cdot \sqrt{Q}) \, \modulo Q \;|\; x\in \mathcal{B}_i\}\\
\mbox{ and \ } &\mathcal{B}_{i,j}^{\downarrow} 	= \{(h_2(x)- j) \, \modulo Q \;|\; x\in \mathcal{B}_i\}.
%
\intertext{Next, for each $z\in A$ we want to determine if there exists $x$ and $y$ in $A$ such that $x-y=z$. To do this we utilize the linearity of $h_1$ and $h_2$,
which implies that}
%
			&h_1(x) - h_1(y) \equiv h_1(z) \, (\modulo R)\\
\mbox{ and \ } 	&h_2(x) - h_2(y) \equiv h_2(z) \, (\modulo Q).
\end{align*}
If $x\in \mathcal{B}_i$ then $y$ must be in bucket $\mathcal{B}_{i-h_1(z) \,\modulo R}$.
Thus, for each $i\in [R]$ we would like check the intersection
$\mathcal{B}_i \cap (\mathcal{B}_{i-h_1(z) \,\modulo R}+z)$
to find candidate pairs $x,y$ for which $x-y=z$.
Denote the high-order and low-order halves of $h_2$ by
\begin{align*}
			h_2^{\uparrow}(z) &= \floor{\frac{h_2(z)}{\sqrt{Q}}}\\
\mbox{ and \ } h_2^{\downarrow}(z) &= h_2(z) \, \modulo \sqrt{Q}.
\end{align*}
Due to the linearity of $h_2$, every element in the intersection of
\begin{align}
& \mathcal{B}_i \cap (\mathcal{B}_{i-h_1(z) \,\modulo R}+z) 		\label{eqn:intersect}
\intertext{has a corresponding element in the intersection of}
&\paren{\mathcal{B}_{i,h_2^{\downarrow}(z)}^{\downarrow}} \cap \paren{\mathcal{B}_{i-h_1(z) \,\modulo R,h_2^{\uparrow}(z)}^{\uparrow}}.  \label{eqn:h2intersect}
\end{align}
Of course, the reverse direction is not true since taking the projection of these sets under $h_2$
may introduce false positives into (\ref{eqn:h2intersect}) that were not present in (\ref{eqn:intersect}).
Nonetheless, if (\ref{eqn:h2intersect}) is empty then (\ref{eqn:intersect}) is empty as well, meaning
there are no \ThreeSUM{} witnesses involving $z$ and any $x\in \mathcal{B}_i$.
The number of set intersection queries is $nR$ since there are $n$ choices for $z$ and $R$ choices for $i$.

Fix $z$ and let $k = h_2(z)$.
Since $h_2$ is pairwise independent and linear then for any pair $x,y\in U$ where $x\neq y$ we have that if $x-y\neq z$ then
\[
\Pr[h_2(x) - h_2(y) = k] = \Pr[h_2(x-y) = h_2(z)] =\frac{1}{Q}.
\]
This is where the proofs of the two theorems diverge.

\paragraph{Details for Theorem~\ref{thm:improved_reduction}.}
Since each bucket contains at most $3n/R$ elements, the probability of a false positive stemming from
two buckets $\mathcal{B}_i,\mathcal{B}_j$ and a given offset $k=h_2(z)$ is, by a union bound,
\[
\Pr[h_2(\mathcal{B}_i) \cap (h_2(\mathcal{B}_j)+k) \neq \emptyset] \leq \paren{\frac{3n}{R}}^2 \frac 1 {Q} = \frac 9 {25}.
\]
(Recall that $Q=(5n/R)^2$.)
In order to reduce the probability of false positives, we repeat the process with $O(\log n)$
different choices of $h_2$, but using the same $h_1$.
This blows up the number of sets by a factor of $O(\log n)$, but not the universe.
If the sets intersect under all $O(\log n)$ choices of $h_2$ then we spend
$O(n/R)$ time to find $x$ and $y$ within buckets $\mathcal{B}_i$ and $\mathcal{B}_j$,
which is either part of a \ThreeSUM{} witness (and the algorithm halts),
or a false positive, which only occurs with probability $1/\mbox{poly}(n)$.

\paragraph{Details for Theorem~\ref{thm:improved_reduction_reporting}.}
We bound the expected number of false positives.
Each pair of buckets defines at most $(\frac {3n} R)^2$ pairs of elements, one from each bucket,
so the expected number of false positives arising from this intersection is
\[
E[|h_2(\mathcal{B}_i) \cap (h_2(\mathcal{B}_j)+k)|] = \paren{\frac {3n} R}^2 \frac 1 {Q} = O\paren{\frac{n^{1-\delta}}{R}},
\]
since $Q = \Theta(n^{1+\delta}/R)$.
Thus, the expected number of false positives over all
$O(nR)$ intersections is
$O(nR\frac{n}{Rn^\delta}) = O(n^{2-\delta}) \le c\cdot n^{2-\delta}$ from some constant $c>0$.
It takes constant time to verify that a pair $x,y$ is a false positive rather than (part of) a valid \ThreeSUM{} witness.
If the number of verifications exceeds $2c\cdot n^{2-\delta}$ without finding a \ThreeSUM{} witness,
which, by Markov's inequality, happens with probability at most $1/2$, then we restart the entire algorithm with
fresh hash functions.

\paragraph{Final details.}
To summarize, for \SetDisjointness{} (\SetIntersection{}) we create a total of $O(R\sqrt{Q}\log n)$ sets ($O(R\sqrt{Q})$ sets). These sets are partitioned into two families $A$ and $B$ where all of the $\uparrow$-type sets are in $A$ and all of the $\downarrow$-type sets are in $B$. All of the intersections we are interested in are between a set from $A$ and a set from $B$. The universe $U$ of the elements in the sets is of size $Q$. The family $\mathcal{F}$ is $A\cup B$.
The number of queries is $O(nR\log n) = O(n^{1+\gamma}\log n) $  ($O(nR) = O(n^{1+\gamma}) $).

This concludes the proof of Theorems~\ref{thm:improved_reduction} and \ref{thm:improved_reduction_reporting},
under the simplifying assumption that $h_1,h_2$ are magically linear and balanced, which is addressed
in Section~\ref{sect:almost_linear_and_balanced}.
\end{proof}

\subsection{Almost Linear and Almost Balanced Hashing}\label{sect:almost_linear_and_balanced}

We now describe how to overcome the assumption that there exist pair-wise independent hash functions that are
magically linear and balanced by relaxing both definitions.
A family $\mathcal{H}$ of hash functions from $[u] \rightarrow [m]$
is called {\em almost linear} if for any $h\in\mathcal{H}$ there exists an integer
$c_h$ such that for any $x,x' \in [u]$,
\begin{align*}
h(x) + h(x') 	&\equiv h(x+x')  + c_h + \{-1,0,1\} \: (\modulo m)
\end{align*}
I.e., it is linear, up to a offset $c_h$ that depends on $h$ and a $\pm 1$ error.
Given a hash function $h\in \mathcal{H}$ we say that a value $i\in m$ is \emph{heavy} for set
$S \subseteq [u]$ if
\[
\card{\{x\in S : h(x) = i\}} > \frac{3|S|}m,
\]
i.e., 3 times more than the expected load.
$\mathcal{H}$ is called {\em almost balanced} if for any set $S\subseteq [u]$,
the expected number of elements from $S$ that are hashed to heavy values is $O(m)$.

We emphasize that the notion of almost linearity that we use here is more general than the one used
by Baran et al.~\cite{BaranDP08}, \Patrascu~\cite{Patrascu10}, and Section~\ref{app:proof_conv3sum} of this paper.
Whereas Section~\ref{app:proof_conv3sum} can use any $O(1)$-universal
almost linear hash family, here we \emph{require} a pairwise independent (and hence exactly 1-universal) almost linear hash family.
The hash family~\cite{DietzfelbingerHKP97}
proposed by Baran et al.~\cite{BaranDP08} and \Patrascu~\cite{Patrascu10} is almost linear (under Section~\ref{app:proof_conv3sum}'s definition)
but it is only known to be 2-universal~\cite[Lem.~2.4]{DietzfelbingerHKP97}
and is definitely not pairwise independent.  This issue was also noted in~\cite{JafargholiV13}.

We will show that there exists a family $\mathcal{H}$ of pairwise independent hash functions that is almost linear and almost balanced,
which is suitable for use in the reductions of Theorems~\ref{thm:improved_reduction} and~\ref{thm:improved_reduction_reporting}.
Each step in the proofs of Theorems~\ref{thm:improved_reduction} and~\ref{thm:improved_reduction_reporting}
that used linearity is replaced by three parallel steps making use of the almost linearity of $\mathcal{H}$.
The reduction algorithm must consider all three options for $\zeta \in \{-1,0,1\}$ such that
$h(x) + h(x')\equiv h(x+x') + c_h + \zeta\; (\modulo m)$.
This only blows up the running time by a factor of nine: whereas before we assumed $h_1$ and $h_2$ were perfectly linear,
we now have to entertain three options for $h_1$ hash values and three options for $h_2$ hash values.

The balance assumption is overcome by directly verifying, for each element mapped to a heavy value,
whether it is part of a \ThreeSUM{} witness. This takes $O(n)$ time per element.
The expected number of such elements is $O(n^\gamma)$, and so the expected time spent on such elements is $O(n^{1+\gamma})$.
The number of intersection queries $q$ is already $O(n^{1+\gamma})$
and so the cost of dealing with elements mapped to heavy values may be ignored.
For the rest of the elements (those that are not assigned to heavy values) we
proceed as in the proofs of Theorem~\ref{thm:improved_reduction} and~\ref{thm:improved_reduction_reporting}.

\paragraph{The Hash Family.}
Baran et al.~\cite{BaranDP08} showed that any 1-universal family of hash
functions is almost balanced; see Jafargholi and Viola~\cite{JafargholiV13} for a somewhat simpler proof.\footnote{Furthermore, it
is straightforward to exhibit a class of $(1+\epsilon)$-universal functions that are not close to being almost balanced;
see, e.g., https://simons.berkeley.edu/talks/seth-pettie-2015-11-30, starting at minute 16:00.}
Rather than use the family of~\cite{DietzfelbingerHKP97}, we use one analyzed by Dietzfelbinger~\cite{Dietzfelbinger96}.
\begin{theorem} (\cite[Theorem 3]{Dietzfelbinger96})
The family $\mathcal{H}_{u,m,r}$ defined below is pairwise independent and hence 1-universal
whenever $r=km$ for some $k\geq u/2$, and $u,m,$ and $r$ are all powers of 2.
\begin{align*}
\mathcal{H}_{u,m,r} & =
\left\{
h_{a,b} : [u]\rightarrow[m] \;\,| \mbox{ $a\in [r]$ is an odd integer and $b\in [r]$}
\right\}\\
\mbox{and} \; h_{a,b}(x) & = ((ax + b) \div (r/m)) \, \modulo m
\end{align*}
\end{theorem}

In our application $u=2^w$ is naturally a power of 2, $m$ can be rounded up to the next power of 2,
and $k$ can be fixed at $u/2$.
Since $\mathcal{H}_{u,m,r}$ is 1-universal it is also almost balanced~\cite{BaranDP08,JafargholiV13}.
We need to prove that it is almost linear.

\begin{lemma}
The family $\mathcal{H}_{u,m,r}$ is almost linear, with $c_{h_{a,b}} = b \div (r/m)$.
\end{lemma}
\begin{proof}
Consider any elements $x,x'\in [u]$.  Let $g(x) = (ax+b)\div(r/m)$ be the hash function without the ``$\modulo m$'' operation.
Then we have
\begin{align*}
g(x) + g(x') 		&= \floor{\frac{ax + b}{r/m}} + \floor{\frac{ax' + b}{r/m}}\\
g(x+x') + b \div (r/m)	&= \floor{\frac{a(x+x')+b}{r/m}} + \floor{\frac{b}{r/m}}
\end{align*}
In general, whenever $\alpha_1+\alpha_2 = \alpha_3+\alpha_4$, $\floor{\alpha_1} + \floor{\alpha_2}$ differs
from $\floor{\alpha_3} + \floor{\alpha_4}$ by at most one.
Hence, $g$ is almost linear with offset $b \div (r/m)$ and error in $\{-1,0,1\}$.
Taking $g$ modulo $m$ preserves almost linearity ($\modulo m$),
hence $h$ is also almost linear.
\end{proof}

\section{Triangle Enumeration}\label{sec:triangle_lb}

Following \Patrascu~\cite{Patrascu10} we express a \SetIntersection{} instance as a tripartite graph in which
triangles are in one-to-one correspondence with the elements output by \SetIntersection{} queries.

\begin{proof}[Proof of Theorem~\ref{thm:triangle_lb}]
The \SetIntersection{} instance of
Theorem~\ref{thm:improved_reduction_reporting} is interpreted as a tripartite graph $G$ on vertex set $A\cup B \cup U$ where $A$ and $B$ are two copies of $\mathcal{F}$.
Each element $e\in U$ has edges to the sets in $A$ and $B$ that contain $e$,
and the edges between $A$ and $B$ correspond to the \SetIntersection{} queries.
Thus, $|U|=\Theta(n^{1+\delta-\gamma}) $, $|A|=|B|=|\mathcal{F}|=\Theta(\sqrt{n^{1+\delta+\gamma}})$, there are $\Theta(n^{1+\gamma}) $ edges between $A$ and $B$, and at most $O(n^{1-\gamma}) $ edges between each
vertex in $A\cup B$ and elements in $U$. Thus the total number of edges between $A\cup B$ and $U$ is at most
$O(n^{\frac{1}{2}(1+\delta+\gamma)} \cdot n^{1-\gamma})
=
O(n^{\frac{1}{2}(3+\delta-\gamma)})$.
Notice that there is a bijection between the output elements in the
\SetIntersection{} instance and the triangles in the graph. Thus, after enumerating $O(n^{2-\delta})$ triangles we are done.

We prove that enumerating all triangles essentially requires $\Omega(M\alpha)$ time for any feasible combination of $N$ (the number of vertices), $M$ (the number of edges), and arboricity $\alpha$.
By feasible we mean that
$\alpha = \Theta(M^x) = \Theta(N^y)$, $y\leq 2x$.
We assume that each vertex has a degree of at least 2 (since it is easy to filter out other vertices) and then $x\leq y$. Notice that proving a lower bound for $\alpha=\Theta(N^x)$ implies a lower bound for $\Theta(N^{x'})$ for any constant $x'<x$ since one can always add singleton vertices.

For our lower bound proof it suffices to consider the case where
$n^{1+\gamma}  \geq n^{\frac{1}{2}(3+\delta-\gamma)}$, and so $\gamma \geq 1/3 + \delta/3$.
Furthermore, this implies that $|\mathcal{F}| > |U|$.
Thus,
\begin{align*}
N &= \Theta(n^{\frac{1}{2}(1+\delta+\gamma)})\\
M &= \Theta(n^{1+\gamma}).
\end{align*}
Our aim is to show that $\alpha$ is at most $O(n^{1-\gamma }) $, since for each edge we must spend at least $\Omega(n^{1-\gamma}) $ time (assuming the \ThreeSUM{} conjecture), and so if $\alpha \leq O(n^{1-\gamma}) $ we conclude that the total runtime is at least $\Omega(M\alpha)$. However, this may not be the case in $G$, so we devise a \emph{triangle-preserving}
reduction to a new graph $G'$ with $N'$ vertices and $M'=O(M)$ edges such that there is an injective function between triangles in the original graph and triangles in $G'$. To bound the arboricity $\alpha'$ of $G'$ we show that there exists an orientation of $G'$ with max out-degree
$O(n^{1-\gamma} )$. It is well known that the maximum out-degree in any orientation must be at least $\alpha-1$~\cite{KKPS14}.

Denote by $E(u,X)$ the set of edges between a vertex $u$ and a vertex set $X$.
Consider a vertex $a \in A$. Since $|E(a,U)| = O(n^{1-\neweps})$, if $|E(a,B)| = O(n^{1-\neweps})$ then we orient all of the edges of $a$ to leave $a$.
However, it is possible that $E(a,B)$ is too large. To deal with this, we create $\lceil\frac{|E(a,B)|}{n^{1-\gamma}}\rceil$ copies of $a$.
The neighbors of $a$ in $B$ are arbitrarily partitioned into $\lceil\frac{|E(a,B)|}{n^{1-\gamma}}\rceil$ sets of size at most $n^{1-\gamma}$,
and the $i$th copy of $a$ has as its neighbors the $i$th set in the partition. All of the edges touching copies of $a$ are oriented outwards from those copies.
Furthermore, each copy of $a$ has outgoing edges towards the $O(n^{1-\neweps})$ neighbors of $a$ in $U$.
Thus, the out-degree of each copy of $a$ is also at most $O(n^{1-\gamma})$. By orienting all of the edges between $B$ and $U$ to leave $B$, the out-degree of any vertex in this orientation is at most $O(n^{1-\gamma} )$, and so the arboricity of this new graph is at most $\alpha'=O(n^{1-\gamma})$.
It is straightforward to see that this new graph $G'$ is a triangle-preserving substitute for $G$.
The number of edges of this graph is $M' \le 2M$ since we only increase the number of edges by adding new edges between copies and $U$, but each such edge can be charged to an edge between $A$ and $B$ in the initial graph. Also, since there are $\Theta(n^{1+\gamma})$
edges between $A$ and $B$, the number of copies that are created is at most
$O(n^{1+\gamma} /n^{1-\gamma})=O(n^{2\gamma})$.
Hence, the number of vertices in the new graph is
$N'=O(N+n^{2\gamma}) = O(n^{\fr{1}{2}(1+\delta+\gamma)} + n^{2\gamma})$.
Finally, since $\gamma \geq 1/3 + \delta/3$
we have $(1+\delta+\gamma)/2 \leq 2\gamma$ and so $N' =O(n^{2\gamma})$.

To summarize, we have obtained a graph with
$M'=\Theta(n^{1+\gamma} )$ edges,
$N'=O(n^{2\gamma})$ vertices, and $\alpha' = O(n^{1-\gamma})$.
Thus, enumerating all of the triangles in
$O(M'(\alpha')^{1-\epsilon}) = O(n^{2-\Omega(\epsilon)})$
time contradicts the \ThreeSUM{} conjecture.
We will now show that this lower bound holds for the entire
spectrum of possible polynomial dependencies of $\alpha'$ on $N'$ and $M'$.

Recall that we always have $M'\leq N'\alpha'$. Since we can always increase the number of vertices, it is enough to prove that the lower bound holds for all combinations of $M'=N'\alpha'$.
This is exactly the case here, since
\[
M'=\Theta(n^{1+\gamma}) = \Theta(n^{2\gamma}n^{1-\gamma})  = \Theta(N'\alpha').
\]
Furthermore, we capture the entire spectrum of values of $\alpha'$.
To see this for $M'$ notice that $\alpha'$ is on the order of $M'^{\frac{1-\gamma}{1+\gamma}} = M'^x$.
As $\gamma$ admits values between $1/3$ and $1$ (exclusive), $x$ admits values between $1/2$ and $0$ (exclusive).
Similarly, $\alpha'$ is on the order of $N'^{\frac{1-\gamma}{2\gamma}} = N'^y$, so $y$ admits values between $0$ and $1$ (exclusive).
In this case the number of triangles, $n^{2-\delta}$, is bounded away from the lower bound
$\Omega(n^{2-o(1)})$ whenever $\delta > 0$, which illustrates that the
lower bound is not \emph{solely} a result of the size of the output.
\end{proof}

Using the same construction we can also prove the Theorem~\ref{thm:triangle_listing_lb},
which is in terms of $m,n,$ and $t$ rather than $m$ and $\alpha$.

\begin{proof}[Proof of Theorem~\ref{thm:triangle_listing_lb}]
Suppose, for the sake of obtaining a contradiction, that there exists an algorithm for enumerating $t$
triangles that takes $M^{(3/2)(1-\rho)} + t(M/t^{2/3})^{1-\rho}$ time, for some constant $\rho>0$.
We use the same graph construction as in Theorem~\ref{thm:triangle_lb},
setting $\gamma = 1/3 + \delta/3$ and $\delta = 4\rho$.
Thus, the graph has $N=\Theta(n^{2\gamma})$ vertices, $M = \Theta(n^{1+\gamma})$ edges,
and $t= \Omega(n^{2-\delta}) = \Omega(n^{3-3\gamma})$ triangles.
The running time of the algorithm on this instance
is therefore
\begin{align*}
	& M^{(3/2)(1-\rho)} + t(M/t^{2/3})^{1-\rho}\\
	&= (n^{1+\gamma})^{(3/2)(1-\rho)} + n^{3-3\gamma}(n^{1+\gamma - (2/3)(3-3\gamma)})^{1-\rho}\\
	&= n^{(4/3)(1+\delta/4)(3/2)(1-\rho)} +  n^{2 - \Theta(\rho)}		& \gamma = 1/3 + \delta/3\\
	&= n^{2 - \Theta(\rho^2)} + n^{2 - \Theta(\rho)}		& \rho=\delta/4.
\intertext{This contradicts the \ThreeSUM{} conjecture.   The calculations that depend on $N$ rather than $M$ are similar.
Suppose there is an algorithm that enumerates $t$ triangles in $N^{3(1-\rho)} + t(N/t^{1/3})^{1-\rho}$ time.  Then,
using the same graph construction, but with $\delta=\rho$, we can solve \ThreeSUM{} in time}
	&    N^{3(1-\rho)} + t(N/t^{1/3})^{1-\rho}\\
	&= n^{6\gamma(1-\rho)} + n^{3-3\gamma}(n^{2\gamma - (1/3)(3-3\gamma)})^{1-\rho}\\
	&= n^{2(1+\delta)(1-\rho)} + n^{2 - \Theta(\rho)}				& \gamma = 1/3 + \delta/3\\
	&= n^{2 - \Theta(\rho^2)} + n^{2 - \Theta(\rho)}				& \delta = \rho,
\end{align*}
which contradicts the \ThreeSUM{} conjecture.
\end{proof}

\section{Maximum Cardinality Matching}\label{sect:MCM}

\subsection{Incremental MCM --- Proof of Theorem~\ref{thm:MCM}}\label{section:MCM}

In this section $n$ denotes the size of the \ThreeSUM{} instance and
$N$ and $M$ denote the number of vertices and edges in the graph
on which we compute maximum cardinality matchings.

\begin{figure*}
\centering
\begin{tabular}{ccc}
\scalebox{.3}{\includegraphics{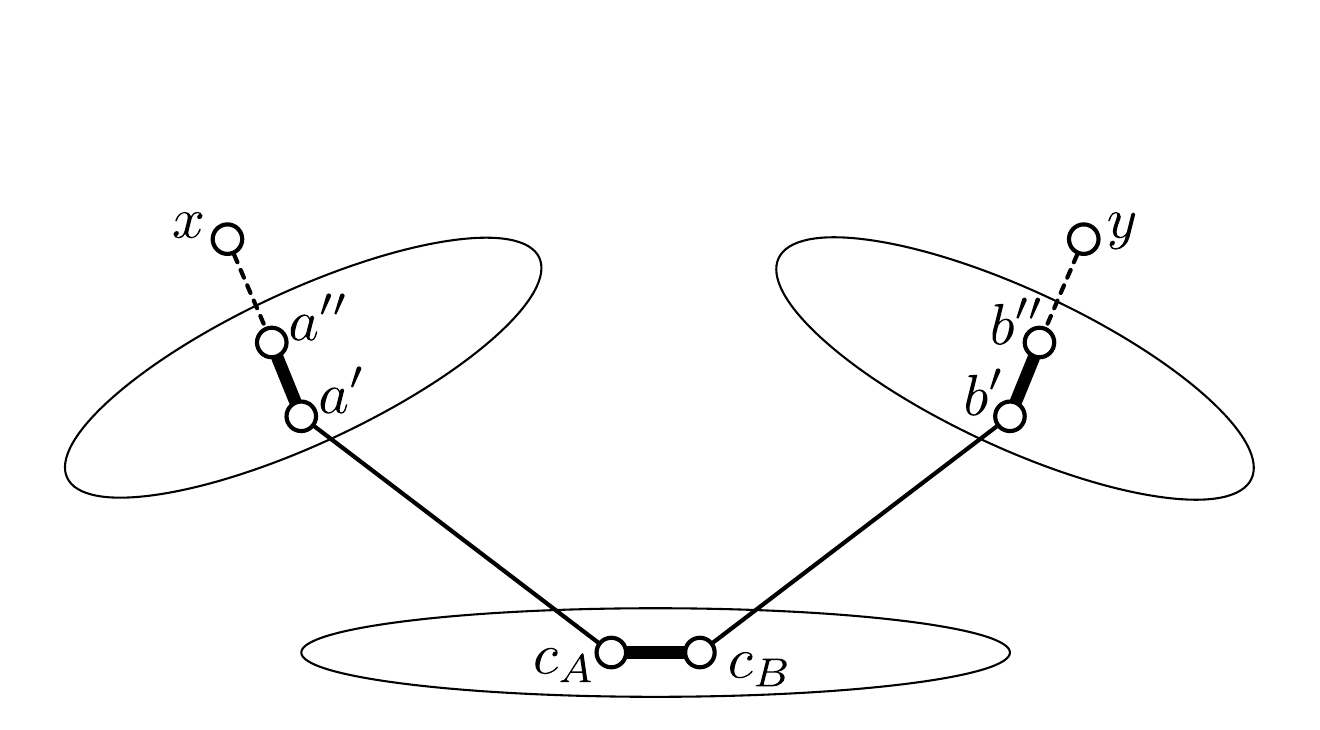}}
&  \hspace*{0cm}
&  \scalebox{.3}{\includegraphics{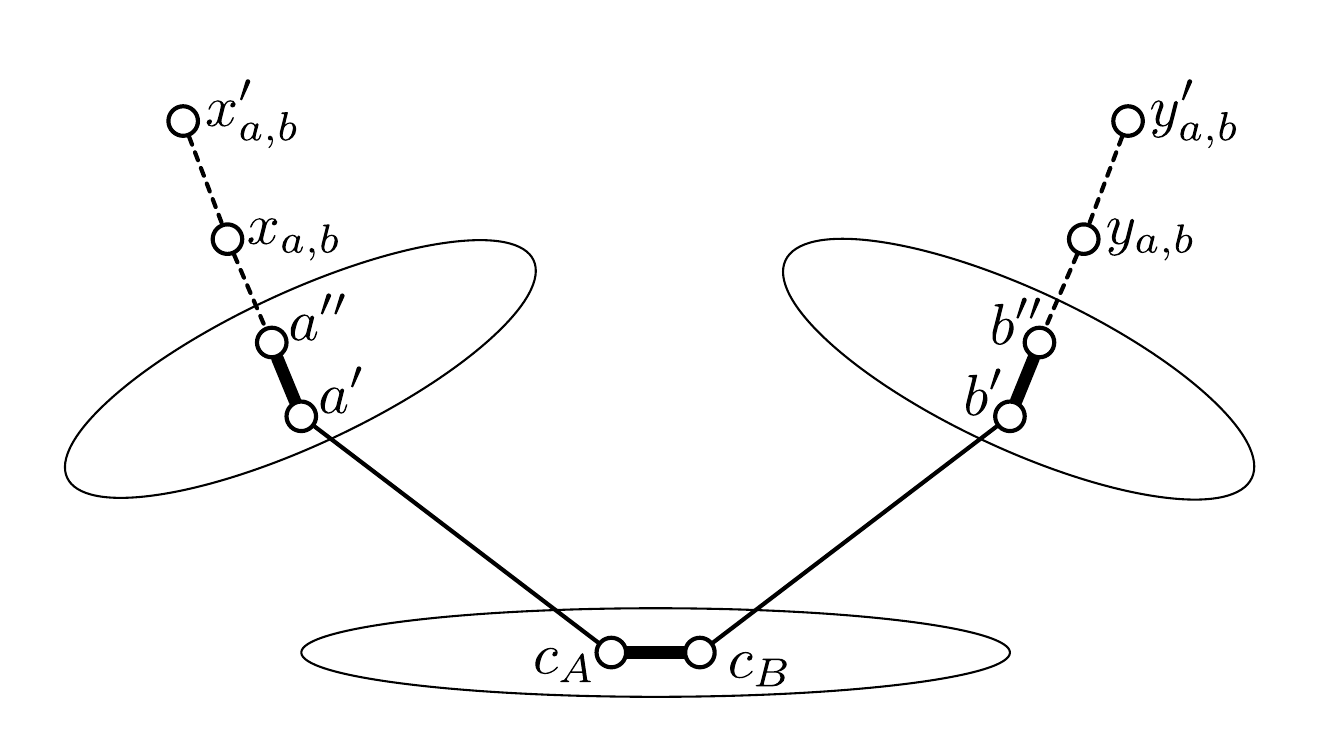}}\\
(A) && (B)
\end{tabular}
\caption{\label{fig:MCM} An illustration of the graph before the set
intersection query
$a\cap b = \emptyset$?.  There is a unique perfect matching before the
query: matched edges are drawn thick and unmatched ones thin.  Dashed
edges are inserted in the course of the query.  (A) For the worst case
bound we insert edges $(x,a''),(y,b'')$, check if the size of the MCM
has increased (implying $a\cap b\neq \emptyset$), then delete them.
(B) For the amortized bound we insert new vertices
$x_{a,b},x_{a,b}',y_{a,b},y_{a,b}'$ insert edges
$(x_{a,b},a''),(y_{a,b},b'')$, then check whether the size of the MCM
has increased,
then insert edges $(x_{a,b}',x_{a,b}),(y_{a,b}',y_{a,b})$.  Depending
on the actual
time of these operations, we either do nothing or
roll back all edge and vertex insertions.}
\end{figure*}


In amortized analysis we want to bound the total cost of a sequence $S=(\sigma_1,\ldots,\sigma_k)$ of $k$ operations and let $cost(\sigma_i)$ be the time cost of operation $\sigma_i$.
A function $f$ assigns valid \emph{amortized} costs if $\sum_{i=1}^k f(\sigma_i) \ge \sum_{i=1}^k \mbox{cost}(\sigma_i)$.
We prove that in any MCM algorithm, if $N_i$ is the number of vertices after
$\sigma_i$, then $\sum_{i=1}^k \mbox{cost}(\sigma_i)\geq \Omega\paren{\sum_{i=1}^k N_i^{\frac{\sqrt {17} -1} 8 -o(1)}}$,
that is, any amortization function $f$ that is a function of the current number of vertices $\hat{N}$ has
$f(\hat{N}) = \Omega\paren{\hat{N}^{\frac{\sqrt {17} -1} 8-o(1)}} = \Omega(\hat{N}^{0.39})$.

Consider the following instance of the incremental MCM problem which is created from an instance of the offline \SetDisjointness{} problem. First, we create two copies of $\mathcal{F}$ which are denoted by $A$ and $B$. For each $c\in U$ we create two vertices $c_A$ and $c_B$ with an edge between them. We say that $c_A$ and $c_B$ are copies of $c$. For each $a\in A$ ($b\in B$) we create two vertices $a'$ and $a''$ ($b'$ and $b''$) with an edge between them, and for each $c\in a$ ($c\in b$) there is an edge between $a'$ and $c_A$ ($b'$ and $c_B)$. We say that $a'$ and $a''$ ($b'$ and $b''$) are copies of $a$ ($b$).
We also add 2 additional vertices, $x$ and $y$.

The initialization of this graph is implemented by inserting all of the edges one at a time using the incremental MCM algorithm. This initial graph has $\Theta(n+n^{2-2\gamma})$ vertices and $\Theta(n^{2-\gamma})$ edges.
Notice that this initial graph (without the extra 2 vertices $x$ and $y$) has a unique perfect matching which is the set of edges between copies.
To implement a \SetDisjointness{} query between $a$ and $b$ we add edges $(x,a'')$ and $(y,b'')$. See Figure~\ref{fig:MCM}(A).
Now, $a$ and $b$ are disjoint iff there is no augmenting path after adding the two edges.
Thus, an increase in the MCM implies that $a\cap b \neq \emptyset$.
In order to facilitate additional \SetDisjointness{} queries we undo the {\em effect} of adding $(x,a''),(y,b'')$, using one of the following two approaches.

\paragraph{Rollback.} One approach is to delete the two edges that were added. An incremental data structure can always support
deletions of the last element that was inserted by keeping track of the memory modifications that took place during the last insertion, and reversing them
within the same time cost as the cost of the insertion itself.  Notice that this approach does not blend well with amortized time bounds since we have no a priori bound on the maximum time per operation.

\paragraph{Creating Perfect Matchings.} The second approach is to add another two edges per \SetDisjointness{} query for a total of 4 edges,
and create four separate dummy vertices $x_{a,b}, x'_{a,b},y_{b,a},$ and $y'_{b,a}$ associated with each \SetDisjointness{} query on $a\in A$ and $b\in B$.
For the \SetDisjointness{} query we add edges $(x_{a,b},a'')$ and $(y_{b,a},b'')$ to the graph and as before the MCM increases iff the sets $a$ and $b$ intersect. See Figure~\ref{fig:MCM}(B). Then, we add edges $(x_{a,b},x'_{a,b})$ and $(y_{b,a},y'_{b,a})$ which guarantee that the resulting graph has a perfect matching that is comprised of the perfect matching of the graph prior to the insertion of the 4 edges together with edges $(x_{a,b},x'_{a,b})$ and $(y_{b,a},y'_{b,a})$. The MCM has increased by 2 after the insertion of the 4 edges, regardless of  whether the sets intersect or not. The downside of this approach is that the number of vertices grows with the number of \SetDisjointness{} queries, leading to weaker lower bounds in terms of the number of vertices.

\paragraph{Combining the Two.}
Assume that the amortized cost of each edge or vertex insertion is
bounded by $\hat{N}^{\alpha}$ for some constant $\alpha >0$, where $\hat N$ is the number of vertices in the graph when the insertion takes place.
To answer a \SetDisjointness{} query we first add the four vertices and four edges, thereby creating a perfect matching. If the insertion time of these vertices and edges is less than $9 \hat N^\alpha$ (within $\hat{N}^{\alpha}$ of the budget for 8 insertions) then we rollback the insertion.
Otherwise, we leave the four edges and continue to the next \SetDisjointness{} query.
Intuitively, our goal with this combined method is to guarantee that the graph does not
grow by too much, while maintaining the amortized cost in order to obtain a higher lower bound.
\SetDisjointness{} queries for which we perform a rollback cost $O(\hat{N}^\alpha)$ time each.
By assumption the subsequence of remaining operations (those inserts used to set up the initial graph and subsequent inserts not rolled back)
 has amortized cost $O(\hat{N}^\alpha)$.

Next we bound the number of vertices at the end of the process, denoted by $N$. After the graph setup there is $O(n^{2-\gamma}(n+n^{2-2\gamma})^\alpha)$ credit for performing expensive insertions later. For our proof we will focus on $\gamma \leq 1/2$ and so the amount of credit becomes $O(n^{2-\gamma +(2-2\gamma)\alpha})$.
Each expensive \SetDisjointness{} query uses up at least $\hat{N}^\alpha$ of that credit, and so the total credit used during all of the expensive insertions is at least $\Omega(\sum_{i=0}^N (n + i)^\alpha) = \Omega(N^{1+\alpha})$.
Since we can never be in credit debt,
we have that $n^{2-\gamma +(2-2\gamma)\alpha} \geq \Omega(N^{1+\alpha})$ and so $N\leq O(n^{\frac{2-\gamma +(2-2\gamma)\alpha}{1+\alpha}})$.

The number of cheaper insertions that we rolled back is $O(n^{1+\gamma})$. Each one of these costs at most $N^\alpha$. So the total time of the entire sequence of operations which solves \ThreeSUM{} is
\[
O(n^{1+\gamma}N^{\alpha} + N^{1+\alpha}) \leq O(n^{1+\gamma + \frac{(2-\gamma +(2-2\gamma)\alpha)\alpha}{1+\alpha}}+n^{2-\gamma +(2-2\gamma)\alpha}).
\]
The \ThreeSUM{} conjecture implies this bound must be $\Omega(n^{2-o(1)})$, so up to the $o(1)$ term we must have
\[
2 \leq  \max\left\{1+\gamma + \frac{(2-\gamma +(2-2\gamma)\alpha)\alpha}{1+\alpha}, \; 2-\gamma +(2-2\gamma)\alpha\right\}.
\]
The two terms are equal when $\gamma = \frac{1+\alpha}{2+3\alpha}$
and then $2\leq 2-\gamma + (2-2\gamma)\alpha$, implying that
\[
\alpha \geq \frac{\gamma}{2(1-\gamma)} = \frac{(1+\alpha)/(2+3\alpha)}{2(1+2\alpha)/(2+3\alpha)} = \frac{1+\alpha}{2+4\alpha}.
\]
Rearranging terms, we have $4\alpha^2 + \alpha -1 \ge 0$,
and we can set $\alpha$ to be $\frac{\sqrt {17} -1} 8 > 0.3903$.

\subsection{Dynamic MCM with Preprocessing --- Proof of Theorem~\ref{thm:dynamic-MCM-preprocessing}}\label{sec:MCM_prep}

The proof follows the same lines as the proof of Theorem~\ref{thm:MCM}. Let $n$ denote the size of the \ThreeSUM{} instance and
let $m$ denote the number of edges in the graph
on which we compute maximum cardinality matchings.

The initial graph created from an instance of the
offline \SetDisjointness{} problem is exactly the
same as the initial graph from Section~\ref{section:MCM}.
Recall that this graph has $m=\Theta(n^{2-\gamma})$ edges and
a \SetDisjointness{} query is implemented by adding two edges to the graph.
Since in the proof here we support fully dynamic graphs, this time in order to facilitate additional \SetDisjointness{} queries we undo the effect of adding the two edges by deleting both of these edges. Thus each \SetDisjointness{} query is processed by executing four edge updates, but not changing the number of vertices.

By Theorem~\ref{thm:improved_reduction}, assuming the \ThreeSUM{} conjecture, if the number of \SetDisjointness{} queries is $q=\Theta(n^{1+\gamma}\log n) = \Theta(m^{\frac{1+\gamma}{2-\gamma}}\log n)$ then either $t_p = \Omega(n^{2-o(1)})$ or $ q\cdot t_u = \Omega(n^{2-o(1)})$. Together this implies that
\begin{align*}
  t_p + m^{\frac{1+\gamma}{2-\gamma}}\cdot t_u & = t_p + q\cdot t_u \\
   & = \Omega(n^{2-o(1)}) \\
   & = \Omega(m^{\frac{2}{2-\gamma}-o(1)}).
\end{align*}

Finally, notice that providing a specific initial MCM does not assist in answering the \SetDisjointness{} queries, since the answer to the \SetDisjointness{} queries depends only on whether the MCM increases in size. Moreover, by the construction of the graph given in Section~\ref{section:MCM}, the longest length of an augmenting path encountered as edges are added and deleted is 7. Thus, the same lower bound holds for any algorithm for approximating MCM in a fully dynamic graph that reports the size of some matching without length-7 augmenting paths.

\section{Applications}\label{sec:applications}
\subsection{Online \SetIntersection{}}\label{app:details_set_intersection}

We consider \emph{online} \SetIntersection{} data structures that have a specified preprocessing time,
query time, and reporting time (per element).  By the trivial reduction from offline \SetIntersection{}
to online \SetIntersection, we discover tradeoffs between these three time bounds.

\begin{theorem}\label{thm:set_intersection_reporting_improved}
Assume the \ThreeSUM{} conjecture.  Fix any $\gamma \in [0,1)$, $\delta \in (0,1]$, and any online \SetIntersection{} data structure.
Let $t_p$ be its expected preprocessing time,
$t_q$ be its amortized expected query time, and
$t_r$ be its amortized expected reporting time per element.
Then
\[
t_p + N^{\frac{2(1+\gamma)}{3+\delta-\gamma}}\cdot t_q + N^{\frac{2(2-\delta)}{3+\delta-\gamma}}\cdot t_r = \Omega\left(N^{\frac{4}{3+\delta-\gamma}-o(1)}\right).
\]
\end{theorem}

\begin{proof}
We use the same reduction as the one in the proof of Theorem~\ref{thm:set_disjoint_improved}.
Using Theorem~\ref{thm:improved_reduction_reporting}, we have $N=\Theta(n^{1-\gamma} \sqrt{n^{1+\delta-\gamma}}) = \Theta(n^{\frac{3+\delta-\gamma}{2}})$, the number of queries is $\Theta(n^{1+\gamma}) = \Theta(N^{\frac{2(1+\gamma)}{3+\delta-\gamma}})$, and the total size of the output is $\Theta(n^{2-\delta}) = \Theta(N^{\frac{2(2-\delta)}{3+\delta-\gamma}})$. Thus, we obtain the following lower bound tradeoff:
\begin{align*}
t_p + N^{\frac{2(1+\gamma)}{3+\delta-\gamma}}\cdot t_q + N^{\frac{2(2-\delta)}{3+\delta-\gamma}} \cdot t_r & = \Omega(n^{2-o(1)}) =  \Omega\left(N^{\frac{4}{3+\delta-\gamma}-o(1)}\right).
\end{align*}
\end{proof}

Let us illustrate some implications of Theorem~\ref{thm:set_intersection_reporting_improved}.

\begin{corollary}
\label{cor:set_intersect_improved}
Assume the \ThreeSUM{} conjecture.
Fix constants $p\in [4/3,2)$ and $q\in [0,2/3]$
and suppose there is a data structure for \SetIntersection{} where
$t_p = O(N^{p}), t_q = O(N^{q}),$ and $t_r = N^{o(1)}$.  Then
\[
p +q \ge  2.
\]
\end{corollary}

\begin{proof}
Assume, for the purpose of obtaining a contradiction, that $p+q$ is
\emph{strictly} smaller than $2$, say $q = 2-p-\epsilon$ for some constant $\epsilon>0$.
By Theorem~\ref{thm:set_disjoint_improved}, for any constants $\gamma \in (0,1)$ and $\delta\in (0,1]$,

\begin{equation}
\frac{4}{3+\delta-\gamma}
    \le \max \left\{p,\; \frac{2(1+\gamma)}{3+\delta-\gamma}+q,\; \frac{2(2-\delta)}{3+\delta-\gamma}+o(1)\right\}
    = \max \left\{p,\; \frac{2(1+\gamma)}{3+\delta-\gamma}+2-p-\epsilon\right\}.\label{eqn:cor-2}
 \end{equation}

Observe that the maximum of the three expressions in (\ref{eqn:cor-2}) can never be equal to the third,
since $\frac{2(2-\delta)}{3+\delta-\gamma}+o(1) < \frac{4}{3+\delta-\gamma}$, which justifies
the last equality of (\ref{eqn:cor-2}).  We now want to choose $\gamma,\delta$ such that
the maximum of (\ref{eqn:cor-2}) is achieved in the second expression.

Since $p\in [4/3,2)$, we can always set
$\gamma\in [0,1)$ and $\delta\in (0,1]$ such that
\begin{equation}\label{eqn:cor-2b}
\gamma -\delta = 3 - \frac{4}{p} + \epsilon',
\end{equation}
for some constant $\epsilon'$ such that $0 < \delta \ll \epsilon' \ll \epsilon$.
E.g., when $p$ is close to 2 we set $\gamma$ to be slightly less than $1$, $\epsilon'$ to be slightly less than $4/p-2$,
and $\delta\ll \epsilon'$ appropriately.
Rewriting (\ref{eqn:cor-2b}), we have
\[
p = \frac{4-p\epsilon'}{3+\delta-\gamma} < \frac{4}{3+\delta-\gamma}.
\]
This implies that the maximum of (\ref{eqn:cor-2}) is attained in the
second expression, so
\[
\frac{4}{3+\delta-\gamma} \le \frac{2(1+\gamma)}{3+\delta-\gamma}+2-p-\epsilon.
\]

Rearranging terms, we have
\begin{align*}
\epsilon & \le \frac{2(1+\gamma)}{3+\delta-\gamma} +2-p - \frac{4}{3+\delta-\gamma}\\
      & = \zeromath{\displaystyle \frac{4+2\delta}{3+\delta-\gamma} -p}\hcm[2.6]=\;
      \frac{4+2\delta}{3-(3- \frac 4 p +\epsilon')} -p & \mbox{from (\ref{eqn:cor-2b})}\\
      & = \zeromath{\displaystyle \frac{4+2\delta}{\frac 4 p -\epsilon'} -p}\hcm[2.6]=\; 		
	p\cdot \left(\frac{4+2\delta}{4-p\epsilon'} - 1\right)\\
      & = \Theta(\epsilon') < \epsilon,			& \mbox{since $\delta < \epsilon'\ll\epsilon$ and $p\in[4/3,2)$}
\end{align*}
which is a contradiction, hence $p+q$ cannot be strictly smaller than 2.
\end{proof}

\subsection{Online \SetDisjointness{} --- Proof of Corollary~\ref{cor:set_disjoint_improved}}\label{sec:set_disjoint_improved_proof}

Recall that Corollary~\ref{cor:set_disjoint_improved} stated that if the preprocessing and
query times of a \SetDisjointness{} data structure were $O(N^p)$ and $O(N^q)$,
then the $\ThreeSUM$ conjecture implies $p+2q\ge 2$.  The proof follows the
same lines as that of Corollary~\ref{cor:set_intersect_improved}, but is somewhat simpler.

\begin{proof}
Assume, for the purpose of obtaining a contradiction, that $p +2q$ is
\emph{strictly} smaller than $2$, say $q = \frac{2-p}{2}-\epsilon$.
By Theorem~\ref{thm:set_disjoint_improved}, for any $\gamma \in (0,1)$,
\begin{equation}
\frac{2}{2-\gamma}
	\le  \max\left\{p, \; \frac{1+\gamma}{2-\gamma} + q\right\}
	= \max\left\{p, \; \frac{1+\gamma}{2-\gamma} + \frac{2-p}2-\epsilon\right\}.\label{eqn:cor-1}
 \end{equation}
Since $p\in [1,2)$, we can always set
$\gamma\in (0,1)$ such that
$\gamma = 2 - \frac{2}{p} + \epsilon'$, for some $\epsilon'$ such that $0 < \epsilon' \ll \epsilon$.
This implies $p=\frac{2-p\epsilon'}{2-\gamma} < \frac{2}{2-\gamma}$, which means
we can conclude the maximum of (\ref{eqn:cor-1}) must be attained in the
second expression, so
\[
\frac{2}{2-\gamma} \le \frac{1+\gamma}{2-\gamma}+\frac{2-p}{2}-\epsilon.
\]
Rearranging terms, we have
\begin{align*}
\epsilon & \le \frac{1+\gamma}{2-\gamma}+\frac{2-p} 2 - \frac{2}{2-\gamma}
	  \;=\; \frac{1}{2-\gamma}-\frac{p} 2
	  \;=\; \frac{1}{2-(2-\frac 2 p + \epsilon')}-\frac{p} 2
	  \;=\; p\left(\frac{1}{2-p\epsilon'} - \frac{1}{2}\right)\\
	 & = \Theta(\epsilon') < \epsilon,
\end{align*}
which is a contradiction, hence $p+2q$ cannot be strictly smaller than $2$.
\end{proof}

\subsection{$d$-Failure Connectivity}\label{section:d_failure_conn}

In the \textit{$d$-Failure Connectivity Oracle} problem we wish to preprocess an undirected graph
$G=(V,E)$ in order to support a \emph{single batch} of a set $F\subset V$ of $d$ vertex failures
and subsequent \emph{connectivity queries} in the subgraph induced by $V\backslash F$.

Duan and Pettie~\cite{DuanP10} introduced a $d$-failure connectivity structure whose
preprocessing and batch deletion times are $O(d^{1-2/c}mn^{1/c}\poly(\log n))$
and $O(d^{2c+4}\poly(\log n))$,
where $c\ge 1$ is an integer parameter.  The connectivity query time is $O(d)$, independent of $c,m,$ and $n$.
The same authors presented a different $d$-failure connectivity structure~\cite{DuanP17}
with $O(mn\log n)$ preprocessing time, $O(d^2\poly(\log n))$ batch deletion time, and the same $O(d)$ query time.
The main open question in this line of work is whether the $O(d)$ query time of~\cite{DuanP10,DuanP17}
could be improved to match the
$d$-\emph{edge} failure connectivity oracles~\cite{PatrascuT07,KapronKM13,DuanP10,DuanP17},
whose query time is $\tilde{O}(1)$, independent of $d$.
Here we prove that with preprocessing and batch deletion times similar to~\cite{DuanP10},
the query time must depend on $d$, and be $\Omega(d^{1/2-o(1)})$.\footnote{Our lower bound does not apply
to~\cite{DuanP17}, whose preprocessing time is quadratic, and therefore already large enough to solve the underlying
\ThreeSUM{} instance.}  Subsequent to the initial publication of this work~\cite{KopelowitzPP16},
Henzinger et al.~\cite{HenzingerKNS15} proved that, conditioned on the \OMv{} conjecture,
$d$-failure connectivity oracles with $\poly(n)$ preprocessing time and reasonable batch deletion times
require $\Omega(d^{1-o(1)})$ query times.  Some recent upper
bounds~\cite{LarsenW17,ChakrabortyKL18} have cast some doubt on the validity of the \OMv{} conjecture.

\begin{theorem}\label{thm:d_fail}
Assume the \ThreeSUM{} conjecture.
For any $1/2\leq \gamma <1$ suppose there is a $d$-failure connectivity structure for
$d^{\frac{2-\gamma}{2-2\gamma}}$-edge, $d^{\frac{1}{2-2\gamma}}$-vertex graphs
with expected preprocessing time $t_{p}$, expected deletion time $t_{d}$, and expected query time $t_{q}$. Then,
\[
t_p + d^{\frac{1}{2-2\gamma}}\cdot t_d + d^{\frac{1+\gamma}{2-2\gamma}}\cdot t_q = \Omega(d^{\frac 1 {1-\gamma}-o(1)}).
\]
\end{theorem}

\begin{proof}
We reduce the \SetDisjointness{} problem to the $d$-failure connectivity as follows.
We make use of Theorem~\ref{thm:improved_reduction} and set $d = |U| = O(n^{2-2\gamma})$.
Construct a tripartite graph $G=(V,E)$ on vertices $V = A\cup B \cup U$, where $A$ and $B$ are copies of $\mathcal F$, and edges
\[
E = \{(a,c) \;|\; c\in a\} \;\cup\; \{(b,c) \;|\; c\in b\}
\]
We now need to answer $n^{1+\gamma} = O(d^{\frac{1+\gamma}{2-2\gamma}})$ \SetDisjointness{} queries using a black-box data structure
for $d$-failure connectivity on $G$.  For each $a\in A$ separately we perform up to $d$ deletions and then answer {\em all}
\SetDisjointness{} queries involving $a$ using connectivity queries.
To do this we delete all vertices in $U$ that correspond to elements not in $a$ and let $G[a]$ be the resulting graph.
Notice that in $G[a]$, $a$ is only connected to sets in $A\cup B$ that intersect $a$.  We can therefore answer
any \SetDisjointness{} query ``$a\cap b=\emptyset$?'' by asking one connectivity query in $G[a]$.

Observe that $G$ is an $M$-edge, $N$-vertex graph where
$N = |\mathcal{F}| + |U|= O(n\log n) = \tilde{O}(d^{\frac{1}{2-2\gamma}})$ and
$M=O(n^{2-\gamma}) = O(d^{\frac{2-\gamma}{2-2\gamma}})$.
Thus, $t_p + (n\log n)\cdot t_d + n^{1+\gamma}\cdot t_q = \Omega(n^{2-o(1)})$.
Substituting $n = \Omega(d^{\frac{1}{2-2\gamma}})$ completes the proof.
\end{proof}

Note that $\gamma$ does not affect the final conclusion that connectivity queries require $\Omega(d^{1/2-o(1)})$ time.
The role of the $\gamma$ parameter is to make the total time for all batch deletions negligible.  For example,
if $t_d = d^{100}$, we would have to set $\gamma$ very close to 1 so that $d^{\frac{1}{2-2\gamma}}\cdot d^{100} \ll d^{\frac{1}{1-\gamma}}$.

\subsection{Document Retrieval Problems with Multiple Patterns}

One of the services offered by search engines is the retrieval of documents
whose text satisfies some predicate, typically the inclusion (or exclusion) of multiple keywords.
In this section we prove lower bounds on several problems of this type.

\subsubsection{Two Pattern Document Retrieval}

In the \textit{Document Retrieval} problem~\cite{Muthukrishnan02} we are interested in preprocessing a corpus of
documents $X=\{D_1,\cdots, D_k\}$ where $N=\sum_{D\in X} |D|$,
so that given a pattern $P$ we can quickly report all of the documents that
contain $P$. We are usually interested in run times that depend on the number of documents that contain
$P$, not on the total number of occurrences of $P$ in the entire corpus.
In the \textit{Two Pattern Document Retrieval} problem we are given two patterns
$P_1$ and $P_2$ at query time, and wish to report all of the documents that contain both $P_1$ and $P_2$.
We consider two versions of the Two Pattern Document Retrieval problem.
In the reporting version we are interested in enumerating all documents that contain both patterns.
In the decision version we only want to decide whether the output is non-empty or not.

All known solutions for the Two Pattern Document Retrieval problem with non
trivial preprocessing use at least $\tilde{\Omega}(\sqrt N)$ time per
query~\cite{Muthukrishnan02,CP10,HSTV10,HSTV12,KPP15}.
Larsen, Munro, Nielsen, and Thankachan~\cite{LMNT14} prove lower bounds
on Two Pattern Document Retrieval, conditioned on the hardness of combinatorial boolean matrix multiplication.
(A data structure with $N^{3/2-\epsilon}$ preprocessing and $(\sqrt{N})^{1-\epsilon}$-time queries implies
a subcubic combinatorial BMM algorithm.)
We provide some additional evidence of hardness conditioned on the \ThreeSUM{} conjecture.

It is straightforward to see that the two versions of Two Pattern Document Retrieval
solve \SetIntersection{} and \SetDisjointness, respectively.
In particular, the reduction creates an alphabet $\Sigma$ which corresponds to all of the
sets in $\mathcal{F}$. For each $e\in U$ we create a document that contains the characters corresponding
to the sets that contain $e$. The intersection between $S,S'\in \mathcal F$ directly corresponds to
all the documents that contain both symbols $S$ and $S'$. Thus, all of the lower bound tradeoffs for intersection
problems are lower bound tradeoffs for the Two Pattern Document Retrieval problem.

\begin{theorem}\label{thm:two_pat_doc_retrieval}
Assume the \ThreeSUM{} conjecture.
Fix any $\gamma \in [0,1)$, and consider any data structure for Two Pattern Document Retrieval for a corpus $X$
with expected preprocessing time $t_p$ and query time $t_q$.  These time bounds are a function of
$N=\sum_{D\in X}|D|$, the size of the corpus.  Then
\[
t_p + N^{\frac{1+\gamma}{2-\gamma}} \cdot t_q = \Omega\left(N^{\frac{2}{2-\gamma}-o(1)}\right).
\]
\end{theorem}

\begin{theorem}\label{thm:two_pat_doc_retrieval_reporting}
Assume the \ThreeSUM{} conjecture.
Fix any $\gamma \in [0,1)$ and $\delta \in (0,1)$, and consider any data structure for Two Pattern Document Retrieval for a corpus $X$
with expected preprocessing time $t_p$, query time $t_q$, and reporting time (per document) $t_r$.
These time bounds are a function of $N=\sum_{D\in X}|D|$, the size of the corpus.  Then
\[
t_p + N^{\frac{2(1+\gamma)}{3+\delta-\gamma}}\cdot t_q + N^{\frac{2(2-\delta)}{3+\delta-\gamma}}\cdot t_r = \Omega\left(N^{\frac{4}{3+\delta-\gamma}-o(1)}\right).
\]

\end{theorem}

\subsubsection{Forbidden Pattern Document Retrieval}

In the \textit{Forbidden Pattern Document Retrieval} problem~\cite{FGKLMSV12} we are still interested in preprocessing
a fixed document corpus.  A query now consists of two patterns $P^+,P^-$ and must report all of the documents that
contain $P^+$ and do not contain $P^-$ (reporting version), or decide whether there exists no such document (decision version).

All known solutions for the Forbidden Pattern Document Retrieval problem with non trivial preprocessing use at least $\Omega(\sqrt N)$ time per query~\cite{FGKLMSV12,HSTV12}.
Larsen, Munro, Nielsen, and Thankachan~\cite{LMNT14}
also give lower bounds on this problem, conditioned on the combinatorial BMM conjecture.
Here we provide some additional evidence of hardness conditioned on the \ThreeSUM{} conjecture.

\begin{theorem}\label{thm:forbid_pat_doc_retrieval}
Assume the \ThreeSUM{} conjecture.
For any $\gamma \in [0,1)$ and any data structure for Forbidden Pattern Document Retrieval for a corpus $X$
with expected preprocessing time $t_p$ and expected query time $t_q$, which depend on $N=\sum_{D\in X}|D|$.
Then
\[
t_p + N^{\frac{1+\gamma}{3-2\gamma}} \cdot t_q  = \Omega(N^{\frac{2}{3-2\gamma}-o(1)}).
\]

\end{theorem}

\begin{proof}
We create two copies of $mathcal F$, denoted by $A$ and $B$, and similar to the proof of Theorem~\ref{thm:two_pat_doc_retrieval} we set $\Sigma = A\cup B$. For each $e\in U$ we create a document that contains all of the characters corresponding to sets from $A$ that contain $c$ and sets from $B$ that do not contain $c$.

Using Theorem~\ref{thm:improved_reduction}, we have $N = \Theta(n^{3-2\gamma})$,
 and the number of queries to answer is $\Theta(n^{1+\gamma} ) = \Theta(N^{\frac{1+\gamma}{3-2\gamma}})$.
 Thus we obtain the following lower bound tradeoff:
\[
 t_p + N^{\frac{1+\gamma}{3-2\gamma}} \cdot t_q  = \Omega(n^{2-o(1)}) = \Omega(N^{\frac{2}{3-2\gamma}-o(1)}).
\]
\end{proof}

Notice that
if we only allow linear preprocessing time then by making $\gamma$ arbitrarily small we obtain a
query time lower bound of $\Omega(N^{\frac{1}{3}-o(1)})$.

\begin{theorem}\label{thm:forbid_pat_doc_retrieval_reporting}
Assume the \ThreeSUM{} conjecture.
Fix any $\gamma \in [0,1), \delta\in(0,1)$, and any data structure for the reporting version of
Forbidden Pattern Document Retrieval for a corpus $X$,
with expected preprocessing time $t_p$, expected query time $t_q$,
and amortized expected reporting time $t_r$ (per document), which depend on $N=\sum_{D\in X}|D|$.
Then
\[
t_p +
N^{\frac{1+\gamma}{\frac{3}{2}(1+\delta-\frac{\gamma}{3})}}\cdot t_q +
N^{\frac{2-\delta}{\frac{3}{2}(1+\delta-\frac{\gamma}{3})}}\cdot t_r = \Omega(N^{\frac{2}{\frac{3}{2}(1+\delta-\frac{\gamma}{3})}-o(1)}).
\]
\end{theorem}

\begin{proof}
Our proof is similar to the proof of Theorem~\ref{thm:forbid_pat_doc_retrieval}, only this time we use
Theorem~\ref{thm:improved_reduction_reporting}.
So we have
$N=\Theta(n^{1+\delta-\gamma}\sqrt{n^{1+\delta+\gamma}}) = \Theta(n^{\frac{3}{2}(1+\delta-\frac{\gamma}{3})})$,
the number of queries is $\Theta(n^{1+\gamma}) = \Theta(N^{\frac{1+\gamma}{\frac{3}{2}(1+\delta-\frac{\gamma}{3})}})$,
and the total size of the output is $\Theta(n^{2-\delta}) = \Theta(N^{\frac{2-\delta}{\frac{3}{2}(1+\delta-\frac{\gamma}{3})}})$.
Thus, we obtain the following lower bound tradeoff:
\[
t_p
+ N^{\frac{1+\gamma}{\frac{3}{2}(1+\delta-\frac{\gamma}{3})}}\cdot t_q
+ N^{\frac{2-\delta}{\frac{3}{2}(1+\delta-\frac{\gamma}{3})}}\cdot t_r
= \Omega(n^{2-o(1)}) = \Omega(N^{\frac{2}{\frac{3}{2}(1+\delta-\frac{\gamma}{3})}-o(1)}).
\]
\end{proof}

Notice that allowing only linear preprocessing time and a constant reporting time, then by making
$\gamma$ and $\delta$ arbitrarily small we obtain a query time lower bound of $\Omega(N^{\frac{2}{3}-o(1)})$.

\section{\ConvThreeSUM{} vs. \ThreeSUM{} --- Proof of Theorem~\ref{thm:conv3sum}}\label{app:proof_conv3sum}

In this section we can tolerate approximately universal hash functions, and a
more natural definition of almost linearity.

\begin{definition} {\bf (Universality and Linearity)}
Let $\mathcal{H}$ be a family of hash functions from $[u] \rightarrow [m]$.
\begin{enumerate}
\item $\mathcal{H}$ is called {\em $c$-universal} if for any distinct $x,x' \in [u]$,
\[
\Pr_{h\in \mathcal{H}}(h(x) = h(x')) \le \frac{c}{m}.
\]

\item $\mathcal{H}$ is called {\em almost linear} if for any $h\in\mathcal{H}$ and any $x,x' \in [u]$,
\[
h(x) + h(x') \equiv h(x+x') + \{-1, 0\} \; (\modulo m).
\]
\end{enumerate}
\end{definition}

By tolerating approximate universality, we can use the simple hash functions
analyzed by Dietzfelbinger et al.~\cite{DietzfelbingerHKP97}.

\begin{theorem}\label{thm:Dietzfelbinger} (Dietzfelbinger, Hagerup, Katajainen, and Penttonen~\cite{DietzfelbingerHKP97})
Let $u$ and $m$ be powers of two, with $m<u$.
The family $\mathcal{H}_{u,m}$ is 2-universal and almost linear, where
\begin{align*}
\mathcal{H}_{u,m} &= \left\{h_a : [u]\rightarrow[m] \;\,|\,\;
	  \mbox{$a\in [u]$ is an odd integer}
  \right\}\\
\mbox{and } \; h_{a}(x) &= (ax \modulo u)\div (u/m).
\end{align*}
\end{theorem}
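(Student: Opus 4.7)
The plan is to prove the two claims (almost linearity and 2-universality) of the family $\mathcal{H}_{u,m}$ separately, both by direct analysis of the quantity $ax \bmod u$ and its subsequent division by $u/m$.

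\textbf{Almost linearity.} The idea is to track the remainder and the quotient under two successive divisions. Write $r_x = ax \bmod u$ and decompose $r_x = h_a(x) \cdot (u/m) + s_x$ with $s_x \in [0, u/m)$; analogously for $x'$. Since $a(x+x') = ax + ax'$ as integers, the residue of $a(x+x')$ modulo $u$ is either $r_x + r_{x'}$ or $r_x + r_{x'} - u$. Because $u = m \cdot (u/m)$, subtracting $u$ merely subtracts $m$ from the quotient under division by $u/m$, which is invisible modulo $m$. Meanwhile the remainder sum $s_x + s_{x'}$ lies in $[0, 2u/m)$, so it contributes $0$ (if $s_x + s_{x'} < u/m$) or $1$ (otherwise) to the quotient. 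Combining both contributions, $h_a(x+x') \equiv h_a(x) + h_a(x')$ or $h_a(x) + h_a(x') + 1 \pmod{m}$, which is precisely the claimed containment $h_a(x) + h_a(x') \in h_a(x+x') + \{-1, 0\} \pmod{m}$.

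\textbf{2-Universality.} For distinct $x, x' \in [u]$, a collision $h_a(x) = h_a(x')$ forces $ax \bmod u$ and $ax' \bmod u$ into the same length-$(u/m)$ block, and therefore forces $a(x-x') \bmod u$ to lie in the ``bad'' set $B = [0, u/m) \cup (u - u/m, u)$ of size less than $2u/m$. Write $d = x - x' = 2^s d'$ with $d'$ odd and $0 \le s < w$, where $u = 2^w$. The key observation is that $d'$ is a unit modulo $2^{w-s}$, so the map $a \mapsto ad' \bmod 2^{w-s}$ sends the $u/2$ odd elements of $[u]$ uniformly onto the odd residues of $[2^{w-s}]$ (each attained $2^s$ times). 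Since $ad \bmod u = 2^s \cdot (ad' \bmod 2^{w-s})$, the possible values of $ad \bmod u$ are exactly the odd multiples of $2^s$ in $[u]$, each attained the same number of times. A direct count of how many odd multiples of $2^s$ lie in $B$, divided by the total number $u/2$ of odd values of $a$, yields the required collision bound $2/m$.

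The main technical obstacle is the case analysis in both parts. For almost linearity one must simultaneously track wrap-around modulo $u$ and rounding when dividing by $u/m$, and recognize that the two effects are compatible so that only the single unit of slack appears modulo $m$. For 2-universality, the factor $2^s$ in the support of $ad \bmod u$ complicates the counting; obtaining the clean constant $2$ requires careful accounting of the odd multiples of $2^s$ in $B$, especially in the boundary regime where $2^s$ is close to $u/m$.
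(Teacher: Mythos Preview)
Your proof is correct, but you should be aware that the paper does not actually prove this theorem: immediately after stating it, the authors simply remark that Dietzfelbinger et al.\ proved 2-universality and that ``it is clearly almost linear,'' with no further argument. So there is no paper proof to compare against beyond a citation and an assertion.

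Your self-contained argument is sound on both counts. For almost linearity, your decomposition $r_x = h_a(x)\cdot(u/m) + s_x$ cleanly separates the two sources of error (wrap-around mod $u$, which vanishes mod $m$ since $u = m\cdot(u/m)$, and the carry from $s_x+s_{x'}$, which contributes at most $1$); this yields exactly $h_a(x)+h_a(x')\in h_a(x+x')+\{-1,0\}\pmod m$. For 2-universality, your reduction to counting odd multiples of $2^s$ in $(0,u/m)\cup(u-u/m,u)$ is the standard one; when $2^s < u/m$ the count is exactly $2\cdot 2^{\log_2(u/m)-s-1}$, giving probability $2/m$, and when $2^s \ge u/m$ the bad set contains no odd multiples of $2^s$ at all, so the bound holds trivially. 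The boundary regime you flagged as a potential obstacle is therefore harmless.

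In short: you have supplied a complete proof where the paper only gestures at one, and your argument is essentially the one in the cited reference.
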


Because the modular arithmetic and division are by powers of two,
the hash functions of Theorem~\ref{thm:Dietzfelbinger}
are very easy to implement using standard multiplication and shifts.
If $u=2^w$, where $w$ is the number of bits per word, and $m = 2^s$,
the function is written in C as \texttt{(a*x) >> (w-s)}.
Dietzfelbinger et al.~\cite{DietzfelbingerHKP97} proved that it is 2-universal.
It is clearly almost linear.

\subsection{Hashing and Coding Preliminaries}\label{sect:hashing}

The reduction in the next section makes use of any constant rate, constant relative distance binary code.
The expander codes of Sipser and Spielman~\cite{SipserS96} are sufficient for our application.

\begin{theorem}\label{thm:expandercode} (See Sipser and Spielman~\cite{SipserS96})
There is a constant $\epsilon>0$ such that for any sufficiently large $\delta > \delta(\epsilon)$,
there is a binary code $C : \{0,1\}^{N} \rightarrow \{0,1\}^{\delta N}$ such that
for any $x,y\in\{0,1\}^N$, the Hamming distance between $C(x)$ and $C(y)$ is at least $\epsilon\cdot \delta N$.
Moreover, $C(x)$ can be computed in $O(\delta N)$ time.
\end{theorem}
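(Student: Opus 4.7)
The plan is to construct $C$ as an expander code, following the template introduced by Sipser and Spielman. I would take a bipartite graph $G = (L \cup R, E)$ with $|L| = n'$ left vertices and $|R| = (1 - 1/\delta)n'$ right vertices, with left-degree $d$ constant. I would pick $G$ to be a $(\gamma n', (1-\zeta)d)$-vertex-expander on the left, meaning every $S \subseteq L$ with $|S| \le \gamma n'$ satisfies $|N(S)| \ge (1-\zeta)d|S|$ for some constants $\gamma > 0$ and $\zeta < 1/2$. Such expanders exist (by the probabilistic method, or explicitly via the Capalbo--Reingold--Vadhan--Wigderson construction). The code $C_0$ of block length $n'$ is then defined as the set of $\{0,1\}^{n'}$ vectors whose bits, when indexed by $L$, satisfy parity constraints at every right vertex: the XOR of the bits on the neighborhood of each $r \in R$ is zero.

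The key steps are: (i) observe that $C_0$ is a linear code of dimension at least $n' - |R| = n'/\delta$, giving rate $\ge 1/\delta$; (ii) prove constant relative distance via the unique-neighbor property. For step (ii), I would argue that if $x \in C_0$ is nonzero with support $S \subseteq L$ and $|S| \le \gamma n'$, then expansion gives $|N(S)| \ge (1-\zeta)d|S|$. Counting edges out of $S$, each right neighbor hit twice or more accounts for at least two of the $d|S|$ outgoing edges, so the number of right neighbors with a \emph{unique} edge into $S$ is at least $(1-2\zeta)d|S| > 0$. Any such unique neighbor is a parity constraint with exactly one support bit set, violating the codeword condition — contradiction. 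Hence every nonzero codeword has support either $0$ or $> \gamma n'$, and the minimum distance is $> \gamma n' = (\gamma/\delta) \cdot \delta n'$.

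To match the form in the theorem, I would set $N = n'/\delta$ (so the message length is $N$ and the block length is $\delta N$) and $\epsilon = \gamma$. The main obstacle is \emph{linear-time encoding}: the bound so far is a dimension argument over a parity-check code, not an encoding procedure. To get the $O(\delta N)$ encoding bound I would use Spielman's recursive construction. Starting from a short base code with brute-force encoding, one layers two expanders: given an encoder for a code of length $k$ in time $O(k)$, one builds a code of length $2k$ by appending two expander-based ``check'' layers whose computations run in linear time because each vertex has constant degree. The composed encoding time satisfies $T(2k) \le T(k) + O(k)$, which solves to $T(N) = O(N)$; since the outer length is $\delta N$ with $\delta$ constant, this yields the claimed $O(\delta N)$ bound.

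The hardest part of the plan is actually certifying that Spielman's recursive encoder preserves the distance lower bound simultaneously with linear encoding — one must verify that the composed code inherits expansion, which forces a careful joint choice of the constants $d, \gamma, \zeta, \delta$ in the base construction so that $\gamma > 0$ survives the recursion. All other steps are routine once the expander is in hand.
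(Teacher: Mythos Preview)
The paper does not prove this theorem at all: it is stated with the attribution ``(See Sipser and Spielman~\cite{SipserS96})'' and then used as a black box in the reduction of Section~\ref{sect:reduction}. There is therefore no proof in the paper to compare your proposal against.

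For what it is worth, your sketch is a faithful outline of the Sipser--Spielman/Spielman construction and the unique-neighbor distance argument is correct. One small point to watch if you flesh it out: the theorem fixes $\epsilon$ first and then allows \emph{any} sufficiently large $\delta$, so you need the relative distance $\gamma$ not to degrade as $\delta$ grows. This is easy to arrange (e.g., build one good code at rate $1/\delta(\epsilon)$ and for larger $\delta$ repeat each output symbol, which preserves relative distance and linear-time encoding), but your write-up sets $\epsilon=\gamma$ with $\gamma$ coming from an expander whose right side has size $(1-1/\delta)n'$, and you should say explicitly why $\gamma$ can be taken uniform in $\delta$.
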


\subsection{The Reduction}\label{sect:reduction}

Let $[u]\backslash\{0\} = [2^w]\backslash\{0\}$ be the universe.  It is convenient to assume that $0$ is excluded from $A$, but this
is without loss of generality since all witnesses involving $0$ can be enumerated in $O(n\log n)$ time by sorting $A$.
Choose $L$ hash functions $(h_i)_{i\in [L]}$ independently from $\mathcal{H}_{u,m}$, where
$m = 2^{\ceil{\log n}}$ is the least power of two larger than $n$.
Ideally a hash function
will map $A$ injectively into the buckets $[m]$, or at least put a constant load on each bucket, but this cannot be guaranteed.
Some buckets will be overloaded and the items in them discarded.

\begin{definition} {\bf (Overloaded Buckets, Discarded Elements)}
For each $i\in [L]$ and $j\in [m]$ define
\begin{align*}
\bucket_i(j) &= \{x \in A \:|\: h_i(x) = j\}
\intertext{to be the set of elements hashed by $h_i$ to the $j$th bucket.  The truncation of this bucket is defined as}
\bbucket_i(j) &= \left\{\begin{array}{ll}
\bucket_i(j)		& \mbox{if $|\bucket_i(j)| \le T$}\\
\emptyset			& \mbox{otherwise}
\end{array}\right.
\end{align*}
where $T=O(1)$ is a constant threshold to be determined.  If $\bbucket_i(j) = \emptyset$ we say
that the elements of $\bucket_i(j)$ were {\em discarded by $h_i$}.
An element is called {\em bad} if it is discarded by a $4/T$-fraction of the hash functions.
\end{definition}

\begin{lemma}
The probability that an element is bad is at most $\exp\left(-\frac{2L}{3T}\right)$.
\end{lemma}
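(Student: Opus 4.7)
The plan is to identify the event ``$x$ is bad'' with a tail event for a sum of independent Bernoulli variables, bound the one-trial failure probability via 2-universality and Markov, and finish with a standard multiplicative Chernoff bound, being careful to match the constant $2/3$ in the exponent.

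First I would fix an element $x \in A$ and analyze a single hash function $h_i$. Because $\mathcal{H}_{u,m}$ is 2-universal and $m = 2^{\lceil \log n \rceil} \geq n$, for any $y \ne x$ we have $\Pr[h_i(y) = h_i(x)] \leq 2/m$. Summing over the at most $n-1$ other elements of $A$ gives
\[
\EX\left[\,|\bucket_i(h_i(x)) \setminus \{x\}|\,\right] \;\leq\; \frac{2(n-1)}{m} \;\leq\; 2.
\]
Since the bucket has size $> T$ only if it contains at least $T$ elements other than $x$, Markov's inequality yields
\[
\Pr\bigl[\,|\bucket_i(h_i(x))| > T\,\bigr] \;\leq\; \Pr\bigl[\,|\bucket_i(h_i(x))\setminus\{x\}| \geq T\,\bigr] \;\leq\; \frac{2}{T}.
\]
That is, each single $h_i$ discards $x$ with probability at most $p := 2/T$.

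Next I would exploit the independence of $h_1,\dots,h_L$. Let $X_i$ be the indicator that $h_i$ discards $x$, and $X = \sum_{i=1}^L X_i$; the $X_i$ are mutually independent, so $X$ is stochastically dominated by a $\mathrm{Bin}(L, 2/T)$ random variable, whose mean is $\mu = 2L/T$. The definition of ``bad'' translates to the event $X \geq 4L/T = (1+1)\mu$, so I apply the multiplicative Chernoff bound $\Pr[X \geq (1+\delta)\mu] \leq \exp(-\mu\delta^2/3)$ (valid for $0 \leq \delta \leq 1$) with $\delta = 1$, giving
\[
\Pr[\,x \text{ is bad}\,] \;\leq\; \Pr[X \geq 4L/T] \;\leq\; \exp(-\mu/3) \;=\; \exp\!\left(-\frac{2L}{3T}\right),
\]
as claimed.

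There is no real obstacle; the only mildly delicate point is plumbing the constants so that the one-trial bound is exactly $2/T$ (not $3/T$), which is what lets Chernoff with $\delta = 1$ produce the desired $2/(3T)$ in the exponent. This is why the Markov step excludes $x$ from its own bucket before dividing by $T$, and why we need $m \geq n$ rather than just $m = \Theta(n)$.
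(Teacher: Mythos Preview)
Your proof is correct and follows essentially the same approach as the paper: bound the single-trial discard probability by $2/T$ via 2-universality plus Markov, then apply a multiplicative Chernoff bound with $\delta=1$ to the sum of $L$ independent indicators. You are slightly more careful than the paper in making the stochastic-domination step explicit (so that Chernoff applies with mean exactly $2L/T$ rather than merely an upper bound on $\E[X]$), but the argument is otherwise identical.
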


\begin{proof}
Since each $h_i$ is $2$-universal,
the expected number of other elements in $x$'s bucket is, by linearity of expectation,
at most $2(n-1)/m < 2$.  By Markov's inequality the probability that $x$ is discarded by $h_i$ is less than $2/T$.
Let $X$ be the number of hash functions that discard $x$, so $\E(X) < 2L/T$. By definition $x$ is bad
if $X > 4L/T > 2\cdot \E(X)$.  Since the hash functions were chosen independently, by a Chernoff
bound, $\Pr(x \mbox{ is bad}) < \exp\left(-\frac{2L}{3T}\right)$.
\end{proof}

We will set $T=O(1)$ and $L=\Theta(\log n)$ to be sufficiently large so that the probability
that no elements are bad is $1-1/\poly(n)$.
We proceed under the assumption that there are no bad elements.

\begin{lemma}\label{lem:three-not-discarded}
Suppose there are no bad elements with respect to $(h_i)_{i\in [L]}$.
For any three $a,b,c \in A$,
there are more than $\left(1 - \frac{12}{T}\right)L$ indices $i\in[L]$ such that $h_i$ discards none of $\{a,b,c\}$.
\end{lemma}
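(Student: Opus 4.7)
The plan is to prove this by a straightforward union bound over the three elements, using the definition of ``bad'' together with the hypothesis that none of $a,b,c$ is bad.

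First I would unpack the definition: an element $x$ is bad when the number of hash functions $h_i$ that discard $x$ (meaning those indices $i$ for which $x$ lands in an overloaded bucket, $|\bucket_i(h_i(x))| > T$) is at least $4L/T$. Equivalently, if $x$ is not bad, then the number of indices $i \in [L]$ for which $h_i$ discards $x$ is strictly less than $4L/T$. Since the hypothesis is that there are no bad elements, this strict bound applies simultaneously to $a$, $b$, and $c$.

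Next I would apply a union bound. Let $D_a, D_b, D_c \subseteq [L]$ be the sets of indices for which $h_i$ discards $a$, $b$, $c$ respectively. By the previous paragraph, $|D_a|, |D_b|, |D_c| < 4L/T$, hence
\[
|D_a \cup D_b \cup D_c| \;\le\; |D_a| + |D_b| + |D_c| \;<\; \frac{12L}{T}.
\]
The set of ``good'' indices (those that discard none of $a,b,c$) is the complement $[L]\setminus(D_a\cup D_b\cup D_c)$, so its size is strictly greater than $L - 12L/T = (1 - 12/T)L$, which is the desired bound.

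There is essentially no obstacle here: the lemma is a clean consequence of the definition of ``bad'' combined with a three-way union bound, and no probabilistic or structural argument beyond that is needed. The only thing to be careful about is tracking the strict versus non-strict inequality (``bad'' is defined with ``a $4/T$-fraction,'' which I am reading as at least $4L/T$, so ``not bad'' gives the strict inequality that propagates to the ``more than'' in the conclusion).
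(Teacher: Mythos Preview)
Your proposal is correct and matches the paper's proof essentially verbatim: the paper simply notes that each of $a,b,c$ is discarded by fewer than $4L/T$ hash functions and applies the union bound to conclude that none are discarded by more than $L - 12L/T$ hash functions. Your handling of the strict versus non-strict inequality is also consistent with the paper's reading of ``bad.''
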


\begin{proof}
Each of $a,b,c$ is discarded by less than $4L/T$ hash functions, so none are discarded by
at least $L - 12L/T$ hash functions.
\end{proof}

Let $\delta>1,\epsilon>0$ be the parameters of Theorem~\ref{thm:expandercode},
where $N = \ceil{\log n}$ and $L=\delta N$.
We assign each $x\in A$ an $L$-bit codeword $C_x$ such
that any two $C_x,C_y$ disagree in at least $\epsilon L$ positions.

We make
$8TL$ calls to a \ConvolutionThreeSUM{} algorithm on vectors $\{A_{\ell}\}_{\ell\in[L]\times \{-1,0\} \times\{0,1\}\times [2T]}$,
each of length $14m = O(n)$.
For reasons that will become clear we index the calls by tuples $\ell = (i,\alpha,\beta,\gamma) \in [L]\times \{-1,0\} \times\{0,1\}\times [2T]$.
The first coordinate $i$ of $\ell$ identifies the hash function.
The second coordinate $\alpha$
indicates that we are looking for witnesses $a,b,a+b\in A$ for which $h_i(a) + h_i(b) = h_i(a+b) + \alpha \, (\modulo m)$.
A natural way to define $A_{\ell}$ creates multiple copies of elements but can lead to a situation where there are false positives:
we may have $A_\ell(p) + A_\ell(q) = A_\ell(p+q)$ and yet this is not a witness for the original \ThreeSUM{} instance because
$A_\ell(p) = A_\ell(q)$.\footnote{This minor bug appears in \Patrascu{}'s reduction from \ThreeSUM{} to \ConvolutionThreeSUM.}
In each call to \ConvolutionThreeSUM{} we look for witnesses where each element can play the role of
either ``$p$'' or ``$q$'' in the example above,
but not both; all elements will be eligible to play the role of ``$p+q$.''
The parity of $C_x(i) \mbox{ \sc xor } \beta$ tells us which roles $x$ is allowed to play, where $\beta$ is the third coordinate of $\ell$.
The fourth coordinate $\gamma$ of $\ell$ effects a cyclic shift of the order of elements within a bucket.

Each vector $A_{\ell}$ is partitioned into $2m$ contiguous {\em blocks}, each of length $7T$.
Many of the locations of $A_{\ell}$ are filled with a dummy value $\infty$, which is some sufficiently large number
that cannot be part of any witness, say $2\max(A)+1$.
The elements of the $j$th bucket each appear three times in $A_\ell$, twice in the first half and once in the second.

Order the elements of $\bbucket_i(j)$ arbitrarily as $(x(i,j,k))_{k\in [T]}$, where $x(i,j,k)$ does not exist if $k\ge |\bbucket_i(j)|$.
Define the vector $A_{(i,\alpha,\beta,\gamma)}$ as follows.
\begin{align*}
A_{(i,\alpha,\beta,\gamma)}(j(7T) + t)
&= \left\{
\begin{array}{ll}
x(i,j, k)		&  \mbox{if $t = T + k$, $k\in [T]$, and $C_{x(i,j, k)}(i) \mbox{ \sc xor } \beta = 0$}\\
x(i,j, k)		& \mbox{if $t = 2T + k$, $k\in [T]$, and $C_{x(i,j, k)}(i) \mbox{ \sc xor } \beta = 1$}\\
x(i,(j-\alpha)\modulo m,k)   & \mbox{if $t = 3T + ((k+ \gamma) \modulo 2T)$ and $k\in [T]$}\\
\infty			& \mbox{otherwise.}
\end{array}
\right.
\end{align*}
The last case applies when $j, k,$ or $t$ is out of range
or if the given element, say $x(i,j,k)$, does not exist because $|\bbucket_i(j)| \le k$.
See Figure~\ref{fig:block}.

\begin{figure*}
\centering
\scalebox{.37}{\includegraphics{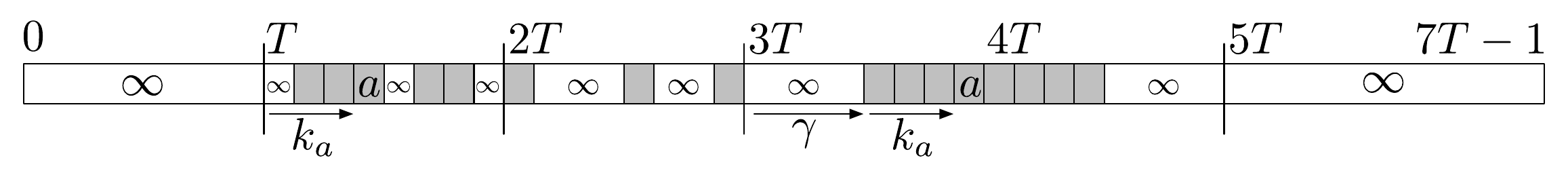}}
\caption{\label{fig:block}Block $j$ in $A_\ell$ occupies positions $j(7T)$ through $(j+1)(7T)-1$.
In the first half of $A_\ell$, a block is partitioned into five intervals.  The first interval covers positions
$0$ through $T-1$ and is always filled with a dummy value $\infty$.
The second and third intervals run, respectively, from positions $T$ through $2T-1$ and positions $2T$ through $3T-1$.
They contain those elements $x\in \bbucket_i(j-\alpha)$ for which $C_x(i) \mbox{ \sc xor } \beta$ is, respectively, $0$ and $1$.
The fourth interval runs from positions $3T$ through $5T-1$ and contains {\em all} members of $\bbucket_i(j-\alpha)$, cyclically
shifted by $\gamma$.  The last interval, from positions $5T$ through $7T-1$, is always filled with dummies.
The composition of a block $j$ in the {\em second} half of $A_\ell$ is similar,
except that the second and third intervals (positions $T$ through $3T-1$) contain only dummies,
and the fourth interval contains all members of $\bbucket_i((j-\alpha)\, \modulo m)$.
}
\end{figure*}

\begin{lemma} {\bf (No False Negatives)}
Suppose $a,b,a+b\in A$ is a witness to the \ThreeSUM{} instance $A$.
For some $\ell = (i,\alpha,\beta,\gamma)$,
this is also a witness in the \ConvolutionThreeSUM{} instance $A_{\ell}$.
\end{lemma}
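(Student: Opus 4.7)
The plan is a pigeonhole argument over the coordinates of $\ell=(i,\alpha,\beta,\gamma)$: I will identify conditions on each coordinate that together force the witness $a,b,a+b$ to appear as a \ConvolutionThreeSUM{} witness in $A_\ell$, and then show these conditions are jointly satisfiable using Lemma~\ref{lem:three-not-discarded} and the distance of the code $C$.

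For the first coordinate, I want $i\in [L]$ such that (i) $h_i$ discards none of $\{a,b,a+b\}$, and (ii) $C_a(i)\neq C_b(i)$. By Lemma~\ref{lem:three-not-discarded}, fewer than $12L/T$ indices fail (i), and by Theorem~\ref{thm:expandercode}, at most $(1-\epsilon)L$ indices fail (ii). Choosing the constant $T > 12/\epsilon$ at the outset makes the two bad sets together cover fewer than $L$ indices, so a good $i$ exists. Fix such an $i$; for each $x\in\{a,b,a+b\}$, write $j_x = h_i(x)$ and let $k_x \in [T]$ denote the position of $x$ inside $\bbucket_i(j_x)$, so $x = x(i,j_x,k_x)$.

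Given $i$, the remaining three coordinates are essentially forced. By almost-linearity of $h_i$, pick $\alpha\in\{-1,0\}$ with $j_a+j_b \equiv j_{a+b}+\alpha \pmod m$. Set $\beta = C_a(i)$; by condition (ii) this also gives $C_b(i)\oplus\beta = 1$, so the first two cases in the definition of $A_\ell$ place $a$ at position $p_a = 7T\,j_a + T + k_a$ and $b$ at position $p_b = 7T\,j_b + 2T + k_b$. Finally set $\gamma = (k_a+k_b - k_{a+b}) \bmod 2T \in [2T]$.

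Verification is a direct calculation: $p_a + p_b = 7T(j_a+j_b) + 3T + (k_a+k_b)$ lies in block $j_a+j_b \in [2m]$ at offset $3T+(k_a+k_b) \in [3T,5T)$, so by the third case of the definition of $A_\ell$ its entry is $x(i,(j_a+j_b-\alpha)\bmod m,\,k^*)$ for the unique $k^*\in[T]$ with $(k^*+\gamma)\bmod 2T = k_a+k_b$. Our choice of $\alpha$ forces the bucket index to be $j_{a+b}$, and our choice of $\gamma$ forces $k^* = k_{a+b}$, so $A_\ell(p_a+p_b) = a+b = A_\ell(p_a)+A_\ell(p_b)$, exhibiting the required witness. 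The main subtlety is coordinating the almost-linearity offset $\alpha$ with the cyclic shift $\gamma$ so that the arithmetic stays inside a single block; the only ``real'' idea beyond this bookkeeping is the use of code distance to separate $a$ and $b$ into the two halves of a block.
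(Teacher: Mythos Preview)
Your proof is correct and follows essentially the same approach as the paper's: choose $i$ by combining Lemma~\ref{lem:three-not-discarded} with the code distance (the paper sets $T=12/\epsilon$ exactly, you take $T>12/\epsilon$, but the pigeonhole is identical), then read off $\alpha,\beta,\gamma$ and verify the positions line up. Your verification step is in fact a bit more explicit than the paper's, which simply lists the three positions and asserts the conclusion.
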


\begin{proof}
Set the threshold $T = 12/\epsilon = O(1)$.  By Lemma~\ref{lem:three-not-discarded}
there are more than $L(1-12/T) = L(1-\epsilon)$ indices $i\in [L]$ such that none of $\{a,b,a+b\}$
are discarded by $h_i$.  Moreover, by the properties of the error correcting code (Theorem~\ref{thm:expandercode})
there are at least $\epsilon L$ indices $i$ for which
$C_a(i) \neq C_b(i)$, which implies that both criteria are satisfied for at least one $i$.
Fix any such $i$.

Let $j_a=h_i(a), j_b = h_i(b)$, and $j_{a+b} = h_i(a+b)$ be the bucket indices of $a,b,$ and $a+b$.
Let $k_a,k_b,k_{a+b}$ be their positions in those buckets, that is, $a = x(i,j_a,k_a)$ and $b = x(i,j_b,k_b)$,
and  $a+b = x(i,j_{a+b},k_{a+b})$.
Without loss of generality $j_a \le j_b$.
Let $\beta = C_a(i)$, so $C_a(i) \mbox{ \sc xor } \beta = 0$ and
$C_b(i) \mbox{ \sc xor } \beta = 1$.
Let $\alpha \in \{-1,0\}$ be such that
$h_i(a) + h_i(b) \equiv h_i(a+b) + \alpha \, (\modulo m)$.

In the vector $A_{(i,\alpha,\beta,\gamma)}$,
\begin{itemize}
\item $a$ is at position $j_a(7T) + T + k_a$, because $C_a(i) \mbox{ \sc xor } \beta = 0$,
\item $b$ is at position $j_b(7T) + 2T + k_b$, because $C_b(i) \mbox{ \sc xor } \beta = 1$,
\item and since $j_{a+b} \equiv j_a + j_b  - \alpha \, (\modulo m)$, $a+b$ is at position $(j_a + j_b)(7T) + 3T + ((k_{a+b} + \gamma) \modulo 2T)$.
\end{itemize}
Thus, for $\gamma = (k_a + k_b - k_{a+b}) \modulo 2T$,
the triple $(a,b,a+b)$ forms a witness for the \ConvolutionThreeSUM{} vector $A_\ell$.
\end{proof}

\begin{lemma} {\bf (No False Positives)}
If $(a,b,a+b)$ is a witness in some \ConvolutionThreeSUM{} instance $A_\ell$,
it is also a witness in the original \ThreeSUM{} instance $A$.
\end{lemma}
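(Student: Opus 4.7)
Suppose $(p, q, p+q)$ is a convolution 3SUM witness in some $A_\ell$, that is, $A_\ell(p) + A_\ell(q) = A_\ell(p+q)$ with none of the three entries equal to the dummy value. Setting $a = A_\ell(p)$ and $b = A_\ell(q)$, the construction of $A_\ell$ puts $a, b, a+b \in A$ automatically, and $a + b = (a+b)$ is the conv 3SUM hypothesis, so the substantive content of the lemma is to show that the three elements are pairwise distinct.

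My plan is to decompose each position as a block index plus in-block offset, writing $p = j_p(7T) + t_p$, $q = j_q(7T) + t_q$, and $r := p+q = j_r(7T) + t_r$ with $j_* \in [2m]$ and $t_* \in [7T)$. The sum then yields $(j_r, t_r) = (j_p + j_q, t_p + t_q)$ when $t_p + t_q < 7T$, and $(j_p + j_q + 1, t_p + t_q - 7T)$ otherwise. The non-dummy assumption forces $t_p, t_q, t_r$ into slot~I $[T, 2T)$, slot~II $[2T, 3T)$, or one of the $T$ live positions inside slot~III $[3T, 5T)$ singled out by the cyclic shift $\gamma$. I would then case-split on which slots contain $p$ and $q$ and use the placement rule of each slot together with the almost-linearity $h_i(a) + h_i(b) - h_i(a+b) \in \{-1, 0\} \pmod{m}$ to pin down the slot of $r$ and the buckets of all three elements. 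In the designed case where $p$ lies in slot~I and $q$ in slot~II (or symmetrically), $t_p + t_q \in [3T, 5T)$ lands in slot~III of block $j_p + j_q$, and by construction $C_a(i) \oplus \beta = 0$ while $C_b(i) \oplus \beta = 1$; hence $C_a(i) \neq C_b(i)$ and so $a \neq b$. The remaining distinctness conditions $a \neq a+b$ and $b \neq a+b$ then follow immediately from $0 \notin A$, since $a = a + b$ would force $b = 0$ and vice versa.

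The main obstacle is the collection of same-slot cases (both positions in slot~I, both in slot~II, or both in slot~III), where the codeword-parity separation used above collapses. My plan is to handle these by exploiting the dummy intervals $[0, T)$ and $[5T, 7T)$ inside every block, together with the $\gamma$-determined sparsity of slot~III: in each same-slot case I expect to show that either the offset arithmetic drives $t_r$ into a dummy region (contradicting the non-dummy hypothesis on $A_\ell(r)$), or the almost-linearity of $h_i$ combined with the slot-III placement rule pins $a$ and $b$ into different buckets and hence $a \neq b$. The most delicate sub-case is $p = q$ with $a = b$, which would correspond to the pseudo-witness $a + a = 2a$; here I would argue that the combined constraints on $\gamma$, the fixed bucket ordering used to define $x(i,j,k)$, and the almost-linearity of $h_i$ cannot simultaneously be satisfied in a way that produces such a false positive.
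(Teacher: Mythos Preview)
Your case analysis has a gap: you treat the designed case (slot~I vs.\ slot~II) and the three same-slot cases (I+I, II+II, III+III), but you never address the mixed cases slot~I$+$slot~III and slot~II$+$slot~III. There the codeword-parity argument is unavailable---slot~III is populated regardless of $C_x(i)$---so nothing in your outline rules out $a = b$. These mixed cases are exactly the ones the paper's proof concentrates on. A related misconception: you single out ``$p = q$ with $a = b$'' as the most delicate sub-case, but a fixed element $a$ occupies \emph{three} positions in $A_\ell$ (one in slot~I or~II of block $h_i(a)$, and one slot-III position in each half of the vector), so $a = b$ with $p \neq q$ is the generic situation, not a corner case.

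The paper's route is also more direct than a full slot case-split over arbitrary $a,b$. It immediately specializes to the only problematic configuration $a = b$, lists the three positions of $a$ explicitly, and then argues that summing the slot-I/II position with a slot-III position yields an in-block offset of $4T + k_a + ((k_a+\gamma)\bmod 2T)$ or $5T + k_a + ((k_a+\gamma)\bmod 2T)$, which it claims is always a dummy location. Once you restrict your own case-split to $a = b$ it collapses to exactly this enumeration of position-pairs; for instance your I$+$II case becomes vacuous, since $a$ lives in exactly one of slots~I and~II and hence no two positions of $a$ can straddle both. The detour through arbitrary $a,b$ and the codeword separation is extra work that the direct approach avoids.
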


\begin{proof}
None of $\{a,b,a+b\}$ can be the dummy $\infty$ in $A_\ell$, so they must all be members of $A$.
The only way it cannot be an witness for \ThreeSUM{} is if $b=a$, that is, $(a,a,2a)$ is not a triple
of distinct numbers.  If $a$ is not discarded, it appears at exactly three positions in $A_\ell$.  Regardless of
the bit $C_a(i)$, $a$ appears at both
$A_{\ell}((j_a + \alpha)(7T) + 3T + ((k_a+\gamma) \modulo 2T))$
and $ A_\ell((m + j_a + \alpha)(7T) + 3T + ((k_a+\gamma) \modulo 2T))$
 for some $k_a \in [T] \mbox{ and } \gamma\in [2T]$.
Depending on the parity of $C_x(i) \mbox{ \sc xor } \beta$, $a$ also appears at either
$A_{\ell}(j_a(7T) + T + k_a)$
or $A_{\ell}(j_a(7T) + 2T + k_a)$.
For $(a,a,2a)$ to be a \ConvolutionThreeSUM{} witness we would need $2a$ to appear either at
$A_\ell\big((2j_a+\alpha)(7T) + 4T + k_a + ((k_a+\gamma) \modulo 2T)\big)$
or $A_\ell\big((2j_a+\alpha)(7T) + 5T + k_a + ((k_a+\gamma) \modulo 2T)\big)$.
However, in both of those positions $A_\ell$ is $\infty$ by definition.
See Figure~\ref{fig:block}.
\end{proof}


We have shown that the randomized (Las Vegas) complexities of \ThreeSUM{} and \ConvolutionThreeSUM{}
are equivalent up to a logarithmic factor.  Since hashing plays such an essential role in the reduction,
it would be surprising if our construction could be efficiently derandomized, or if it could be generalized to
show that \ThreeSUM{} and \ConvolutionThreeSUM{} over the reals are essentially equivalent.

The $O(\log n)$-factor gap in Theorem~\ref{thm:conv3sum} stems from our solution to two technical difficulties,
(i) ensuring that all triples appear in lightly loaded buckets with respect to a large fraction of the hash functions,
and (ii) ensuring that no non-\ThreeSUM{} witnesses $(a,a,2a)$ occur as witnesses in any \ConvolutionThreeSUM{} instance.
We leave it as an open problem to show that \ThreeSUM{} and \ConvolutionThreeSUM{} are asymptotically
equivalent, without the $O(\log n)$-factor gap.

\bibliographystyle{plain}
\bibliography{tsvi}

\begin{thebibliography}{10}

\bibitem{AW14}
Amir Abboud and Virginia~Vassilevska Williams.
\newblock Popular conjectures imply strong lower bounds for dynamic problems.
\newblock In {\em Proceedings of the 55th Annual {IEEE} Symposium on
  Foundations of Computer Science {(FOCS)}}, pages 434--443, 2014.

\bibitem{AWW14}
Amir Abboud, Virginia~Vassilevska Williams, and Oren Weimann.
\newblock Consequences of faster alignment of sequences.
\newblock In {\em Proceedings of the 41st International Colloquium on Automata,
  Languages, and Programming (ICALP)}, pages 39--51, 2014.

\bibitem{AWY15}
Amir Abboud, Virginia~Vassilevska Williams, and Huacheng Yu.
\newblock Matching triangles and basing hardness on an extremely popular
  conjecture.
\newblock In {\em Proceedings of the 47th Annual {ACM} Symposium on Theory of
  Computing {(STOC)}}, pages 41--50, 2015.

\bibitem{AlmanWY18}
Josh Alman, Joshua~R. Wang, and Huacheng Yu.
\newblock Cell-probe lower bounds for dynamic problems via a new communication
  model.
\newblock In {\em Proceedings 50th Annual {ACM} Symposium on Theory of
  Computing ({STOC})}, pages 1003--1012, 2018.

\bibitem{ACLL14}
Amihood Amir, Timothy~M. Chan, Moshe Lewenstein, and Noa Lewenstein.
\newblock On hardness of jumbled indexing.
\newblock In {\em Proceedings of the 41st International Colloquium on Automata,
  Languages, and Programming (ICALP)}, pages 114--125, 2014.

\bibitem{AKLPPS16}
Amihood Amir, Tsvi Kopelowitz4, Avivit Levy, Seth Pettie, Ely Porat, and
  B.~Riva Shalom.
\newblock Mind the gap: Essentially optimal algorithms for online dictionary
  matching with one gap.
\newblock In {\em Proceedings of the 27th International Symposium on Algorithms
  and Computation {(ISAAC)}}, pages 12:1--12:12, 2016.

\bibitem{BaranDP08}
Ilya Baran, Erik~D. Demaine, and Mihai P\v{a}tra\c{s}cu.
\newblock Subquadratic algorithms for {3SUM}.
\newblock {\em Algorithmica}, 50(4):584--596, 2008.

\bibitem{BernsteinHR18}
Aaron Bernstein, Jacob Holm, and Eva Rotenberg.
\newblock Online bipartite matching with amortized replacements.
\newblock In {\em Proceedings of the 29th Annual {ACM-SIAM} Symposium on
  Discrete Algorithms ({SODA})}, pages 947--959, 2018.

\bibitem{BjorklundPWZ14}
Andreas Bj{\"{o}}rklund, Rasmus Pagh, Virginia~Vassilevska Williams, and Uri
  Zwick.
\newblock Listing triangles.
\newblock In {\em Proceedings of the 41st International Colloquium on Automata,
  Languages, and Programming (ICALP)}, pages 223--234, 2014.

\bibitem{BLSZ14}
Bartlomiej Bosek, Dariusz Leniowski, Piotr Sankowski, and Anna Zych.
\newblock Online bipartite matching in offline time.
\newblock In {\em Proceedings of the 55th Annual {IEEE} Symposium on
  Foundations of Computer Science {(FOCS)}}, pages 384--393, 2014.

\bibitem{ChakrabortyKL18}
Diptarka Chakraborty, Lior Kamma, and Kasper~Green Larsen.
\newblock Tight cell probe bounds for succinct {B}oolean matrix-vector
  multiplication.
\newblock In {\em Proceedings 50th Annual ACM Symposium on Theory of Computing
  (STOC)}, pages ??--??, 2018.

\bibitem{Chan18}
Timothy~M. Chan.
\newblock More logarithmic-factor speedups for {3SUM}, $(\operatorname{median},
  +)$-convolution, and some geometric {3SUM}-hard problems.
\newblock In {\em Proceedings of the 29th Annual {ACM-SIAM} Symposium on
  Discrete Algorithms {(SODA)}}, pages 881--897, 2018.

\bibitem{CL15}
Timothy~M. Chan and Moshe Lewenstein.
\newblock Clustered integer {3SUM} via additive combinatorics.
\newblock In {\em Proceedings of the 47th Annual {ACM} Symposium on Theory of
  Computing {(STOC)}}, pages 31--40, 2015.

\bibitem{CN85}
Norishige Chiba and Takao Nishizeki.
\newblock Arboricity and subgraph listing algorithms.
\newblock {\em SIAM J. Comput.}, 14(1):210--223, 1985.

\bibitem{CP10}
Hagai Cohen and Ely Porat.
\newblock Fast set intersection and two-patterns matching.
\newblock {\em Theor. Comput. Sci.}, 411(40-42):3795--3800, 2010.

\bibitem{Dahlgaard16}
S{\o}ren Dahlgaard.
\newblock On the hardness of partially dynamic graph problems and connections
  to diameter.
\newblock In {\em Proceedings 43rd International Colloquium on Automata,
  Languages, and Programming, ({ICALP})}, pages 48:1--48:14, 2016.

\bibitem{Dietzfelbinger96}
Martin Dietzfelbinger.
\newblock Universal hashing and $k$-wise independent random variables via
  integer arithmetic without primes.
\newblock In {\em Proceedings of the 13th Annual Symposium on Theoretical
  Aspects of Computer Science (STACS)}, pages 569--580, 1996.

\bibitem{DietzfelbingerHKP97}
Martin Dietzfelbinger, Torben Hagerup, Jyrki Katajainen, and Martti Penttonen.
\newblock A reliable randomized algorithm for the closest-pair problem.
\newblock {\em J.~Algor.}, 25(1):19--51, 1997.

\bibitem{DuanP10}
Ran Duan and Seth Pettie.
\newblock Connectivity oracles for failure prone graphs.
\newblock In {\em Proceedings of the 42nd ACM Symposium on Theory of Computing
  (STOC)}, pages 465--474, 2010.

\bibitem{DuanP17}
Ran Duan and Seth Pettie.
\newblock Connectivity oracles for graphs subject to vertex failures.
\newblock In {\em Proceedings of the 28th Annual ACM-SIAM Symposium on Discrete
  Algorithms ({SODA})}, pages 490--509, 2017.

\bibitem{EppsteinGMT17}
David Eppstein, Michael~T. Goodrich, Michael Mitzenmacher, and Manuel~R.
  Torres.
\newblock 2-3 {C}uckoo filters for faster triangle listing and set
  intersection.
\newblock In {\em Proceedings of the 36th {ACM} {SIGMOD-SIGACT-SIGAI} Symposium
  on Principles of Database Systems ({PODS})}, pages 247--260, 2017.

\bibitem{FGKLMSV12}
Johannes Fischer, Travis Gagie, Tsvi Kopelowitz, Moshe Lewenstein, Veli
  M{\"a}kinen, Leena Salmela, and Niko V{\"a}lim{\"a}ki.
\newblock Forbidden patterns.
\newblock In {\em Proceedings of the 10th Latin American Symposium on
  Theoretical Informatics (LATIN)}, pages 327--337, 2012.

\bibitem{Freund17}
Ari Freund.
\newblock Improved subquadratic {3SUM}.
\newblock {\em Algorithmica}, 77(2):440--458, 2017.

\bibitem{GT85}
Harold~N. Gabow and Robert~E. Tarjan.
\newblock A linear-time algorithm for a special case of disjoint set union.
\newblock {\em J.~Comput.~Syst.~Sci.}, 30(2):209--221, 1985.

\bibitem{GT91}
Harold~N. Gabow and Robert~E. Tarjan.
\newblock Faster scaling algorithms for general graph-matching problems.
\newblock {\em J.~ACM}, 38(4):815--853, 1991.

\bibitem{GajentaanO95}
Anka Gajentaan and Mark~H. Overmars.
\newblock On a class of ${O}(n^2)$ problems in computational geometry.
\newblock {\em Comput. Geom.}, 5:165--185, 1995.

\bibitem{GoldS17}
Omer Gold and Micha Sharir.
\newblock Improved bounds for {3SUM}, $k$-{SUM}, and linear degeneracy.
\newblock In {\em Proceedings of the 25th Annual European Symposium on
  Algorithms ({ESA})}, pages 42:1--42:13, 2017.

\bibitem{GKLP16}
Isaac Goldstein, Tsvi Kopelowitz, Moshe Lewenstein, and Ely Porat.
\newblock How hard is it to find (honest) witnesses?
\newblock In {\em Proceedings of the 24th Annual European Symposium on
  Algorithms {(ESA)}}, pages 45:1--45:16, 2016.

\bibitem{GKLP17}
Isaac Goldstein, Tsvi Kopelowitz, Moshe Lewenstein, and Ely Porat.
\newblock Conditional lower bounds for space/time tradeoffs.
\newblock In {\em Proceedings of the 15th International Symposium on Algorithms
  and Data Structures {(WADS)}}, pages 421--436, 2017.

\bibitem{GronlundP14}
Allan Gr{\o}nlund and Seth Pettie.
\newblock Threesomes, degenerates, and love triangles.
\newblock In {\em Proceedings of the 55th Annual {IEEE} Symposium on
  Foundations of Computer Science, {FOCS}}, pages 621--630, 2014.

\bibitem{GronlundP18}
Allan Gr{\o}nlund and Seth Pettie.
\newblock Threesomes, degenerates, and love triangles.
\newblock {\em J. ACM}, 2018.
\newblock to appear.

\bibitem{HenzingerKNS15}
Monika Henzinger, Sebastian Krinninger, Danupon Nanongkai, and Thatchaphol
  Saranurak.
\newblock Unifying and strengthening hardness for dynamic problems via the
  online matrix-vector multiplication conjecture.
\newblock In {\em Proceedings of the 47th Annual {ACM} Symposium on Theory of
  Computing ({STOC})}, pages 21--30, 2015.

\bibitem{HSTV10}
Wing-Kai Hon, Rahul Shah, Sharma~V. Thankachan, and Jeffrey~Scott Vitter.
\newblock String retrieval for multi-pattern queries.
\newblock In {\em Proceedings of the 17th International Symposium on String
  Processing and Information Retrieval (SPIRE)}, pages 55--66, 2010.

\bibitem{HSTV12}
Wing-Kai Hon, Rahul Shah, Sharma~V. Thankachan, and Jeffrey~Scott Vitter.
\newblock Document listing for queries with excluded pattern.
\newblock In {\em Proceedings of the 23rd Annual Symposium on Combinatorial
  Pattern Matching (CPM)}, pages 185--195, 2012.

\bibitem{IR78}
Alon Itai and Michael Rodeh.
\newblock Finding a minimum circuit in a graph.
\newblock {\em SIAM J. Comput.}, 7(4):413--423, 1978.

\bibitem{JafargholiV13}
Zahra Jafargholi and Emanuele Viola.
\newblock {3SUM}, {3XOR}, triangles.
\newblock {\em Algorithmica}, pages 1--18, 2014.

\bibitem{KaneLM18}
Daniel~M. Kane, Shachar Lovett, and Shay Moran.
\newblock Near-optimal linear decision trees for $k$-{SUM} and related
  problems.
\newblock {\em J. ACM}.
\newblock to appear.

\bibitem{KapronKM13}
Bruce~M. Kapron, Valerie King, and Ben Mountjoy.
\newblock Dynamic graph connectivity in polylogarithmic worst case time.
\newblock In {\em Proceedings of the 24th Annual ACM-SIAM Symposium on Discrete
  Algorithms (SODA)}, pages 1131--1142, 2013.

\bibitem{KK16}
T.~Kopelowitz and R.~Krauthgamer.
\newblock Color-distance oracles and snippets.
\newblock In {\em 27th Annual Symposium on Combinatorial Pattern Matching,
  {CPM}}, pages 24:1--24:10, 2016.

\bibitem{KKPS14}
Tsvi Kopelowitz, Robert Krauthgamer, Ely Porat, and Shay Solomon.
\newblock Orienting fully dynamic graphs with worst-case time bounds.
\newblock In {\em Proceedings of the 41st International Colloquium on Automata,
  Languages, and Programming (ICALP)}, pages 532--543, 2014.

\bibitem{KPP15}
Tsvi Kopelowitz, Seth Pettie, and Ely Porat.
\newblock Dynamic set intersection.
\newblock In {\em Proceedings 14th International Symposium on Algorithms and
  Data Structures (WADS)}, pages 470--481, 2015.

\bibitem{KopelowitzPP16}
Tsvi Kopelowitz, Seth Pettie, and Ely Porat.
\newblock Higher lower bounds from the {3SUM} conjecture.
\newblock In {\em Proceedings 27th Annual {ACM-SIAM} Symposium on Discrete
  Algorithms ({SODA})}, pages 1272--1287, 2016.

\bibitem{Larsen12}
Kasper~Green Larsen.
\newblock The cell probe complexity of dynamic range counting.
\newblock In {\em Proceedings of the 44th Annual ACM Symposium on Theory of
  Computing (STOC)}, pages 85--94, 2012.

\bibitem{LMNT14}
Kasper~Green Larsen, J.~Ian Munro, Jesper~Sindahl Nielsen, and Sharma~V.
  Thankachan.
\newblock On hardness of several string indexing problems.
\newblock In {\em Proceedings of the 25th Annual Symposium on Combinatorial
  Pattern Matching (CPM)}, pages 242--251, 2014.

\bibitem{LarsenWY18}
Kasper~Green Larsen, Omri Weinstein, and Huacheng Yu.
\newblock Crossing the logarithmic barrier for dynamic boolean data structure
  lower bounds.
\newblock In {\em Proceedings 50th Annual {ACM} Symposium on Theory of
  Computing ({STOC})}, pages 978--989, 2018.

\bibitem{LarsenW17}
Kasper~Green Larsen and R.~Ryan Williams.
\newblock Faster online matrix-vector multiplication.
\newblock In {\em Proceedings of the 28th Annual {ACM-SIAM} Symposium on
  Discrete Algorithms ({SODA})}, pages 2182--2189, 2017.

\bibitem{LokshtanovPTWY17}
Daniel Lokshtanov, Ramamohan Paturi, Suguru Tamaki, R.~Ryan Williams, and
  Huacheng Yu.
\newblock Beating brute force for systems of polynomial equations over finite
  fields.
\newblock In {\em Proceedings of the 28th Annual {ACM-SIAM} Symposium on
  Discrete Algorithms ({SODA})}, pages 2190--2202, 2017.

\bibitem{Muthukrishnan02}
S.~Muthukrishnan.
\newblock Efficient algorithms for document retrieval problems.
\newblock In {\em Proceedings of the 13th Annual ACM-SIAM Symposium on Discrete
  Algorithms (SODA)}, pages 657--666, 2002.

\bibitem{Nash61}
C.~St.J.~A. Nash-Williams.
\newblock Edge-disjoint spanning trees in finite graphs.
\newblock {\em Journal of the London Mathematical Society}, 36(1):445--450,
  1961.

\bibitem{Nash64}
C.~St.J.~A. Nash-Williams.
\newblock Decomposition of finite graphs into forests.
\newblock {\em Journal of the London Mathematical Society}, 39(1):12, 1964.

\bibitem{NS13}
Ofer Neiman and Shay Solomon.
\newblock Simple deterministic algorithms for fully dynamic maximal matching.
\newblock In {\em Proceedings of the 45th Annual ACM Symposium on Theory of
  Computing (STOC)}, pages 745--754, 2013.

\bibitem{PatrascuD06}
Mihai P{\v a}tra{\c s}cu and Erik~D. Demaine.
\newblock Logarithmic lower bounds in the cell-probe model.
\newblock {\em SIAM J.~Comput.}, 35(4):932--963, 2006.

\bibitem{PatrascuT07}
Mihai P{\v a}tra{\c s}cu and Mikkel Thorup.
\newblock Planning for fast connectivity updates.
\newblock In {\em Proceedings of the 48th Annual IEEE Symposium on Foundations
  of Computer Science (FOCS)}, pages 263--271, 2007.

\bibitem{PT11a}
Mihai P{\v a}tra{\c s}cu and Mikkel Thorup.
\newblock Don't rush into a union: take time to find your roots.
\newblock In {\em Proceedings of the 43rd Annual {ACM} Symposium on Theory of
  Computing {(STOC)}}, pages 559--568, 2011.

\bibitem{Patrascu10}
Mihai P\v{a}tra\c{s}cu.
\newblock Towards polynomial lower bounds for dynamic problems.
\newblock In {\em Proceedings of the 42nd Annual ACM Symposium on Theory of
  Computing (STOC)}, pages 603--610, 2010.

\bibitem{Sankowski07}
Piotr Sankowski.
\newblock Faster dynamic matchings and vertex connectivity.
\newblock In {\em Proceedings of the 28th Annual ACM-SIAM Symposium on Discrete
  Algorithms (SODA)}, pages 118--126, 2007.

\bibitem{SipserS96}
Michael Sipser and Daniel~A. Spielman.
\newblock Expander codes.
\newblock {\em IEEE Transactions on Information Theory}, 42(6):1710--1722,
  1996.

\bibitem{Williams14}
Ryan Williams.
\newblock Faster all-pairs shortest paths via circuit complexity.
\newblock In {\em Proceedings of the 46th Annual ACM Symposium on Theory of
  Computing (STOC)}, pages 664--673, 2014.

\bibitem{WW13}
Virginia~Vassilevska Williams and Ryan Williams.
\newblock Finding, minimizing, and counting weighted subgraphs.
\newblock {\em {SIAM} J. Comput.}, 42(3):831--854, 2013.

\bibitem{Yu16}
Huacheng Yu.
\newblock Cell-probe lower bounds for dynamic problems via a new communication
  model.
\newblock In {\em Proceedings 48th Annual {ACM} Symposium on Theory of
  Computing ({STOC})}, pages 362--374, 2016.

\end{thebibliography}

\end{document}